\documentclass[11pt]{article}
\usepackage[margin=1in]{geometry}
\usepackage{complexity}
\usepackage{mathtools,amsmath,amssymb}
\usepackage{xspace}
\usepackage{comment}
\RequirePackage[colorlinks=true]{hyperref}
\hypersetup{
  linkcolor=[rgb]{0,0,0.5},
  citecolor=[rgb]{0, 0.5, 0},
  urlcolor=[rgb]{0.5, 0, 0}
}
\usepackage{dsfont}

\usepackage{amsthm}
\usepackage{thmtools,thm-restate}

\numberwithin{equation}{section}
\declaretheoremstyle[bodyfont=\it,qed=\qedsymbol]{noproofstyle}

\declaretheorem[numberlike=equation]{observation}

\declaretheorem[name=Observation,numbered=no]{observation*}

\declaretheorem[numberlike=equation]{fact}

\declaretheorem[numberlike=equation]{theorem}

\declaretheorem[name=Theorem,numbered=no]{theorem*}

\declaretheorem[numberlike=equation]{lemma}
\declaretheorem[name=Lemma,numbered=no]{lemma*}

\declaretheorem[numberlike=equation]{corollary}
\declaretheorem[name=Corollary,numbered=no]{corollary*}

\declaretheorem[name=Proposition,numbered=no]{proposition*}

\declaretheorem[numberlike=equation]{claim}
\declaretheorem[name=Claim,numbered=no]{claim*}

\declaretheorem[numberlike=equation]{conjecture}
\declaretheorem[name=Conjecture,numbered=no]{conjecture*}

\declaretheorem[numberlike=equation]{question}
\declaretheorem[name=Question,numbered=no]{question*}

\declaretheoremstyle[bodyfont=\it,qed=$\lozenge$]{defstyle} 

\declaretheorem[numberlike=equation,style=defstyle]{definition}
\declaretheorem[unnumbered,name=Definition,style=defstyle]{definition*}

\declaretheorem[unnumbered,name=Example,style=defstyle]{example*}

\declaretheorem[unnumbered,name=Notation=defstyle]{notation*}

\declaretheorem[unnumbered,name=Construction,style=defstyle]{construction*}

\declaretheorem[numberlike=equation,style=defstyle]{remark}
\declaretheorem[unnumbered,name=Remark,style=defstyle]{remark*}


\usepackage{nth}
\usepackage{intcalc}
\usepackage{etoolbox}
\usepackage{xstring}
\hypersetup{
}

\usepackage{ifpdf}
\ifpdf
\else
\usepackage[quadpoints=false]{hypdvips}
\fi

\newcommand{\shortECCC}[2]{\texttt{\href{http://eccc.weizmann.ac.il/report/\ifnumcomp{#1}{>}{93}{19}{20}#1/#2/}{eccc:TR#1-#2}}}

\newcommand{\parseECCC}[1]{
\StrSubstitute{#1}{TR}{}[\tmpstring]%
\IfSubStr{\tmpstring}{/}{ 
\StrBefore{\tmpstring}{/}[\ecccyear]%
\StrBehind{\tmpstring}{/}[\ecccreport]%
}{
\StrBefore{\tmpstring}{-}[\ecccyear]%
\StrBehind{\tmpstring}{-}[\ecccreport]%
}%
\shortECCC{\ecccyear}{\ecccreport}}

	\renewcommand{\vec}[1]{{\mathbf{#1}}}

	\makeatletter
	\newcommand{\va}{{\vec{a}}\@ifnextchar{^}{\!\:}{}}
	\newcommand{\vb}{{\vec{b}}\@ifnextchar{^}{\!\:}{}}
	\newcommand{\vc}{{\vec{c}}\@ifnextchar{^}{\!\:}{}}
	\newcommand{\vd}{{\vec{d}}\@ifnextchar{^}{\!\:}{}}
	\newcommand{\ve}{{\vec{e}}\@ifnextchar{^}{\!\:}{}}
	\newcommand{\vy}{{\vec{y}}\@ifnextchar{^}{\!\:}{}}
	\newcommand{\vs}{{\vec{s}}\@ifnextchar{^}{\!\:}{}}
	\newcommand{\vt}{{\vec{t}}\@ifnextchar{^}{\!\:}{}}
	\newcommand{\vx}{{\vec{x}}\@ifnextchar{^}{}{}}		
	\newcommand{\vz}{{\vec{z}}\@ifnextchar{^}{\!\:}{}}
	\newcommand{\vu}{{\vec{u}}\@ifnextchar{^}{\!\:}{}}	
	
	\newcommand{\vY}{{\vec{Y}}\@ifnextchar{^}{\!\:}{}}
	\newcommand{\vX}{{\vec{X}}\@ifnextchar{^}{}{}}		
	\newcommand{\vZ}{{\vec{Z}}\@ifnextchar{^}{\!\:}{}}
	\newcommand{\vG}{{\vec{G}}\@ifnextchar{^}{\!\:}{}}
	
	\makeatother

\newcommand{\F}{\mathbb{C}}
\newcommand{\N}{\mathbb{N}}

 \newcommand{\margcomm}[2]{\marginpar{\tiny{\textbf{#1 }\textit{#2}}}}
\newcommand{\nutan}[1]{\margcomm{}{}}
\newcommand{\srikanth}[1]{\margcomm{}{}}
\newcommand{\chandra}[1]{\margcomm{}{}}
\newcommand{\mrinal}[1]{\margcomm{}{}}
\newcommand{\prasad}[1]{\margcomm{}{}}
\newcommand{\adrian}[1]{\margcomm{}{}} 


\newcommand{\MKnote}[1]{\textcolor{blue}{{\sf Mrinal :} {#1}}}

\newcommand{\set}[1]{\left\{#1\right\}}

\newcommand{\abs}[1]{\left|#1\right|}

\DeclareMathOperator{\rank}{rank}

\def\epsilon{\varepsilon}

\newcommand*\samethanks[1][\value{footnote}]{\footnotemark[#1]}
\date{}

\title{Schur Polynomials do not have small formulas if the Determinant doesn't! }
\author{
Prasad Chaugule\thanks{Dept. of Computer Science \& Engineering, IIT Bombay. Email: \texttt{prasad|mrinal|nutan|ckm@cse.iitb.ac.in}. A part of the work of the second author was done during a postdoctoral stay at the University of Toronto. The third author is  supported by MATRICS grant MTR/2017/000909 awarded by SERB, Government
of India.}
\and 
Mrinal Kumar \samethanks
\and Nutan Limaye\samethanks
\and Chandra Kanta Mohapatra \samethanks
\and Adrian She\thanks{Dept. of Computer Science, University of Toronto. Email : \texttt{ashe@cs.toronto.edu}.}
\and Srikanth Srinivasan\thanks{Dept. of Mathematics, IIT Bombay. Email :  \texttt{srikanth@math.iitb.ac.in}. Supported by MATRICS grant MTR/2017/000958 awarded by SERB, Government
of India.}
}
\begin{document}
\maketitle

\begin{abstract}
Schur Polynomials are families of symmetric polynomials that have been classically studied in Combinatorics and Algebra alike. They play a central role in the study of Symmetric functions, in  Representation theory~\cite{StanleyVol2}, in Schubert calculus~\cite{ledoux2010introductory} as well as in Enumerative combinatorics~\cite{gasharov1996incomparability, stanley1984number, StanleyVol2}. In recent years, they have also shown up in various incarnations in Computer Science, e.g, Quantum computation~\cite{hallgren2000normal, odonnell2015quantum} and Geometric complexity theory~\cite{ikenmeyer2017rectangular}. 

However, unlike some other families of symmetric polynomials like the Elementary Symmetric polynomials, the Power Symmetric polynomials and the Complete Homogeneous Symmetric polynomials, the computational complexity of syntactically computing Schur polynomials has not been studied much. In particular, it is not known whether  Schur polynomials can be computed efficiently by algebraic formulas. In this work, we address this  question, and show that unless \emph{every} polynomial with a small algebraic branching program (ABP) has a small algebraic formula, there are Schur polynomials that cannot be computed by algebraic formula of polynomial size. In other words, unless the algebraic complexity class $\mathrm{VBP}$ is equal to the complexity class $\mathrm{VF}$, there exist Schur polynomials which do not have polynomial size  algebraic formulas. 

As a consequence of our proof, we also show  that computing the determinant of certain \emph{generalized} Vandermonde matrices is essentially as hard as computing the general symbolic determinant. To the best of our knowledge, these are one of the first hardness results of this kind for families of polynomials which are not \emph{multilinear}. A key ingredient of our proof is the study of  composition of \emph{well behaved} algebraically independent polynomials with a homogeneous polynomial, and  might be of independent interest. 
\end{abstract}

\thispagestyle{empty}
\newpage
\pagenumbering{arabic}

\section{Introduction}
In this paper, we explore a theme at the intersection of \emph{Algebraic Complexity Theory}, which studies the computational complexity of computing multivariate polynomials using algebraic operations, and \emph{Algebraic Combinatorics}, which studies, among other things, algebraic identities among polynomials associated to various combinatorial objects. 

Specifically, the questions we study are related to the computational complexity of \emph{Symmetric Polynomials}, which are polynomials in $\mathbb{C}[x_1,\ldots,x_n]$ that are invariant under permutations of the underlying variable set $x_1,\ldots,x_n$.\footnote{In Combinatorics literature, these are more commonly known as \emph{Symmetric Functions}. One can also consider symmetric functions over fields other than the complex numbers, but throughout this paper, we will stick to $\mathbb{C}.$} Examples of such polynomials include
\begin{itemize}
\item The \emph{Elementary Symmetric polynomials} $e_0,e_1,\ldots, e_n$ where $e_d = \sum_{|S| = d}\prod_{j\in S} x_j$ is the sum of all multilinear monomials of degree exactly $d$,
\item The \emph{Complete Homogeneous Symmetric polynomials} $h_0,h_1,\ldots$ where $h_d$ is the sum of all monomials (multilinear or otherwise) of degree exactly $d$, and
\item The \emph{Power Symmetric polynomials} $p_0,p_1,\ldots$ where $p_d = \sum_{i = 1}^n x_i^d.$
\end{itemize}
It is a standard fact that the above three families generate \emph{all} symmetric polynomials in a well-defined sense. More precisely, the Fundamental Theorem of Symmetric Polynomials states that every  symmetric polynomial $f$ can be written \emph{uniquely} as a polynomial in $\{e_1,\ldots,e_n\},$ and similarly in $\{h_1,\ldots,h_n\}$ and $\{p_1,\ldots,p_n\},$ each of which is thus an algebraically independent set of polynomials. In particular, for $\lambda = (\lambda_1,\lambda_2,\ldots,\lambda_\ell)$ a non-increasing sequence of positive integers, if we define $e_\lambda = \prod_{i\in [\ell]}e_{\lambda_i},$ then the $\{e_\lambda\}_{\lambda}$ are linearly independent, and moreover the set $E_d:=\{e_\lambda\ |\ \sum_i \lambda_i = d\}$ forms a basis for the vector space $\Lambda_d$ of homogeneous symmetric polynomials of degree $d$; the same is true also of $h_\lambda$ and $p_\lambda$ (defined analogously), yielding bases $H_d$ and $P_d$ respectively for $\Lambda_d$.

\paragraph*{Symmetric Polynomials in Mathematics.} The study of Symmetric Polynomials is a classical topic in Mathematics, with close connections to combinatorics and representation theory, among other fields (see, e.g., \cite{MACD,sagan}). In representation theory , it is known that the entries of the change-of-basis matrices between different bases for the space $\Lambda_d$ yield important numerical invariants of various representations of the symmetric group $S_d$. In algebraic and enumerative combinatorics, the study of symmetric polynomials leads to formulas and generating functions for interesting combinatorial quantities such as \emph{Plane Partitions} (see, e.g.~\cite[Chapter 7]{StanleyVol2}). These studies have in turn given rise to a dizzying array of algebraic identities and generating functions for various families of symmetric polynomials. Some of these, as we note below, have already had consequences for computational complexity.

\paragraph*{Algebraic Complexity of Symmetric Polynomials.} Symmetric polynomials have also been intensively investigated by researchers in Algebraic complexity~\cite{nw1997,sw2001,Shpilka,HY,FLMS17}, with several interesting consequences. The famous `Ben-Or trick' in algebraic complexity (also known simply as `interpolation') was discovered by Ben-Or~\cite{sw2001} in the context of using a standard generating function for the Elementary Symmetric Polynomials to obtain small depth-$3$ formulas for $e_1,\ldots,e_n.$\footnote{The generating function referred to is $\prod_{i=1}^n(t - x_i) = \sum_{j=0}^n (-1)^{n-j}e_{n-j} t^j.$} The same idea also yields small constant-depth formulas for the complete homogeneous symmetric polynomials. Symmetric polynomials have also been used to prove lower bounds for several interesting models of computation including homogeneous and inhomogeneous $\Sigma\Pi\Sigma$ formulas~\cite{nw1997,sw2001, Shpilka}, homogeneous multilinear formulas~\cite{HY} and homogenous $\Sigma\Pi\Sigma\Pi$ formulas~\cite{FLMS17}. Further, via reductions, the elementary and power symmetric polynomials have been used to define restricted models of algebraic computation known as the symmetric circuit model~\cite{Shpilka} and the $\Sigma\wedge\Sigma$ model (or Waring rank), which in turn have been significantly investigated (see, e.g.~\cite{sax08, L10, oeding16}). 
\paragraph*{Schur polynomials.} In this paper, we study the complexity of an important family of symmetric polynomials called the \emph{Schur Polynomials}, which we now define.

\begin{definition}
\label{def:Schur-Jacobi}
Let $\lambda = (\lambda_1,\ldots,\lambda_\ell)$ be a non-increasing sequence of positive integers with $\sum_i \lambda_i = d.$ We define the Schur polynomial $s_\lambda(x_1,\ldots,x_n)$ of degree $d$ as follows. 
\[s_\lambda = \frac{\mathrm{det}\left((x_i^{\lambda_j+n-j})_{i,j\in [n]}\right)}{\mathrm{det}\left((x_i^{n-j})_{i,j\in [n]}\right)}\]
(Here, if $\lambda = (\lambda_1,\ldots,\lambda_\ell),$ then we define $\lambda_j = 0$ for $j > \ell$.)
\end{definition}

The Schur polynomials are known to generalize the elementary symmetric polynomials as well as homogeneous symmetric polynomials. It is also known that the Schur polynomials of degree $d$ form a basis for $\Lambda_d$, which the vector space of all homogeneous symmetric polynomials of degree $d$. 

The Schur polynomials occupy a central place in the study of symmetric polynomials.  Their importance in representation theory can be seen for instance by the fact that, they describe the characters of representations of the general linear and symmetric groups. In particular, consider the general linear group $\text{GL}(V)$ over a complex vector space $V$ of dimension $n$. If $\rho$ is an irreducible representation of $\text{GL}(V)$ that is polynomial, meaning that the eigenvalues of $\rho(A)$ can be expressed as a polynomial in the eigenvalues of $A$, then the character $\text{Tr}(\rho(A))$ is a Schur polynomial $s_\lambda(x_1, \dots, x_n)$ evaluated at the eigenvalues $x_1, \dots, x_n$ of $A$, where $\lambda$ is a partition with at most $n$ non-zero parts. 
Furthermore, the entries of the change-of-basis matrix (for the vector space $\Lambda_d$) from the power symmetric polynomials to the Schur polynomials are exactly the values of the irreducible characters of the symmetric group $S_d$. Specifically, \emph{the Murnaghan-Nakayama rule} states that when expanded into the basis of power symmetric polynomials we have $$s_\lambda = \sum_{\mu = (\mu_1, \dots, \mu_l) \vdash n} \chi^\lambda(\mu) \prod_{i=1}^l p_{\mu_i}$$ where $\chi^\lambda(\mu)$ is an irreducible character of the symmetric group $S_n$ evaluated at a permutation of cycle type $\mu$. See \cite[Chapter 7]{StanleyVol2} for further details about these uses in representation theory.

Beyond representation theory, Schur polynomials are used in algebraic geometry in the Schubert calculus \cite{ledoux2010introductory}, which is used to calculate the number of ways in which Schubert subvarieties in the Grassmannian (the set of all $k$ dimensional linear subspaces in an $n$-dimensional space) may intersect. Schur polynomials are also used in enumerative combinatorics, as they provide generating functions for counting various combinatorial objects, including plane partitions, tableaux \cite[Chapter 7]{StanleyVol2}, reduced decompositions of permutations \cite{stanley1984number}, and graph colourings \cite{gasharov1996incomparability}. 

Being one of the most well-studied objects in the theory of symmetric functions, Schur polynomials appear in many different avatars in the literature. The following classical definition is also known to capture Schur polynomials. The definition uses combinatorial structures called Ferrers diagrams. A Ferrers diagram (or a Young diagram or simply a diagram) of shape $\lambda$, is a left-aligned two-dimensional array of boxes with the $i$th row containing $\lambda_i$ many boxes. (See, e.g. Stanley~\cite{StanleyVol2}, for more about Ferrers diagrams.)

\begin{definition}
\label{def:schurpolys}
Consider a Ferrers diagram of shape $\lambda$. For any non-decreasing sequence $\mu = (\mu_1,\ldots,\mu_m)$ with $\sum_j \mu_j = d$, we define the \emph{Kostka number} $K_{\lambda\mu}$ to be the number of ways of filling the boxes of the Ferrers diagram with numbers from $1,\ldots,m$ such that each row is non-decreasing, each column is strictly increasing, and the number of $i$'s equals $\mu_i$ for each $i\in [m]$.

The Schur polynomial $s_{\lambda}(x_1,\ldots,x_n)\in \Lambda_d$ is defined so that the coefficient of $x_1^{\mu_1}\cdots x_m^{\mu_m}$ is the Kostka number $K_{\lambda\mu}$ (the coefficients of other monomials are defined by symmetry). 

In particular, $s_\lambda = 0$ if $n < \ell.$ So we assume that $n\geq \ell$ throughout.
\end{definition}

From this definition it is easy to see that Schur polynomials generalize both elementary symmetric polynomials (when $\ell = d$ and $\lambda_1 = \lambda_2\cdots = \lambda_d = 1$ in the definition above) and homogeneous symmetric polynomials  (when $\ell = 1$ and $\lambda_1 = d$ in the definition above).

The Kostka numbers used in the definition above have been investigated extensively both from combinatorial and computational perspectives. (See for instance~\cite{StanleyVol2,narayanan2006complexity}.)



\paragraph*{Algebraic Complexity of Schur Polynomials.} In this work we focus on the algebraic complexity of Schur polynomials. As stated in \autoref{def:Schur-Jacobi}, which is also known as the \emph{bialternant formula} of Jacobi, the Schur polynomial $s_\lambda$ can be expressed as the ratio of two determinants. In particular, this implies that the Schur polynomials have  algebraic circuits of size $\poly(n,d)$. In fact, it also implies that these polynomials belong to the smaller algebraic complexity class $\mathrm{VBP}$,\footnote{These is class of polynomial families which can be efficiently computed by algebraic branching programs.} for which the Determinant is the complete polynomial. 

However, this upper bound is quite a bit weaker than what is known for other well-studied symmetric polynomials mentioned above, all of which have \emph{constant-depth} formulas of polynomial size. We consider the question of whether the Schur polynomials have constant-depth formulas of polynomial size or even general (arbitrary depth) formulas of polynomial size. 

Our main result is that under reasonable complexity assumptions, the answer to the above question is negative for many different $\lambda.$ (Note that since the elementary and complete homogeneous symmetric polynomials are particular examples of Schur polynomials, there are \emph{some} Schur polynomials that have formulas of polynomial size.)

\begin{theorem}[Main Theorem]
\label{thm:main-intro}
Assume that $\lambda= (\lambda_1,\ldots,\lambda_\ell)$ is such that $\lambda_i \geq \lambda_{i+1}+(\ell-1)$ for all $i \in [\ell-1]$, also $\lambda_{\ell} \geq \ell$ and let $d = \sum_i \lambda_i$. Then, for $n \geq \lambda_1 + \ell$, if $s_\lambda(x_1,\ldots,x_n)$ has an algebraic formula of size $s$ and depth $\Delta$, then the $\ell\times \ell$ determinant $(\mathrm{det}_\ell )$ has an algebraic formula of size $\poly(s)$ and depth $\Delta + O(1).$
\end{theorem}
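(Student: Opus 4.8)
The plan is to show that the $\ell\times\ell$ determinant reduces, via an efficient formula-to-formula reduction, to the Schur polynomial $s_\lambda$ on sufficiently many variables. The natural strategy is to read off the structure of $s_\lambda$ from the combinatorial (Kostka) definition, \autoref{def:schurpolys}, and to choose a cleverly structured substitution of the $x_i$'s so that a designated coefficient of $s_\lambda$, viewed as a polynomial in some auxiliary variables, equals $\det_\ell$. Concretely, since $s_\lambda$ is homogeneous of degree $d$ and forms a basis element of $\Lambda_d$, we can try to extract $\det_\ell$ by specializing the variables $x_1,\ldots,x_n$ to monomials in new variables $y_1,\ldots,y_\ell$ (or to a mix of new variables and field constants) in such a way that the coefficient-extraction picks out exactly the alternating sum $\sum_{\sigma\in S_\ell}\mathrm{sgn}(\sigma)\prod_i y_{i,\sigma(i)}$. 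The gap condition $\lambda_i \ge \lambda_{i+1} + (\ell-1)$ and the lower bound $\lambda_\ell \ge \ell$ are precisely what should make such an extraction ``clean'': they guarantee that the exponents $\lambda_j + n - j$ appearing in the bialternant formula of \autoref{def:Schur-Jacobi} are spread out enough that a monomial substitution does not create unwanted collisions, so that the only surviving terms in the relevant coefficient come from genuine permutation patterns.

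The key steps, in order, would be: (1) Rewrite $s_\lambda = \det\big((x_i^{\lambda_j+n-j})_{i,j}\big) / \det\big((x_i^{n-j})_{i,j}\big)$ and observe that the numerator is an alternating polynomial; the denominator is the Vandermonde determinant $\prod_{i<j}(x_i - x_j)$. (2) Restrict attention to $\ell$ of the $n$ variables being ``active'' (set the remaining $n-\ell$ to distinct constants, or to a geometric-type progression, chosen so the Vandermonde factor becomes a nonzero constant times a Vandermonde in the active variables, and so the numerator factors compatibly). (3) In the active $\ell$ variables, substitute $x_i \mapsto$ a carefully chosen monomial (e.g. $x_i \mapsto z_i \cdot c^{?}$ or a product over new variables $w_{i,k}$) so that reading off a specific monomial coefficient of the resulting polynomial in the new variables isolates $\det_\ell$ in genuinely distinct variables $w_{i,j}$. (4) Argue that coefficient extraction is formula-efficient: extracting the coefficient of a fixed monomial from a size-$s$, depth-$\Delta$ formula costs only $\poly(s)$ size and $\Delta + O(1)$ depth (standard interpolation / partial-derivative-free coefficient extraction, the ``Ben-Or trick'' style argument alluded to in the introduction), and likewise dividing out the constant Vandermonde factor is free. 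Combining these yields a formula for $\det_\ell$ of size $\poly(s)$ and depth $\Delta + O(1)$.

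The main obstacle I expect is Step (3): designing the substitution so that (a) the targeted coefficient really is $\det_\ell$ in \emph{fresh, independent} variables rather than some specialization or minor of it, (b) no spurious monomials from other tableaux (other Kostka contributions, or other permutations $\sigma$ in the bialternant expansion) land on the same target monomial, and (c) the substitution itself is computable by a small formula so the composition stays within $\poly(s)$ size. This is exactly where the hypotheses $\lambda_i - \lambda_{i+1} \ge \ell-1$ and $\lambda_\ell \ge \ell$ must be used quantitatively: they should ensure that the ``staircase'' of exponents $\lambda_j + n - j$ has enough room between consecutive rows that a monomial substitution with exponents in a range of size roughly $\ell$ cannot cause two different column-assignments to coincide. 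I would also need the condition $n \ge \lambda_1 + \ell$ to have enough ``buffer'' variables to absorb and normalize the Vandermonde denominator. A secondary subtlety is that the paper's abstract mentions a route via ``composition of well-behaved algebraically independent polynomials with a homogeneous polynomial'' — so an alternative, perhaps cleaner, implementation of Step (3) is to write $s_\lambda$ as such a composition, where the homogeneous ``outer'' polynomial is essentially $\det_\ell$ and the algebraically independent ``inner'' polynomials are power sums or elementary symmetric polynomials in the active variables, and then invoke a general lemma that formula complexity of the outer polynomial is controlled by that of the composition when the inner polynomials are algebraically independent and have small formulas themselves.
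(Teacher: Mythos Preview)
Your primary plan (Steps~1--4) has a real gap at Step~(3). If each active $x_i$ is sent to a \emph{monomial} in new variables $w_{i,k}$, then $s_\lambda$ becomes a polynomial in the $w$'s with non-negative integer coefficients (coming from the Kostka numbers), so the coefficient of any fixed monomial is a non-negative scalar, never $\det_\ell$ in $\ell^2$ fresh variables. More generally, what you describe is a projection (substitute variables or constants for the $x_i$'s) followed by coefficient extraction, and this runs straight into the multilinearity barrier the introduction already warns about: $\det_\ell$ is multilinear while $s_\lambda$ for these $\lambda$'s is non-multilinear in every variable, so $\det_\ell$ is not a projection of $s_\lambda$, and a positive linear combination of projections (which is what monomial substitution plus interpolation gives) still has non-negative coefficients. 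Even if you only try to isolate an $\ell\times\ell$ minor of the bialternant numerator by specializing $n-\ell$ variables, you land on a generalized Vandermonde determinant in the $\ell$ active $x_i$'s --- a polynomial in $\ell$ variables, not $\ell^2$ --- and no monomial substitution will unfold it into the full symbolic $\det_\ell$.

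The throwaway alternative in your last paragraph is the approach that actually works, and it is what the paper does; but several of the pieces you gesture at are not quite right. (i)~The starting identity is Jacobi--Trudi, $s_\lambda=\det\big(h_{\lambda_i-i+j}\big)_{i,j\in[\ell]}$, which already writes $s_\lambda$ as $\det_\ell$ composed with $\ell^2$ complete homogeneous symmetric polynomials; the hypotheses on $\lambda$ and $n$ are used precisely to force all $\ell^2$ indices $\lambda_i-i+j$ to be \emph{distinct} and to lie in $\{1,\ldots,n-1\}$. (ii)~``Algebraically independent with small formulas'' is not the right hypothesis on the inner polynomials; the paper needs the sharper condition that they share a common zero $\va$ at which their Jacobian already has full rank, and verifies this for $h_1,\ldots,h_{n-1}$ by taking $\va$ to be the $n$th roots of unity. (iii)~The mechanism of the key lemma is not Ben-Or-style interpolation but a first-order Taylor expansion at $\va$: writing $h_k(\va+\vx)=u_k(\vx)+(\text{terms of degree}\ge 2)$ with the $u_k$ linearly independent linear forms, the degree-$\ell$ homogeneous component of $s_\lambda(\va+\vx)$ is exactly $\det_\ell$ evaluated at those linear forms, and one recovers $\det_\ell$ by an invertible linear change of variables. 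Homogeneous-component extraction and the linear change each cost only $\poly(s)$ size and $O(1)$ extra depth.
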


For suitable choices of $\ell,d,n$ above, we can ensure that these parameters are all polynomially related. The theorem then implies that the Schur polynomials do not have algebraic formulas of polynomial size unless the entire complexity class $\mathrm{VBP}$ collapses to the complexity class $\mathrm{VF}$ which consists of polynomials with small formulas. Moreover, the Schur polynomials do not have  \emph{constant-depth} formulas of subexponential size unless the determinant does, a result that would greatly improve the state-of-the-art in this direction~\cite{GKKS16}. 

The above theorem and its proof have several interesting aspects that we now elaborate on. 

\paragraph*{Newton iteration and formula complexity.} \autoref{thm:main-intro} is motivated in part by a recent result of Bl\"{a}ser and Jindal~\cite{blser_et_al:LIPIcs:2018:10140} who prove the following interesting result about symmetric polynomials. As mentioned earlier, it is known that any symmetric polynomial $f_{sym}\in \mathbb{C}[x_1,\ldots,x_n]$ can be written \emph{uniquely} as a polynomial in (say) the elementary symmetric polynomials $e_1,\ldots,e_n$. I.e., there exists a unique $f_{E}\in \mathbb{C}[x_1,\ldots,x_n]$ such that  \chandra{We are not so consistent with the notation of the symmetric polynomial. We should fix it to either $f$ or $f_{sym}$.  }
\[
f_{sym}(x_1,\ldots,x_n) = f_E(e_1,\ldots,e_n).
\] 
Motivated by a question of Regan and Lipton~ \cite{Lipton2009Reganblog}, Bl\"{a}ser and Jindal studied the computational complexity of $f_{sym}$ vis-a-vis that of $f_E.$ It is clear that if $f_E$ has  algebraic circuits of polynomial size (resp. formulas) then so does $f_{sym}$, since the elementary symmetric polynomials have algebraic formulas of polynomial size. Interestingly, Bl\"{a}ser and Jindal showed a converse to this statement: they showed that if $f_{sym}$ has small algebraic circuits, then so does $f_E.$~\footnote{Bl\"{a}ser and Jindal work throughout with the elementary symmetric polynomials. However, using algebraic identities that link various symmetric polynomials with each other, we observe in this paper that their result also holds for the complete homogeneous and power symmetric polynomials.}

At first sight, this looks highly relevant to our theorem, since by the classical \emph{Jacobi-Trudi} identity (see, e.g.~\autoref{thm:jacobi-trudi} in this paper or \cite[Theorem 7.16.1]{StanleyVol2}), when $f_{sym}$ is a Schur polynomial of the type assumed in \autoref{thm:main-intro}, then $f_{E}$ is in fact the determinant (on a subset of its variables). We could hope to use the theorem of Bl\"{a}ser and Jindal to prove that if the Schur polynomial has a small formula, then so does the determinant. However, this doesn't quite work, since the proof of~\cite{blser_et_al:LIPIcs:2018:10140} only yields small \emph{circuits} for the polynomial $f_E$, even if we assume that the polynomial $f_{sym}$ has small \emph{formulas.}

We briefly outline the reason for this, noting that this hurdle occurs quite often in trying to adapt results in algebraic circuit complexity to algebraic formulas. As mentioned above, the polynomials $e_1,\ldots,e_n$ are algebraically independent. A standard proof of this (see, e.g.,~\cite{Shpilka}) goes via showing that the map
\[
\overline{e}:\mathbb{C}^n\rightarrow \mathbb{C}^n,\text{ defined by }\overline{a} = (a_1,\ldots,a_n) \mapsto (e_1(\overline{a}),\ldots,e_n(\overline{a}))
\]
is surjective. Hence, for each $\overline{b}\in \mathbb{C}^n,$ there exists an $\overline{a}\in \mathbb{C}^n$ such that $\overline{e}(\overline{a}) = \overline{b}.$ The reason this is relevant to the result of~\cite{blser_et_al:LIPIcs:2018:10140} is that if we have an efficient algorithm for `inverting' $\overline{e}$ in this way and we additionally have an efficient algorithm for computing $f_{sym}$, then we immediately obtain an efficient algorithm for computing $f_E$ on any given input $\overline{b}\in \mathbb{C}^n$ by first inverting the map $\overline{e}$ to obtain an $\overline{a}$ as above, and then applying the algorithm for $f_{sym}$ to obtain $f(\overline{a}) = f_E(\overline{b}).$ The main technical result in Bl\"{a}ser and Jindal's work is to show how to invert the map $\overline{e}$ as above using an algebraic circuit. The inversion is done by carefully applying a standard algebraic version of \emph{Newton iteration}, which can be performed by an efficient algebraic circuit. Having done this, we plug the output of this circuit into the circuit for $f_{sym}$ to obtain the circuit for $f_E$.

The reason the above proof does not work in the setting of algebraic formulas is the use of Newton iteration, which is not known to be doable  with small formulas (or even within the seemingly larger class $\mathrm{VBP}$). Indeed, this is the main bottleneck in translating several results in algebraic complexity on polynomial factorization~\cite{k89, DSS17, cks_general} and hardness-randomness tradeoffs~\cite{ki03, DSY09, cks19} that are known in the context of algebraic circuits to the setting of algebraic formulas. 

In the proof of the main theorem, we show how to get around the use of Newton iteration in this setting and use it to prove a (slightly weaker) version of the result of Bl\"{a}ser and Jindal for algebraic formulas. We hope that the ideas we use here can be adapted and extended to circumvent the use of Newton iteration in some of the other settings mentioned above as well. Our main technical lemma is the following.

\begin{lemma}[Main Technical Lemma (informal)]\label{lem:key-intro}
Let $g_1,\ldots,g_n\in \mathbb{C}[x_1,\ldots,x_n]$ be ``well-behaved'' algebraically independent polynomials. Then, for any homogeneous polynomial $\tilde{f}$, if $f = \tilde{f}(g_1,\ldots,g_n)$ has a formula of size $s$ and depth $\Delta$, the polynomial $\tilde{f}$ has a formula of size $\poly(s)$ and depth $\Delta + O(1).$
\end{lemma}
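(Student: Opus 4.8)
The plan is to extract a formula for $\tilde{f}$ from one for $f = \tilde f(g_1,\dots,g_n)$ by substituting, for each $g_i$, a polynomial in the variables $z_1,\dots,z_n$ that \emph{inverts} the map $\vec g = (g_1,\dots,g_n)$ up to the needed order. Concretely, since homogeneity of $\tilde f$ lets us recover $\tilde f$ from its behaviour near a point, the first step is to pick a base point $\vec a \in \F^n$ and, using the ``well-behaved'' hypothesis on the $g_i$, produce polynomials $\phi_1(\vz),\dots,\phi_n(\vz)$ of polynomially bounded degree with $\phi_i(\vec b) = $ (a preimage coordinate) such that $g_i(\phi_1(\vz),\dots,\phi_n(\vz)) = z_i$ \emph{as polynomials truncated to degree $\deg \tilde f$}. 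The key point, and the reason this avoids Newton iteration, is that we do not need a genuine (rational, circuit-computed) inverse of $\vec g$ on all of $\F^n$; we only need a \emph{low-degree polynomial map that agrees with the inverse to finite order}, because $\tilde f$ is homogeneous of known degree $D$, so $\tilde f(\vz)$ equals the degree-$D$ homogeneous part of $\tilde f(g_1(\phi(\vz)),\dots)$ composed appropriately — a truncated (hence formula-friendly) computation rather than an iterative one. Building such a truncated inverse should follow from the formal-inverse-function theorem: ``well-behaved'' will be chosen precisely so that the Jacobian of $\vec g$ at $\vec a$ is invertible, and then the truncated inverse is a fixed polynomial map whose coefficients are universal polynomials in the (constantly many, or polynomially many) Taylor coefficients of the $g_i$, none of which costs us in formula size since they are constants.

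Second, once $\phi(\vz)$ is in hand, I would form the polynomial $h(\vz) := f(\phi_1(\vz),\dots,\phi_n(\vz))$ by substituting each $\phi_i$ into the given size-$s$, depth-$\Delta$ formula for $f$. Since each $\phi_i$ is an explicitly given polynomial of $\poly$ degree in $n$ variables, it has a formula of size $\poly(s)$ (indeed $\poly(n,D)$) and depth $O(1)$ — here one uses that a single low-degree polynomial, or more carefully a low-degree polynomial map arising from the truncated inverse, admits small-depth formulas — so $h$ has a formula of size $\poly(s)$ and depth $\Delta + O(1)$. By construction $h(\vz)$ agrees with $\tilde f(\vz)$ up to terms of degree exceeding $D = \deg \tilde f$ (and possibly up to an additive constant coming from the base point $\vec a$), so the final step is to \emph{homogenize}: extract the degree-$D$ homogeneous component of $h$. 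Extracting a fixed-degree homogeneous part blows up formula size only polynomially (introduce a fresh scaling variable $t$, replace $z_j \mapsto t z_j$, and interpolate on $O(D)$ values of $t$) and increases depth by $O(1)$, yielding the claimed formula for $\tilde f$.

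The main obstacle I anticipate is making the first step — the construction of the truncated polynomial inverse $\phi$ and the verification that $g_i \circ \phi \equiv z_i$ modulo high-degree terms — work \emph{with the degrees and the base point controlled}, and pinning down exactly what ``well-behaved'' must mean for this to go through. The naive formal inverse is a power series, not a polynomial, so one must truncate it at degree roughly $D$ and argue that the truncation error only contributes to degrees $> D$, which in turn constrains how the composition $\tilde f \circ \vec g$ interacts with the filtration by degree; if the $g_i$ have mixed degrees this bookkeeping is delicate, and this is presumably where hypotheses like the $g_i$ having a triangular or graded structure (as the elementary symmetric polynomials do, with $\deg e_i = i$) enter. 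A secondary subtlety is ensuring the coefficients of $\phi$ — which are rational functions of the Taylor data of the $g_i$ at $\vec a$ — are actually \emph{defined} (nonvanishing denominators), i.e.\ that a good base point $\vec a$ exists; this is where surjectivity/Jacobian-nondegeneracy of $\vec g$, the substantive content of ``well-behaved,'' is used. Everything after the inverse is constructed — substitution into the formula, homogenization — is routine and costs only $\poly(s)$ in size and $O(1)$ in depth.
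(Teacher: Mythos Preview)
Your high-level plan---substitute an ``inverse'' of the map $\vec g$ into the formula for $f$, then extract the degree-$D$ homogeneous part---is the right shape, but there is a real gap in the first step, and it is precisely the step you flagged as the main obstacle.

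You assert that each $\phi_i$, being ``an explicitly given polynomial of $\poly$ degree in $n$ variables,'' has a formula of size $\poly(n,D)$ and depth $O(1)$. This is false in general: a degree-$D$ polynomial in $n$ variables can have $\binom{n+D}{D}$ monomials, and there is no reason the truncated formal inverse should be sparser or otherwise easy. Computing the truncated inverse by iteration \emph{is} Newton iteration (or something just as sequential), which is exactly what produces circuits rather than formulas in the Bl\"aser--Jindal argument. So ``truncated, hence formula-friendly'' is the mistaken leap.

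The fix---and this is what the paper does---is to observe that you only need the \emph{linear} truncation of the inverse, provided the base point $\va$ is chosen so that $g_i(\va)=0$ for all $i$. With that extra condition (together with the Jacobian being full rank at $\va$), take $\phi(\vz)=\va+J^{-1}\vz$ where $J$ is the Jacobian at $\va$; then $g_i(\phi(\vz))=z_i+(\text{terms of degree }\ge 2)$, and since $\tilde f$ is homogeneous of degree $D$, the degree-$D$ part of $f(\phi(\vz))=\tilde f(z_1+\cdots,\ldots,z_n+\cdots)$ is exactly $\tilde f(\vz)$. Crucially, $\phi$ is now an \emph{affine} map, so substituting it costs only a factor of $O(n)$ in size and $O(1)$ in depth. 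You treated the constants $g_i(\va)$ as a harmless ``additive constant coming from the base point,'' but they are not harmless: if $c_i:=g_i(\va)\neq 0$, then with only the linear $\phi$ one gets $g_i\circ\phi=c_i+z_i+R_i$ with $\deg R_i\ge 2$, and cross-terms like (one factor of $c_i$) $\times$ (one degree-$2$ piece of $R_j$) $\times$ ($D-2$ linear pieces) land back in degree $D$ and pollute the extraction. Forcing $c_i=0$ is exactly the content of the paper's ``well-behaved'' hypothesis (a common zero at which the Jacobian has full rank), not any graded or triangular structure of the $g_i$.
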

For a formal definition of what we mean by ``well-behaved'' and for a formal statement of this lemma, we refer the reader to \autoref{def:prop s} and \autoref{lem:key} respectively. This lemma, and some of the ideas in its (very simple) proof might be of independent interest and may have other applications, e.g. in \autoref{sec:partial derivatives} we discuss an application of this lemma to some special cases of a question of Amir Shpilka on proving lower bounds on the partial derivative complexity of a product of algebraically independent polynomials.

\paragraph*{Generalized Vandermonde determinants.} The Vandermonde matrix $(x_i^{n-j})_{i,j\in [n]}$ and its determinant are ubiquitous in computation because of their relation to polynomial interpolation. More precisely, the problem of finding a degree-$(n-1)$ univariate polynomial that takes prescribed values at a specified set of $n$ distinct points involves solving a linear system of equations where the underlying matrix is precisely the Vandermonde matrix. It is, therefore, an important fact that the Vandermonde determinant is computationally much easier than the general determinant: in fact, it has the following standard depth-$2$ formula
\[
\mathrm{det}_n\left((x_i^{n-j})_{i,j\in [n]}\right) = \prod_{i,j\in [n]: i < j}(x_i - x_j).
\]
However, it is unclear whether such small expressions continue to exist if we allow the exponents of the variables to vary more generally. For integers $\mu_1 > \mu_2 > \cdots > \mu_n \geq 0$, consider the \emph{generalized Vandermonde matrix} $(x_i^{\mu_j})_{i,j\in [n]}$. Similar to the Vandermonde matrix, the determinant of this matrix is related to the problem of \emph{sparse polynomial interpolation}, where we are trying to interpolate a polynomial only involving  the monomials of degree $\mu_1,\ldots,\mu_n$ through the given points. 

Can we expect that computing any generalized Vandermonde determinant is much easier than computing the determinant itself? It follows from \autoref{thm:main-intro} and the bialternant formula from \autoref{def:Schur-Jacobi} above that the answer to this question is negative: for certain (polynomially large) exponents, the generalized Vandermonde determinant is not much easier than the determinant.

\srikanth{Changed $\vx$ to $x_1,\ldots,x_n$ for next paragraph to homogenize notation for introduction.}
\srikanth{Move this paragraph to earlier in the intro?}

\paragraph*{Discussion on Schur Polynomials and Generating functions}

In algebraic and enumerative combinatorics, we often study a family of related combinatorial  objects by considering a \emph{generating function} that combines them, in the hope that the generating function yields a nice closed-form expression which can further be used to estimate or otherwise understand these objects better. (See e.g.~\cite{Wilf,Flajolet} for much more about this.) For instance, we know that the generating functions for the elementary and complete homogeneous symmetric polynomials
\[
E(t) = \sum_{i=0}^n t^i e_i(\vx) \phantom{xxxxxx}\text{and}\phantom{xxxxxx} H(t) = \sum_{i= 0}^n t^i h_i(\vx)
\]
have small expressions given by
\[
\phantom{xx} E(t) = \prod_{i\in [n]}(1+tx_i) \phantom{xxxxxxx}\text{and}\phantom{xxxxxx} H(t) = \frac{1}{\prod_{i\in [n]} (1-tx_i)}.
\]
Furthermore, as such expressions are algebraic formulas using additions, multiplications and divisions, we can use these formulas along with division elimination and interpolation to construct small algebraic formulas for the $e_i$s and $h_j$s themselves.

Recall that both the elementary and complete homogeneous symmetric polynomials are special cases of Schur polynomials. It therefore is natural to ask if generating functions can be obtained for other simple sequences of Schur polynomials. Our results imply that the generating function for certain sequences of Schur polynomials do \emph{not} have small closed-form expressions with small formulas unless the determinant has small formulas. This seems like an interesting statement in algebraic combinatorics, conditioned upon a well-known conjecture in Computational Complexity theory.

For concreteness, here is one such `hard' generating function made up of Schur polynomials. For any $\ell \geq 0,$ let $\lambda_\ell = (\ell^2,\ell^2-\ell,\ell^2-2\ell,\cdots,\ell).$ Define \chandra{ In later part, we are showing that this $\lambda_{\ell}$ is easy but then how this is hard here? What am I missing? I think we need to add $n-\ell$ extra zeros where $n \geq \ell^2+\ell$ , otherwise $s_{\lambda_\ell}$ has a small formula, which leads to an easy generating function.}
\[
S(t) = \sum_{\ell \geq 0} t^\ell s_{\lambda_\ell}.
\]
Note that this is a finite sum for any fixed $n$ as $s_{\lambda_\ell} = 0$ if $\ell > n.$ In algebraic combinatorics, it is common to consider symmetric polynomials in an \emph{infinite} number of variables in which case the above is truly an infinite sum. A simple expression in the infinite case typically leads to a simple expression in the finite case by simply setting all variables other than $x_1,\ldots,x_n$ to $0$ in the expression.

\paragraph*{Proving hardness of non-multilinear polynomial families. }
The most natural and widely studied notion of completeness in algebraic setting is the notion of  projections. A polynomial $P \in \F[x_1,\ldots,x_n]$ is said to be a projection of a polynomial $Q\in \F[y_1,\ldots,y_m]$ if there is setting $\sigma$ of $y_1,\ldots,y_m$ to either field constants or to variables from the set $\{x_1,\ldots,x_n\}$, such that the polynomial $Q\left(\sigma(y_1), \sigma(y_2), \ldots, \sigma(y_m)\right)$ equals $P$. While this notion of reductions is very natural and intuitive and in particular, it is clear that \emph{easiness} of $Q$ (with respect to having a small algebraic circuit or formula, for instance) immediately implies the \emph{easiness} of $P$, there is an inherent difficulty in using this notion of reductions for proving the hardness of families of non-multilinear polynomials. To see this, observe that if $Q$ is non-multilinear in each of its variables, and $P$ is a multilinear polynomial which depends on at least one variable, then $P$ cannot be expressed as a projection of $Q$. In particular, this notion of reductions cannot be used to prove the hardness of non-multilinear Schur polynomials or the hardness of  generalized Vandemonde determinant,  assuming the hardness of determinant for algebraic formulas. We avoid this issue by showing that given a small formula for one of these candidate non-multilinear hard polynomials, we can come up with a \emph{small} formula for the Determinant, with only a mild increase in size. This step does more than projections, and there is a slight increase in the formula size in the process. This argument is more in the spirit of Turing reductions in standard Computational Complexity.

\paragraph*{Other related work.} 


The algebraic complexity of Schur polynomials has been studied in various restricted models of computation. Koev~\cite{K07}, Chan et al.~\cite{CDEKK19} and Fomin et al.~\cite{FGK16} consider the complexity of computing Schur polynomials in \emph{the subtraction-free} algebraic circuit model. An algebraic circuit is \emph{subtraction-free} if it uses only addition, multiplication and division operators.\footnote{For example, consider the polynomial $x^2-xy+y^2$. It is computed by the following subtraction-free circuit: $(x^3+y^3)/(x+y)$.} 
They showed that $s_\lambda(x_1,\ldots, x_n)$ has subtraction-free circuits of size polynomial in $n$ and $\lambda_1$. In Fomin et al.~\cite{FGK16}, the authors also proved polynomial bounds on the size of the subtraction-free circuits computing other interesting variants of Schur polynomials such as \emph{double Schur polynomials} and \emph{skew Schur polynomials}. All the algorithms presented in~\cite{K07,CDEKK19,FGK16} for computing Schur polynomials used division in non-trivial ways. 

Demmel et al.~\cite{DK06} and Fomin et al.~\cite{FGNS18} studied the monotone complexity of Schur polynomials. In the monotone setting, only addition and multiplication operators are used. (Both division and subtraction operators are not allowed.) 
They proved exponential upper bounds on the monotone complexity of Schur polynomials and conjectured an exponential lower bound. The exact complexity of Schur polynomials is not resolved in the monotone setting. 
However, Grigoriev et al.~\cite{GK16} proved an  exponential monotone circuit lower bound for a related family of symmetric polynomials, called the \emph{monomial symmetric polynomials}. 

\paragraph*{Organization of the paper. } The rest of the paper is organized as follows. We start with a brief discussion of some of the preliminaries in \autoref{sec:prelims} and a brief introduction to Symmetric polynomials and Schur polynomials in \autoref{sec:prelim-sym-poly}. We formally state and prove \autoref{lem:key-intro} in \autoref{sec:key lemma}, followed by its application to the proof of \autoref{thm:main-intro} in \autoref{sec:schur-formula-complexity}. We discuss further applications of some of these ideas to extending the result of Bl\"{a}ser and Jindal's~\cite{blser_et_al:LIPIcs:2018:10140} in \autoref{sec:BJ} and to the question of proving lower bounds on the partial derivative complexity of a product of algebraically independent polynomials in \autoref{sec:partial derivatives}.  We conclude with some open questions in \autoref{sec:open q}.

\section{Notations and Preliminaries}
\label{sec:prelims}
Throughout this paper, we assume that we are working over the field $\F$. It is not very hard to see that the results we present can be made to work for fields of characteristic zero or fields of sufficiently large characteristic. Boldface letters are used for tuples of variables e.g. $\vx$ for $(x_1, x_2, \ldots, x_n)$. For $\vb = (b_1, b_2, \ldots, b_n) \in \N^n$ and $\vx = (x_1, x_2, \ldots, x_n)$, we use  $\vx^{\vb}$ to denote  $\prod_{i = 1}^n x_i^{b_i}$. 
We use $|\vb|_1$ to denote $\sum_{i \in [n]} b_i$. 

\subsection{Models of computation}
We start by defining some of the standard models of algebraic computation that we work with in the rest of the paper. 
\begin{definition}[Algebraic circuit]	
	An algebraic circuit $C$ is a directed acyclic graph usually having two kinds of node $+,\times$ with a unique sink vertex having out-degree $0$ called the root. The source vertices having in-degree $0$ are labelled by either field constants or formal variables. Further edges entering into $+$ gate can have field constants on them allowing the plus gate to compute the $\mathbb{F}$-linear combination of its children. The root outputs the polynomial computed by the algebraic circuit. 
\end{definition}

\begin{definition}[Algebraic formula]
	If the underlying graph is a tree instead of a directed acyclic graph then the circuit is called a formula. Otherwise we can define formula to be a circuit having output fan-in at most $1$ for every node.
\end{definition}

\begin{definition}[Algebraic branching program]
	Algebraic branching program (ABP) is a layered graph having a unique source vertex (we call it $s$) and a unique sink vertex (we call it $t$). All the edges are from layer $i$ to $i+1$, and each edge is labelled by a linear polynomial. The weight of a path $p$ is the product of the labels over the edges in $p$. The polynomial that the ABP computes is the sum of all weighted paths from $s$ to $t$. 
\end{definition}

\subsection{Interpolation and Division elimination}
We now state two fairly standard facts about algebraic formula.  The first of these relates the formula complexity of a polynomial to the formula complexity of each of its homogeneous components.  
\begin{lemma}\label{lem:homogeneous components}
Let $P(\vx) \in \F[\vx]$ be a polynomial which can be computed by a formula of size at most $s$ and depth $\Delta$. Then, for every $d$, the homogeneous component of $P$ of degree $d$ can be computed by a formula of size at most $O(s^2)$ and depth $\Delta+O(1)$.  
\end{lemma}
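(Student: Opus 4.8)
The plan is to extract the degree-$d$ homogeneous component by interpolation on a scaling variable, keeping careful track of the formula size and depth. First I would introduce a fresh variable $t$ and consider the polynomial $P(t\vx) = P(tx_1,\ldots,tx_n)$, which is obtained from the given formula for $P$ simply by replacing each input $x_i$ with the product $t\cdot x_i$; this at most doubles the number of leaves and adds one to the depth (or $O(1)$, depending on how products of two inputs are charged), so $P(t\vx)$ has a formula of size $O(s)$ and depth $\Delta + O(1)$. Writing $P = \sum_{d=0}^{D} P_d$ as a sum of its homogeneous components, where $D \le s$ is an upper bound on $\deg P$ (a size-$s$ formula computes a polynomial of degree at most $s$, in fact at most $2^{\Delta}$, but $s$ suffices), we have the identity $P(t\vx) = \sum_{d=0}^{D} t^d P_d(\vx)$, viewing the right side as a univariate polynomial in $t$ of degree at most $D$ with coefficients in $\F[\vx]$.

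The next step is standard Lagrange interpolation in $t$. Fix $D+1$ distinct field elements $\alpha_0,\ldots,\alpha_D \in \F$ (here we use that $\F$ is infinite, or has at least $D+1$ elements, which is covered by the characteristic assumptions in the paper). Then there are field constants $c_{d,0},\ldots,c_{d,D}$ — the coefficients of $t^d$ in the Lagrange basis polynomials — such that
\[
P_d(\vx) = \sum_{k=0}^{D} c_{d,k}\, P(\alpha_k \vx).
\]
Each $P(\alpha_k\vx)$ is a projection of the formula for $P(t\vx)$ obtained by substituting the constant $\alpha_k$ for $t$, hence has a formula of size $O(s)$ and depth $\Delta + O(1)$. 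Summing $D+1 \le s+1$ of these with constant coefficients at a single new $+$-gate (allowed, since $+$-gates may take $\F$-linear combinations of their children) gives a formula for $P_d$ of size $O(s)\cdot O(s) = O(s^2)$ and depth $\Delta + O(1) + 1 = \Delta + O(1)$.

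The only genuinely delicate point — and the one I would be most careful about — is the bookkeeping, not any conceptual obstacle: one must make sure the degree bound $D$ used for interpolation is legitimately bounded by (a polynomial in) $s$, so that the number of evaluation points, and hence the size blow-up, is polynomial; and one must confirm that replacing $x_i$ by $\alpha_k x_i$ genuinely costs only a constant factor in size (it multiplies a leaf by a constant, which can be absorbed into the incoming edge label of the parent $+$-gate if the parent is a $+$-gate, or handled by a single extra $\times$-gate otherwise) and $O(1)$ in depth. Since a formula of size $s$ computes a polynomial of degree at most $s$, taking $D := s$ is safe, and everything goes through. There is no real obstacle here; the lemma is a routine combination of the ``multiply inputs by a parameter'' trick with Lagrange interpolation, and the proof is short.
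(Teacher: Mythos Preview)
Your proposal is correct and follows exactly the same route as the paper: introduce the scaling variable $t$, view $P(t\vx)$ as a univariate in $t$ whose coefficients are the homogeneous components, bound the degree by $s$ (since a size-$s$ formula has degree at most $s$), and interpolate over $s+1$ field points to get a linear combination of evaluations of size $O(s^2)$ and depth $\Delta+O(1)$. The only minor difference is cosmetic: the paper observes that substituting $x_i\mapsto \alpha x_i$ keeps the formula size at most $s$ (the constant can be absorbed into an incoming edge weight at a $+$-gate), whereas you allow a factor of $O(1)$; either way the final bound is $O(s^2)$.
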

The proof of this lemma is via a standard interpolation argument, where we consider the polynomial $Q(t) = P(x_1t, x_2t, \ldots, x_nt) \in \F(\vx)[t]$ as a univariate in $t$. The point to note is that the homogeneous components of $P$ are coefficients of various powers of $t$ in this new polynomial, and hence can be computed as a linear combination of sufficiently many evaluations of $Q(t)$ for distinct values of $t$ in the base field (which we assume to be large enough). For every $\alpha \in \F$, the formula size of $Q(\alpha)$ is upper bounded by the formula size of $P$. Similarly the depth of the formula of $Q(\alpha)$ is bounded by the depth of $P$. The number of such distinct evaluations needed is upper bounded by one more than the degree of $Q$, which is one more than the degree of $P$ itself. The final observation needed for proving the size upper bound is that a polynomial which can be computed by a formula of size $s$ has degree upper bounded by $s$. Thus, we need to take an appropriately weighted linear combination of $s+1$ distinct substitutions of $t$ in $Q$, each of which has a formula of size at most $s$; thereby giving us an upper bound of $O(s^2)$. Taking linear combinations of such substitutions can be done in depth $O(1)$, which gives the overall depth bound of $\Delta+O(1)$. 

The next statement we need is about the formula complexity of a polynomial which can be written as quotient of two polynomials with small formulas. 
\begin{lemma}\label{lem:division of formulas}
Let $P$ and $R$ be polynomials in $\F[\vx]$ of formula (ABP/circuit) size at most $s$ and depth at most $\Delta$ such that $R$ divides $P$. Then, the polynomial $Q = \frac{P}{R}$ can be computed by a formula (an ABP/circuit resp.) of size at most $\poly(s)$ and depth at most $\Delta+O(1)$. 
\end{lemma}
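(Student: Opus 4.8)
The plan is to adapt Strassen's classical division‑elimination argument, taking care that every step preserves formulas (and, \emph{mutatis mutandis}, ABPs/circuits) with only a polynomial blow‑up in size and an additive $O(1)$ in depth.

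\textbf{Step 1: Normalize so that $R(\mathbf{0}) = 1$.} Since $R$ divides the polynomial $P$, we may assume $P \neq 0$ (otherwise $Q = 0$ and there is nothing to prove), and hence $R \not\equiv 0$. As $\F$ is infinite there is a point $\va$ with $R(\va) \neq 0$. First I would apply the invertible affine change of variables $x_i \mapsto x_i + a_i$ to the formulas for $P$ and $R$: this only replaces each input leaf $x_i$ by the gadget $x_i + a_i$, costing $O(1)$ extra depth and $O(s)$ extra size. After additionally dividing both polynomials by the scalar $R(\va)$, we may assume $R(\mathbf{0}) = 1$, i.e. $R = 1 - R'$ with $R'(\mathbf{0}) = 0$, while $Q = P/R$ is merely translated by $\va$. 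Also recall that a size-$s$ formula computes a polynomial of degree at most $s$, so $d := \deg P \le s$ and $\deg Q = \deg P - \deg R \le d$.

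\textbf{Step 2: Truncated geometric series and homogeneous components.} Since $R'$ has zero constant term, $(R')^k$ involves only monomials of degree $\ge k$; hence the honest polynomial
\[
G \;:=\; P \cdot \sum_{k=0}^{d} (R')^k
\]
agrees with $P \cdot (1 - R')^{-1} = Q$ on every monomial of degree $\le d$. As $\deg Q \le d$, the quotient $Q$ is exactly the sum of the homogeneous components of $G$ of degrees $0, 1, \ldots, d$. Now $G$ admits a cheap, shallow formula: $R'$ has a formula of size $O(s)$ and depth $\Delta + O(1)$; each power $(R')^k$ is a single (unbounded fan-in) product gate feeding on $k$ copies of that formula, of size $O(ks)$ and depth $\Delta + O(1)$; summing these over $k = 0, \ldots, d$ and multiplying by the formula for $P$ gives a formula for $G$ of size $O(d^2 s) = O(s^3)$ and depth $\Delta + O(1)$. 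Applying \autoref{lem:homogeneous components} to extract each of the $d + 1 \le s + 1$ relevant homogeneous components of $G$ (each of size $O(s^6)$, depth $\Delta + O(1)$) and adding them up yields a formula for the translated quotient; undoing the translation from Step 1 costs a further $O(1)$ depth and $O(s)$ size. This gives the claimed $\poly(s)$ size and $\Delta + O(1)$ depth for $Q$.

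\textbf{Main obstacle and other models.} The only genuine subtlety is that $1/R$ is not a polynomial, so one cannot literally "divide inside a formula"; the role of Steps 1--2 is precisely to replace the division by the low-complexity polynomial $\sum_{k=0}^{d}(R')^k$ (a sum of powers of a single polynomial) together with the inexpensive homogeneous-component extraction of \autoref{lem:homogeneous components}, so that the non-polynomial part of the argument never actually appears. For the ABP and circuit versions one runs the identical construction, using the standard ABP/circuit analogues of interpolation for homogeneous-component extraction and the fact that sums and products of polynomially-sized ABPs (resp.\ circuits) again have polynomial size.
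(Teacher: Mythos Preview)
Your argument is correct and is exactly the Strassen division-elimination combined with \autoref{lem:homogeneous components} that the paper points to; the normalization $R(\mathbf 0)=1$, the truncated geometric series identity $G=Q-Q\cdot(R')^{d+1}$, and the interpolation step are precisely the intended ingredients. The size and depth bookkeeping (including the use of unbounded fan-in gates, consistent with the paper's depth conventions in \autoref{lem:homogeneous components}) is fine, so there is nothing to add.
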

The proof of this lemma goes via the standard division elimination argument of Strassen and that of~\autoref{lem:homogeneous components}. We refer the reader to the excellent survey of Shpilka and Yehudayoff~\cite{sy} for formal details on division elimination.



\subsection{Algebraic independence and the Jacobian}
The notion of algebraic independence that we now define plays a crucial role in the proofs in the paper. We start with a formal definition. 
\begin{definition}\label{def:algebraic independence}
Polynomials $q_1, q_2, \ldots, q_k \in  \F[\vx]$ are said to be algebraically independent over $\F$ if there is no non-zero polynomial $g(y_1, y_2, \ldots, y_k) \in \F[\vy]$ such that $g(q_1, q_2, \ldots, q_k)$ is identically zero. 
\end{definition}
This definition generalizes the notion of \emph{linear} independence, which is the special case when there is no non-zero polynomial $g$ in $k$ variables and degree $1$ such that $g(q_1, q_2, \ldots, q_k)$ is identically zero. As we shall see next, over fields of characteristic zero ( or sufficiently large characteristic), the notion of algebraic independence is characterized by the rank of the Jacobian matrix defined below. 
\begin{definition}\label{def:jacobian}
The Jacobian matrix of a tuple $(q_1, q_2, \ldots, q_k)$ of $n$ variate polynomials in $\F[\vx]$, denoted by ${\cal J}(q_1, q_2, \ldots, q_k)$ is a $k \times n$ matrix with entries from $\F[\vx]$ whose $(i, j)^{th}$ entry equals $\frac{\partial q_i}{\partial x_j}$. 
\end{definition}
Thus, the row corresponding to $q_i$ in ${\cal J}(q_1, q_2, \ldots, q_k)$ contains all of the $n$ first order partial derivatives of $q_i$. In other words, the $i^{th}$ row of the Jacobian gives us the gradient of $q_i$. The connection between algebraic independence and the Jacobian stems from the following (almost folklore) theorem. 
\begin{theorem}[Jacobian and Algebraic Independence]\label{thm:jac and alg ind}
Let $\left(q_1, q_2, \ldots, q_k \right)$ be a $k$ tuple of $n$ variate polynomials in $\F[\vx]$ of degree at most $d$. Then, $q_1, q_2, \ldots, q_k$ are algebraically independent over $\F$ if and only if the the rank of the Jacobian matrix ${\cal J}(q_1, q_2, \ldots, q_k)$ over the field $\F(\vx)$ is equal to $k$.
\end{theorem}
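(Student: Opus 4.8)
The statement is the classical criterion: over a field of characteristic zero (or sufficiently large characteristic), polynomials $q_1,\dots,q_k$ are algebraically independent iff the Jacobian matrix has rank $k$ over $\F(\vx)$. I would prove the two directions separately, and the cleaner direction is the contrapositive of "algebraically dependent $\Rightarrow$ Jacobian rank $< k$".

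So first I would prove: if $q_1,\dots,q_k$ are algebraically dependent, then $\rank \cJ(q_1,\dots,q_k) < k$. Let $g(y_1,\dots,y_k)\neq 0$ be a polynomial of minimal total degree with $g(q_1,\dots,q_k)\equiv 0$. Differentiate this identity with respect to each $x_j$ using the chain rule: $\sum_{i=1}^k \frac{\partial g}{\partial y_i}(q_1,\dots,q_k)\cdot \frac{\partial q_i}{\partial x_j} = 0$ for every $j\in[n]$. This says exactly that the vector $v := \big(\frac{\partial g}{\partial y_1}(\vq),\dots,\frac{\partial g}{\partial y_k}(\vq)\big)\in \F(\vx)^k$ lies in the left kernel of $\cJ(q_1,\dots,q_k)$. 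To conclude $\rank \cJ < k$ it suffices to show $v\neq 0$. This is where characteristic zero enters: some $\frac{\partial g}{\partial y_i}$ is a nonzero polynomial (if $g$ depended on no $y_i$ it would be a constant, hence zero since $g(\vq)=0$, contradiction — and in characteristic zero a nonconstant polynomial has a nonvanishing partial derivative, which can fail in small positive characteristic); and $\frac{\partial g}{\partial y_i}$ has strictly smaller degree than $g$, so by minimality of $g$ it cannot vanish on $(q_1,\dots,q_k)$, i.e. $\frac{\partial g}{\partial y_i}(\vq)\neq 0$. Hence $v\neq 0$ and the rows of $\cJ$ are $\F(\vx)$-linearly dependent.

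For the converse — algebraically independent $\Rightarrow$ $\rank\cJ = k$ — I would argue the contrapositive: assume $\rank \cJ(q_1,\dots,q_k) < k$ and produce an algebraic dependence. The plan is an induction / extension argument. If $k=1$: rank $0$ means $q_1$ is a constant, which is algebraically dependent. For the inductive step, after reordering assume $q_1,\dots,q_{k-1}$ are algebraically independent (else we are done by induction); then $\rank \cJ(q_1,\dots,q_{k-1}) = k-1$ by the inductive hypothesis, and since the full Jacobian still has rank $k-1$, the row $\nabla q_k$ is an $\F(\vx)$-linear combination of $\nabla q_1,\dots,\nabla q_{k-1}$. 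Now consider the field extension $\F(\vx) \supseteq \F(q_1,\dots,q_k)$. By general field theory, $q_1,\dots,q_k$ algebraically independent would mean $\mathrm{trdeg}_\F \F(q_1,\dots,q_k) = k$; I want to contradict this. The clean route is: the condition $\rank\cJ < k$ means the differentials $dq_1,\dots,dq_k$ are linearly dependent over $\F(\vx)$ inside the module of Kähler differentials $\Omega_{\F(\vx)/\F}$, and in characteristic zero (where $\F(\vx)/\F$ is separable) $\dim_{\F(\vx)}\Omega = \mathrm{trdeg}_\F \F(\vx) = n$ and the $dq_i$ are independent iff the $q_i$ are a part of a separating transcendence basis iff they are algebraically independent. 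Alternatively, avoiding Kähler differentials: pick a minimal algebraic relation among $q_1,\dots,q_k$ — wait, that's what we're trying to produce — so instead use that $q_k$ is algebraic over $\F(q_1,\dots,q_{k-1})$ follows by a direct construction: since $\nabla q_k = \sum_i c_i \nabla q_{k-1}$ with $c_i\in\F(\vx)$, one shows the $c_i$ actually lie in $\F(q_1,\dots,q_{k-1})$ and that $q_k$ satisfies the resulting "first integral" equation; this is the technical heart.

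The main obstacle is precisely this converse direction. The contrapositive requires manufacturing an explicit (or at least existential) polynomial relation from the rank deficiency, and the honest way to do this is to invoke the equivalence between algebraic independence and independence in the Kähler differential module $\Omega_{\F(\vx)/\F}$, which in turn relies on separability — the genuine reason the "sufficiently large characteristic" hypothesis is needed. I would therefore organize the converse around citing (or briefly developing) the standard fact from commutative algebra that for a separable field extension $K/\F$, elements $q_1,\dots,q_k\in K$ are algebraically independent over $\F$ if and only if $dq_1,\dots,dq_k$ are $K$-linearly independent in $\Omega_{K/\F}$, applied to $K=\F(\vx)$, and then simply observe that the matrix of $dq_1,\dots,dq_k$ in the basis $dx_1,\dots,dx_n$ of $\Omega_{\F(\vx)/\F}$ is exactly the Jacobian. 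The forward direction of that equivalence gives the first half above concretely, and the backward direction gives the converse. Since the paper states this as "almost folklore," I expect the intended proof is to cite a reference (e.g. the relevant lemma in a text on commutative algebra or in prior algebraic-complexity papers such as those on the Jacobian criterion) for the converse while spelling out the easy chain-rule direction in full.
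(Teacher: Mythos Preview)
Your proposal is correct, and in fact goes well beyond what the paper does: the paper gives no proof at all for this theorem, merely citing the survey of Chen, Kayal and Wigderson~\cite[Chapter 3]{ckw}. Your chain-rule argument for the direction ``algebraically dependent $\Rightarrow$ Jacobian rank $<k$'' is complete and standard (and your identification of where the characteristic-zero hypothesis enters is precisely right); your plan for the converse --- to invoke the equivalence between algebraic independence of $q_1,\dots,q_k$ and $\F(\vx)$-linear independence of $dq_1,\dots,dq_k$ in $\Omega_{\F(\vx)/\F}$, which holds because $\F(\vx)/\F$ is separable --- is the standard route and is exactly what one finds in the references the paper would point you to. So your instinct that the paper would simply cite out the converse is correct, except that the paper cites out both directions.
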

A proof of this theorem can be found in the survey of Chen, Kayal and Wigderson~\cite[Chapter 3]{ckw}.

\subsection{Taylor's expansion}
\label{sec:prelim-taylor}
For our proof, we need the following well-known form of Taylor's expansion. 
\begin{theorem}\label{thm:taylor}
Let $P \in \F[\vx]$ be an $n$ variate polynomial of degree at most $d$, and  let $\va \in \F^n$ be a point. Then, for an $n$-tuple of variables $\vz$
\[
P(\va + \vz) = \sum_{i = 0}^d \left(\sum_{\vu \in \N^{n}, |\vu|_1 = i} \frac{\vz^{\vu}}{\vu !} \cdot \frac{\partial P}{\partial \vx^{\vu}}\left(\va\right) \right) \, 
\]  
where, for $\vu = (u_1, u_2, \ldots, u_n)$, $\vu! = u_1!\cdot u_2!\cdots u_n!.$
\end{theorem}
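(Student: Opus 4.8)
The plan is to prove the identity by reducing to monomials and then expanding coordinatewise with the ordinary binomial theorem. First I would note that, with $\va$ and the tuple of formal variables $\vz$ held fixed, both sides of the asserted identity are $\F$-linear in $P$: the left side obviously, and the right side because for each fixed $\vu$ the map $P \mapsto \frac{\partial P}{\partial \vx^{\vu}}(\va)$ is a linear functional on $\F[\vx]$. Hence it suffices to establish the identity when $P$ is a single monomial $P = \vx^{\vb}$ with $|\vb|_1 \le d$; the general case then follows by taking the corresponding $\F$-linear combination.

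For such a monomial we have $P(\va + \vz) = \prod_{j=1}^{n} (a_j + z_j)^{b_j}$. I would apply the one-variable binomial theorem in each coordinate and multiply the resulting sums together, collecting the coefficient of each monomial $\vz^{\vu}$. Writing $\vu \le \vb$ to mean $u_j \le b_j$ for every $j$, this yields
\[ P(\va+\vz) \;=\; \sum_{\vu \le \vb} \left( \prod_{j=1}^{n} \binom{b_j}{u_j}\, a_j^{\,b_j - u_j} \right) \vz^{\vu}. \]
On the other hand, $\frac{\partial^{u_j}}{\partial x_j^{u_j}} x_j^{b_j}$ equals $\frac{b_j!}{(b_j-u_j)!}\, x_j^{b_j - u_j}$ when $u_j \le b_j$ and is $0$ otherwise, so by the product rule $\frac{\partial P}{\partial \vx^{\vu}}(\va)$ equals $\prod_{j=1}^n \frac{b_j!}{(b_j-u_j)!}\, a_j^{b_j-u_j}$ and vanishes unless $\vu \le \vb$. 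Dividing by $\vu! = \prod_j u_j!$ reproduces exactly $\prod_j \binom{b_j}{u_j}\, a_j^{b_j - u_j}$, so the coefficients of $\vz^{\vu}$ on the two sides agree (and their supports agree as well).

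The last step is cosmetic: I would reorganize the single sum over multi-indices $\vu$ by grouping terms with a common value of $i := |\vu|_1$, and observe that $\vu \le \vb$ forces $|\vu|_1 \le |\vb|_1 \le d$, so the outer sum may be truncated at $i = d$, giving precisely the expansion in the statement. I do not expect a genuine obstacle here; the only point deserving a moment of care is that the multi-index set $\{\vu : \vu \le \vb\}$ produced by the binomial expansion matches the a priori larger set $\{\vu : |\vu|_1 \le d\}$ on the right-hand side, which holds because the additional terms contribute zero on both sides (the corresponding higher partial derivatives of $\vx^{\vb}$ vanish).
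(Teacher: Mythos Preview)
Your proof is correct. The paper itself does not prove \autoref{thm:taylor}; it is stated as a well-known form of Taylor's expansion and used without proof, so there is no approach to compare against.
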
 
Note that for $i = 0$, the summand $\left(\sum_{\vu \in \N^{n}, |\vu|_1 = i} \frac{\vz^{\vu}}{\vu !} \cdot \frac{\partial P}{\partial \vx^{\vu}}\left(\va\right) \right)$ is just equal to $P(\va)$, and for every positive integer $i$ at most $d$, this summand is a homogeneous polynomial of degree equal to $i$ in $\vz$. Of particular utility to us is the following easy corollary of~\autoref{thm:taylor}. 
\begin{corollary}\label{cor:taylor truncation}
Let $P \in \F[\vx]$ be an $n$ variate polynomial of degree $d \geq 1$, and  let $\va \in \F^n$ be a point. Then, for an $n$-tuple of variables $\vz$
\[
P(\va + \vz) = P(\va) + \sum_{j = 1}^n z_j \cdot \frac{\partial P}{\partial x_j}(\va) \mod \langle \vz \rangle^2\, .
\]  

\end{corollary}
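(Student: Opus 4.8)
The plan is to simply specialize Taylor's expansion (\autoref{thm:taylor}) and discard the higher-order terms. First I would apply \autoref{thm:taylor} to $P$ at the point $\va$ with displacement variables $\vz$, writing
\[
P(\va + \vz) = \sum_{i = 0}^d \left(\sum_{\vu \in \N^{n},\, |\vu|_1 = i} \frac{\vz^{\vu}}{\vu !} \cdot \frac{\partial P}{\partial \vx^{\vu}}\left(\va\right) \right).
\]
Next I would identify the first two summands explicitly: as already noted after the statement of \autoref{thm:taylor}, the $i=0$ term is exactly $P(\va)$ (the only multi-index with $|\vu|_1 = 0$ is $\vu = \mathbf{0}$, and $\vz^{\mathbf 0}/\mathbf 0! = 1$), and the $i = 1$ term is $\sum_{j=1}^n z_j \cdot \frac{\partial P}{\partial x_j}(\va)$ (the multi-indices with $|\vu|_1 = 1$ are precisely the standard basis vectors $\ve_j$, for which $\vz^{\ve_j} = z_j$ and $\ve_j! = 1$).

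Then I would argue that every remaining summand, i.e. the term indexed by $i$ for $2 \le i \le d$, lies in the ideal $\langle \vz \rangle^2 = \langle z_1, \ldots, z_n\rangle^2$. This is because each such summand is an $\F$-linear combination of monomials $\vz^{\vu}$ with $|\vu|_1 = i \ge 2$, and any monomial in $\vz$ of total degree at least $2$ is by definition a member of $\langle \vz \rangle^2$ (it can be written as a product of two monomials of positive degree in the $z_j$'s, each of which lies in $\langle \vz \rangle$). Summing these contributions and reducing modulo $\langle \vz \rangle^2$ leaves exactly $P(\va) + \sum_{j=1}^n z_j \cdot \frac{\partial P}{\partial x_j}(\va)$, which is the claimed identity.

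There is no real obstacle here: the statement is a direct truncation of \autoref{thm:taylor}, and the only points requiring any care are the bookkeeping of which multi-indices contribute to the degree-$0$ and degree-$1$ parts in $\vz$, and the observation that the $i=0$ summand does not itself vanish modulo $\langle \vz \rangle^2$ (it is the constant $P(\va)$, which must be retained). The hypothesis $d \ge 1$ is only needed so that the degree-$1$ term can be present; if $d = 0$ the statement would still hold trivially with the sum being empty.
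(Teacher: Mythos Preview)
Your proposal is correct and is exactly the argument the paper has in mind: the corollary is stated without proof as an ``easy corollary'' of \autoref{thm:taylor}, obtained by reading off the $i=0$ and $i=1$ summands and noting that the $i\ge 2$ summands lie in $\langle \vz\rangle^2$. There is nothing to add.
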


\subsection{Two useful lemmas}
We use the following (well known) lemma in our arguments. While the lemma is essentially folklore, we sketch a proof for completeness. 
\begin{lemma}\label{lem:remainder}
Let $f(\vx) \in \F[\vx]$ and $P(\vx, y) \in \F[\vx, y]$ be  polynomials such that $P\left(\vx, f(\vx)\right)$  is identically zero. Then, there exists a polynomial $Q(\vx, y) \in \F[\vx, y]$ such that $P(\vx, y) = (y-f(\vx))\cdot Q(\vx, y)$. 
\end{lemma}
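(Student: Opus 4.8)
The statement to be proved is Lemma~\ref{lem:remainder}: if $P(\vx,y)$ vanishes identically when $y$ is substituted by $f(\vx)$, then $(y-f(\vx))$ divides $P(\vx,y)$ in $\F[\vx,y]$. The natural approach is polynomial division in the single variable $y$, treating everything else as coefficients. First I would work in the ring $R[y]$ where $R = \F[\vx]$, and note that $y - f(\vx)$ is a \emph{monic} polynomial in $y$ of degree $1$ over $R$; monicity is exactly what makes the division algorithm go through over a mere commutative ring (rather than a field). So I would divide $P(\vx,y)$ by $(y-f(\vx))$ to write $P(\vx,y) = (y-f(\vx))\cdot Q(\vx,y) + S(\vx)$, where $Q \in \F[\vx,y]$ and the remainder $S$ has $y$-degree $0$, i.e.\ $S \in \F[\vx]$.

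The second step is to identify $S$ as the zero polynomial. Substituting $y = f(\vx)$ into the division identity kills the first term on the right-hand side, leaving $P(\vx, f(\vx)) = S(\vx)$. By hypothesis the left-hand side is identically zero, hence $S(\vx) \equiv 0$, and therefore $P(\vx,y) = (y - f(\vx)) \cdot Q(\vx,y)$ as desired. Alternatively, and perhaps more cleanly for a self-contained sketch, one can substitute $y \mapsto y + f(\vx)$ at the outset: the hypothesis says $P(\vx, y+f(\vx))$, viewed as a polynomial in $y$ over $\F[\vx]$, has zero constant term (its value at $y=0$ is $P(\vx,f(\vx)) = 0$), so it is divisible by $y$; writing $P(\vx, y+f(\vx)) = y\cdot \tilde Q(\vx,y)$ and substituting $y \mapsto y - f(\vx)$ gives $P(\vx,y) = (y-f(\vx))\tilde Q(\vx,y-f(\vx))$, and we set $Q(\vx,y) = \tilde Q(\vx, y-f(\vx)) \in \F[\vx,y]$.

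There is essentially no hard part here: the only thing to be careful about is that the division is performed over the \emph{ring} $\F[\vx]$ and not a field, which is fine precisely because the divisor is monic in $y$; and that all the substitutions $y \mapsto y \pm f(\vx)$ are ring homomorphisms of $\F[\vx,y]$ fixing $\F[\vx]$, so they preserve polynomiality and identities. I expect to present the argument in just a few lines along the lines of the second variant above, since the paper only asks for a sketch "for completeness."
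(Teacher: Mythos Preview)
Your proposal is correct, but the paper argues a little differently. Instead of invoking the division algorithm (or the substitution $y\mapsto y+f(\vx)$), the paper expands $P(\vx,y)=\sum_{i=0}^d C_i(\vx)\,y^i$ explicitly, writes down the same expansion for $P(\vx,f(\vx))$, and subtracts to obtain $P(\vx,y)-P(\vx,f(\vx))=\sum_{i=1}^d C_i(\vx)\,(y^i-f(\vx)^i)$; since each $y^i-f(\vx)^i$ is visibly divisible by $y-f(\vx)$ in $\F[\vx,y]$, and $P(\vx,f(\vx))=0$ by hypothesis, the factorization follows. Your approach is a bit slicker and more conceptual (monicity of the divisor is exactly the right hypothesis for division over a ring, and the substitution variant is even shorter), whereas the paper's approach is more hands-on and avoids any appeal to the division algorithm over a non-field coefficient ring, making the polynomiality of $Q$ completely explicit. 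Either is perfectly fine for a ``sketch for completeness.''
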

\begin{proof}[Proof]
Let $d$ be the degree of $P$ in $y$ and let $C_0(\vx), C_1(\vx), \ldots, C_d(\vx)$ be polynomials in $\F[\vx]$ such that 
\[
P(\vx, y) = \sum_{i = 0}^d C_i(\vx)\cdot y^i \, .
\]
Therefore, $P(\vx, f(\vx))$ can be written as 
\[
P(\vx, f(\vx)) = \sum_{i = 0}^d C_i(\vx)\cdot f(\vx)^i \, .
\]
Subtracting the two expressions above, we get 
\[
P(\vx, y) - P(\vx, f(\vx)) =  \sum_{i = 0}^d C_i(\vx)\cdot y^i  - \sum_{i = 0}^d C_i(\vx)\cdot f(\vx)^i \, .
\]
Now, on the right hand side, the term for $i = 0$ cancels out and on further simplification, we get 
\[
P(\vx, y) - P(\vx, f(\vx)) =  \sum_{i = 1}^d C_i(\vx)\cdot \left(y^i - f(\vx)^i\right)  \, .
\]
Note that for every natural number $i \geq 1$, $y^i-f(\vx)^i$ is divisible by  $(y-f(\vx))$. Therefore, every summand on the right hand side has $(y-f(\vx))$ as a factor, and thus there is a polynomial $Q(\vx, y)$ such that 
\[
P(\vx, y) - P(\vx, f(\vx)) =  (y-f(\vx))\cdot Q(\vx, y)  \, .
\]
Moreover, since $P(\vx, f(\vx))$ is identically zero, we have that $
P(\vx, y) =  (y-f(\vx))\cdot Q(\vx, y)  \, , 
$ thereby completing the proof of the lemma. 
\end{proof}
We now state the well known Polynomial Identity Lemma.\footnote{This lemma is referred to by many names in the literature, e.g. the Schwartz--Zippel Lemma, or the  DeMillo--Lipton--Schwartz--Zippel Lemma and has been discovered multiple times, starting with {\O}ystein Ore in 1922. 
For a brief history, see \cite{arvind}
where the term ``Polynomial Identity Lemma''
is attributed to L.\ Babai.}
\begin{lemma}[Polynomial Identity Lemma]\label{lem:polynomial identity lemma}  
	Let $\F$ be a field, and let $P\in \F[\vx]$ be a non-zero polynomial of degree (at most) $d$ in $n$ variables. Then, for any finite set $S\subset {\F}$ we have
	$$|\{ \va \in S^n: P(\va) = 0\} | \le d {|S|}^{n-1}.$$
\end{lemma}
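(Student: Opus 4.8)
The plan is a straightforward induction on the number of variables $n$. For the base case $n = 1$, the statement is precisely the classical fact that a nonzero univariate polynomial over a field of degree at most $d$ has at most $d$ roots; since $|S|^{n-1} = |S|^0 = 1$, this gives the claimed bound $d$.

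For the inductive step, assume the lemma holds for polynomials in $n-1$ variables. Given a nonzero $P \in \F[\vx]$ of degree at most $d$, I would write it as a polynomial in the single variable $x_n$ with coefficients in $\F[x_1,\ldots,x_{n-1}]$, namely $P = \sum_{i=0}^{k} P_i(x_1,\ldots,x_{n-1})\, x_n^i$, where $k$ is chosen to be the largest index with $P_k \not\equiv 0$ (such $k$ exists since $P$ is nonzero). The key bookkeeping observation is that $\deg P_k \le d - k$. Now I split the set of zeros $\va = (a_1,\ldots,a_n) \in S^n$ of $P$ into two cases according to whether the ``leading coefficient'' vanishes at $(a_1,\ldots,a_{n-1})$. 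In the first case, $P_k(a_1,\ldots,a_{n-1}) = 0$: by the inductive hypothesis applied to $P_k$ there are at most $(d-k)|S|^{n-2}$ such prefixes $(a_1,\ldots,a_{n-1})$, and each extends to at most $|S|$ full tuples, contributing at most $(d-k)|S|^{n-1}$ zeros. In the second case, $P_k(a_1,\ldots,a_{n-1}) \ne 0$: then $P(a_1,\ldots,a_{n-1}, x_n)$ is a nonzero univariate polynomial in $x_n$ of degree exactly $k$, hence has at most $k$ roots; ranging over the at most $|S|^{n-1}$ possible prefixes, this contributes at most $k|S|^{n-1}$ zeros. Adding the two bounds gives $(d-k)|S|^{n-1} + k|S|^{n-1} = d|S|^{n-1}$, completing the induction.

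There is no serious obstacle here; the only points requiring a little care are (i) choosing $k$ so that the leading coefficient $P_k$ is genuinely nonzero (so the inductive hypothesis applies to it), and (ii) tracking the degree inequality $\deg P_k \le d-k$ so that the two contributions combine to exactly $d|S|^{n-1}$ rather than something larger. Both are routine once the case split is set up correctly.
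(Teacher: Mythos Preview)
Your proof is correct and is the standard induction argument for the Polynomial Identity Lemma. The paper itself does not supply a proof of this lemma: it merely states it as a well-known result (with a footnote on its history) and records the corollary that a nonzero degree-$d$ polynomial has a nonvanishing point on any grid $S^n$ with $|S|\ge d+1$. So there is nothing to compare against; your write-up would serve perfectly well as the omitted proof.
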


In particular, if $|S| \ge d+1$, then there exists some $\va \in S^n$ satisfying $P(\va) \ne 0$. 
This gives us a brute force deterministic algorithm, running in time $(d+1)^n$, to test if an arithmetic circuit computing a polynomial of degree at most $d$ in $n$ variables is identically zero.

\section{Symmetric polynomials}
\label{sec:prelim-sym-poly}
A polynomial is said to be symmetric if it is invariant under a permutation of variables. We now define some of the families of symmetric polynomials that are discussed in this paper and briefly discuss some of their properties. For a more detailed introduction on symmetric polynomials, we refer the reader to the book~\cite{MACD}. We start with the definitions.
\begin{definition}[Elementary symmetric polynomials]\label{def:esym}
The elementary symmetric polynomial of degree $k$ on $n$ variables denoted by $e_k(\vx)$ is defined as follows:
\[
e_k(\vx) = \sum_{S\subseteq [n]} \prod_{i \in S} x_i \, .
\]
\end{definition}
The following fact states a property of the elementary symmetric polynomials which will be useful for our proofs in the later sections. 
\begin{fact}\label{fact:esym and roots}
For all $\alpha_1, \alpha_2, \ldots, \alpha_n \in \F$, if $c_1, c_2, \ldots, c_n$ are field elements such that \[
\prod_{i = 1}^n (z-\alpha_i) = z^{n} - c_1\cdot z^{n-1} + c_2\cdot z^{n-2}- \cdots + (-1)^{n} c_n \, ,
\]
then, for every $j \in [n]$, $c_j = e_j(\alpha_1, \alpha_2, \ldots, \alpha_n)$.
\end{fact}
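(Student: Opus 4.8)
The statement to prove is \autoref{fact:esym and roots}: if $\prod_{i=1}^n(z-\alpha_i) = z^n - c_1 z^{n-1} + c_2 z^{n-2} - \cdots + (-1)^n c_n$, then $c_j = e_j(\alpha_1,\ldots,\alpha_n)$ for every $j \in [n]$.

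\medskip

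The plan is to prove this by directly expanding the product $\prod_{i=1}^n (z - \alpha_i)$ and matching coefficients of each power of $z$. First I would expand the product combinatorially: when we multiply out $\prod_{i=1}^n (z-\alpha_i)$, each term in the expansion corresponds to a choice, for each factor $i$, of either the summand $z$ or the summand $-\alpha_i$. If we let $S \subseteq [n]$ be the set of indices for which we pick $-\alpha_i$, then that term contributes $z^{n-|S|}\prod_{i\in S}(-\alpha_i) = (-1)^{|S|} z^{n-|S|}\prod_{i\in S}\alpha_i$. Summing over all subsets $S$ and grouping by $|S| = j$, the coefficient of $z^{n-j}$ is $(-1)^j \sum_{|S|=j}\prod_{i\in S}\alpha_i = (-1)^j e_j(\alpha_1,\ldots,\alpha_n)$, using \autoref{def:esym}.

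\medskip

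Next I would compare this with the given expression $z^n - c_1 z^{n-1} + \cdots + (-1)^n c_n = \sum_{j=0}^n (-1)^j c_j z^{n-j}$ (with the convention $c_0 = 1$). Since two univariate polynomials in $z$ (with coefficients in $\F$, or more precisely a polynomial identity) are equal if and only if their coefficients agree, matching the coefficient of $z^{n-j}$ gives $(-1)^j c_j = (-1)^j e_j(\alpha_1,\ldots,\alpha_n)$, hence $c_j = e_j(\alpha_1,\ldots,\alpha_n)$ for each $j \in [n]$, as claimed.

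\medskip

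This is essentially a routine computation and there is no serious obstacle; the only point requiring a modicum of care is bookkeeping the signs correctly — tracking the factor $(-1)^{|S|}$ coming from the product of the $-\alpha_i$'s and confirming it matches the alternating-sign convention in the statement. One could alternatively phrase the argument as induction on $n$, peeling off one factor $(z - \alpha_n)$ at a time and invoking the Pascal-type recurrence $e_j(\alpha_1,\ldots,\alpha_n) = e_j(\alpha_1,\ldots,\alpha_{n-1}) + \alpha_n e_{j-1}(\alpha_1,\ldots,\alpha_{n-1})$, but the direct subset expansion is cleaner and I would go with that.
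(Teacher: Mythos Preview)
Your proof is correct; the paper itself states this as a standard fact (Vieta's formulas) without proof, so there is nothing to compare against. Your direct subset-expansion argument is exactly the usual justification and is entirely appropriate here.
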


\begin{definition}[Homogeneous (Complete) symmetric polynomials] The homogeneous symmetric polynomial of degree $k$ on $n$ variables denoted by $h_k(\vx)$ is defined as follows:
	$$h_k(\vx)=\sum_{\vb \in \N^n: |\vb|_1 = k} \vx^\vb.$$ 
\end{definition}

\begin{definition}[Power symmetric polynomials] The power symmetric polynomial of degree $k$ on $n$ variables denoted by $p_k(\vx)$ is defined as follows:
	$p_k= \sum_{i=1}^n  x_i^k$.
\end{definition}

These sets of polynomials are algebraically independent. The following fact states this formally. 
\begin{fact}\label{fact:sym ad}
Let $\vx$ be an $n$ tuple of variables. Then, elementary symmetric polynomials $e_1(\vx), \ldots, e_n(\vx)$   are algebraically independent over $\F$. Similarly,  homogeneous symmetric polynomials $h_1(\vx), \ldots, h_n(\vx)$ and power symmetric polynomials $p_1(\vx), p_2(\vx), \ldots, p_n(\vx)$ are also algebraically independent over $\F$. 
\end{fact}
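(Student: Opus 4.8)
The plan is to invoke the Jacobian criterion of \autoref{thm:jac and alg ind}: for each of the three families it suffices to show that the associated $n\times n$ Jacobian matrix over $\F(\vx)$ has rank $n$, equivalently that its determinant is a nonzero element of $\F(\vx)$.

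I would start with the power sums, where the partial derivatives are simplest: $\partial p_k/\partial x_j = k\,x_j^{k-1}$, so $\mathcal{J}(p_1,\ldots,p_n) = \mathrm{diag}(1,2,\ldots,n)\cdot V$, where $V = \big(x_j^{k-1}\big)_{k,j\in[n]}$ is a Vandermonde matrix with $\det V = \prod_{i<j}(x_j-x_i)\neq 0$. Hence $\det \mathcal{J}(p_1,\ldots,p_n) = n!\cdot\prod_{i<j}(x_j-x_i)$, which is a nonzero element of $\F(\vx)$ since $\mathrm{char}(\F)=0$ (or large enough that $n!\neq 0$). Thus $p_1,\ldots,p_n$ are algebraically independent.

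Next I would reduce the remaining two cases to this one via Newton's identities and the chain rule, avoiding any further determinant computation. Newton's identities express each $p_k$ as a polynomial $\psi_k(e_1,\ldots,e_n)$ (in fact only in $e_1,\ldots,e_k$), with integer coefficients, so by the chain rule $\mathcal{J}(p_1,\ldots,p_n) = M\cdot\mathcal{J}(e_1,\ldots,e_n)$ where $M$ is the $n\times n$ matrix whose $(k,i)$ entry is $(\partial\psi_k/\partial y_i)(e_1,\ldots,e_n)$. Since the rank of a product is at most the rank of each factor and the left-hand side already has rank $n$, the matrix $\mathcal{J}(e_1,\ldots,e_n)$ has rank $n$, proving that $e_1,\ldots,e_n$ are algebraically independent. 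The complete homogeneous case is identical once we note that the relations $\sum_{i=0}^{k}(-1)^i e_i h_{k-i} = 0$ for $k\geq 1$ (i.e.\ $E(-t)H(t)=1$) let us inductively write each $e_k$, and hence each $p_k$, as a polynomial in $h_1,\ldots,h_n$; the same rank argument then applies to $\mathcal{J}(h_1,\ldots,h_n)$.

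I do not anticipate a genuine obstacle; the only point to watch is the characteristic assumption, which enters exactly through the invertibility of $\mathrm{diag}(1,\ldots,n)$ in the power-sum Jacobian (and nowhere else, since Newton's identities in the form used and the identity $E(-t)H(t)=1$ hold over any field). As an alternative that sidesteps the Jacobian entirely, one may argue by transcendence degree: \autoref{fact:esym and roots} shows every $x_i$ is algebraic over $\F(e_1,\ldots,e_n)$, so $\mathrm{trdeg}_{\F}\F(e_1,\ldots,e_n) = \mathrm{trdeg}_{\F}\F(\vx) = n$, which forces the $n$ generators $e_1,\ldots,e_n$ to be algebraically independent, and the reductions above then carry this over to the $h$'s and $p$'s.
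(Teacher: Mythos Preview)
Your proposal is correct. Note, however, that the paper does not actually supply its own proof of \autoref{fact:sym ad}; it is recorded as a ``Fact'' and used without justification (e.g.\ in the proof of \autoref{lem:jacobian of esyms}). The only hint the paper gives toward a proof is in the introduction, where it remarks that a ``standard proof'' proceeds by showing that the map $\overline{e}\colon\F^n\to\F^n$, $\va\mapsto(e_1(\va),\ldots,e_n(\va))$, is surjective --- essentially the transcendence-degree argument you sketch at the end. Your primary route via the Jacobian criterion is perfectly in line with the paper's toolkit (\autoref{thm:jac and alg ind}), and your chain-rule reduction from $p$'s to $e$'s to $h$'s is clean and avoids recomputing determinants. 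Both your approaches are valid; the second is marginally closer to what the paper gestures at, but neither contradicts anything in the text.
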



We now formally state the fundamental theorem of symmetric polynomials, which essentially says that over field of characteristic zero, every symmetric polynomial can be written as a unique polynomial combinations of the elementary symmetric polynomials (similarly, for power symmetric polynomials or homogeneous symmetric polynomials).
\begin{theorem}[The fundamental theorem of symmetric polynomials]
\label{thm:fundamental}
For a symmetric polynomial $f_{sym} \in \F[\vx]$ there exists a unique polynomial $f \in \F[\vx]$ s.t $f_{sym}=f_E(e_1(\vx),e_2(\vx), \ldots e_n(\vx))$ where $e_i(\vx)$ is the elementary symmetric polynomial of degree $i$. 
 
Similarly, there exists a unique polynomial $f_H \in \F[\vx]$ such that $f_{sym} =f_H(h_1(\vx),h_2(\vx), \ldots h_n(\vx))$ and a unique polynomial $f_P \in \F[\vx]$ such that $f_{sym} =f_P(p_1(\vx),p_2(\vx), \ldots p_n(\vx))$, where $h_i(\vx)$ is the homogeneous symmetric polynomial of degree $i$ and $p_i(\vx)$ is the power symmetric polynomial of degree $i$. 
\end{theorem}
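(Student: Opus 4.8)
The plan is to prove the statement for the elementary symmetric polynomials first, and then obtain the corresponding statements for the power and complete homogeneous families as immediate corollaries. For the elementary case I would argue in two parts: \emph{existence} and \emph{uniqueness}. For existence, the standard approach is induction on the number of variables $n$ together with a secondary induction on the degree of $f_{sym}$, using the fact that specializing $x_n = 0$ sends $e_i(x_1,\dots,x_n)$ to $e_i(x_1,\dots,x_{n-1})$ for $i<n$ and $e_n$ to $0$. Given a symmetric $f_{sym}(x_1,\dots,x_n)$, the specialization $f_{sym}(x_1,\dots,x_{n-1},0)$ is symmetric in $n-1$ variables, so by induction it equals $g(e_1,\dots,e_{n-1})$ for some $g$; then $f_{sym} - g(e_1(\vx),\dots,e_{n-1}(\vx))$ vanishes when $x_n=0$, hence is divisible by $x_n$, and by symmetry by every $x_i$, hence by $e_n = x_1\cdots x_n$. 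Writing $f_{sym} = g(e_1,\dots,e_{n-1}) + e_n \cdot f'$ where $f'$ is symmetric of strictly smaller degree, the inner induction on degree finishes the existence claim and in fact shows the representing polynomial has bounded degree.

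For uniqueness, it suffices to show that $e_1,\dots,e_n$ are algebraically independent over $\F$, which is exactly \autoref{fact:sym ad}; indeed if $f$ and $\tilde f$ both represented $f_{sym}$, then $f - \tilde f$ would be a nonzero polynomial vanishing on $(e_1(\vx),\dots,e_n(\vx))$, contradicting algebraic independence. So the only genuine content beyond \autoref{fact:sym ad} is the existence half, and that is the step I'd expect to require the most care — specifically the bookkeeping that the double induction actually terminates and that divisibility by each $x_i$ implies divisibility by the product $e_n$ (this uses that $\F[\vx]$ is a UFD and the $x_i$ are pairwise non-associate primes).

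Finally, to transfer the result to $h_1,\dots,h_n$ and $p_1,\dots,p_n$, I would invoke the classical Newton-type identities expressing each $e_k$ as a polynomial in $h_1,\dots,h_k$ (and conversely), and likewise expressing each $e_k$ as a polynomial in $p_1,\dots,p_k$ (valid in characteristic zero, or large enough characteristic, which is the standing assumption of the paper). Composing the representation $f_{sym} = f_E(e_1(\vx),\dots,e_n(\vx))$ with these substitutions yields polynomials $f_H$ and $f_P$ with the required property; uniqueness of $f_H$ and $f_P$ follows in each case from the algebraic independence of $h_1,\dots,h_n$ and of $p_1,\dots,p_n$ respectively, again supplied by \autoref{fact:sym ad}. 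The only subtlety here is the characteristic hypothesis, which is needed so that the power-sum-to-elementary conversion does not divide by zero; I would note this explicitly.
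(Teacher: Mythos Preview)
The paper does not prove \autoref{thm:fundamental}; it is stated as a classical result without proof (the theorem statement is immediately followed by the subsection on Schur polynomials). Your proof plan is the standard textbook argument and is correct: the double induction for existence, algebraic independence (\autoref{fact:sym ad}) for uniqueness, and Newton-type identities to transfer to the $h_i$ and $p_i$ bases are exactly how this is done in the references the paper cites (e.g.\ Macdonald). Since there is no proof in the paper to compare against, there is nothing further to say about alignment of approach.
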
 

\subsection{Schur polynomials}
\label{sec:prelim-schur}
A partition of a natural number $d$ is any sequence $\lambda = (\lambda_1,\lambda_2 \ldots, \lambda_{\ell} )$ of non-negative integers in a non-increasing order $\lambda_1 \geq \lambda_2 \ldots \geq  \lambda_{\ell} \geq 0$ such that $\sum_{i = 1}^{\ell} \lambda_i = d$.\footnote{Usually the $\lambda_i$s are assumed to be positive integers as defined earlier. For the sake of notational convenience we allow trailing zeroes in the definition of $\lambda$.}
The number of non-zero parts of $\lambda$ is called the length of $\lambda$ and is denoted by $l(\lambda)$. The weight of $\lambda$, denoted by $\abs{\lambda}$ is defined to be the sum of each individual component, i.e. $|\lambda|=\lambda_1 +
\lambda_2+ \cdots + \lambda_{l(\lambda)}$. If $|\lambda|=d$, then we say that $\lambda$ is a partition of the number $d$ or alternatively a partition of \emph{degree} $d$.  \chandra{Again the definition of $\lambda$ seems confusing to me. Either we should make the last entry as $\lambda_n$ instead of $\lambda_l$ or change  $\lambda_{\ell} \geq 0$ to $\lambda_{\ell} > 0$  as defined in the introduction. I personally feel we should change it to $\lambda_n$  as the foot note justifies it.}
\nutan{Added a footnote to address this issue.}

Let $\lambda$ be a partition of the number $d$. A Ferrers diagram (or simply a diagram) of shape $\lambda$ is is a left-aligned two-dimensional array of boxes with the $i$th row containing $\lambda_i$ many boxes.  The conjugate of $\lambda$, denoted by $\lambda'$, is the diagram obtained by switching the rows and columns of the diagram of $\lambda$. 

\begin{definition}[Schur polynomials] \label{def: Schur defn}
Let $\lambda$ be a partition of degree $d$ and let $l(\lambda) \leq n$. Then the Schur polynomial $s_\lambda(\vx)$ is defined as 
	
	$$s_\lambda(\vx)= \frac{a_{\lambda+\delta}(\vx)}{a_\delta(\vx)}$$
	where $$\delta = (n-1,n-2, \ldots \ldots 2,1,0)$$ \quad  $$a_{\lambda+\delta}(\vx)=\det(x_i^{\lambda_j+n-j})_{1 \leq i,j \leq n}$$ \quad 
	$$a_{\delta}(\vx)=\det(x_i^{n-j})_{1 \leq i,j \leq n}= \prod\limits_{\substack{1 \leq i<j \leq n}}(x_i-x_j)$$
\end{definition}

We first observe that $s_\lambda(\vx)$ is a symmetric polynomial. To see this, note that if $x_i=x_j$ for any $i \neq j$ then $a_{\lambda+\delta}(\vx)$ is $0$. Thus, by \autoref{lem:remainder} and the fact that $x_i-x_j$ and $x_{i'}-x_{j'}$ do not share a common factor unless $\set{i, j} = \set{i', j'}$, $\prod_{i<j}(x_i-x_j)$ is factor of $a_{\lambda+\delta}(\vx)$ i.e., $a_\delta(\vx)$ is a factor of $a_{\lambda+\delta}(\vx)$. Therefore, $s_\lambda(\vx)$ is a polynomial. Moreover, for any permutation of variables, the sign changes in the numerator and the denominator are the same, and thus their ratio does not see a change in sign. This implies that $s_{\lambda}(\vx)$ is a symmetric polynomial. 

We now state the classical Jacobi-Trudi identities which relates Schur polynomials to the elementary symmetric and homogeneous symmetric polynomials.

\begin{theorem}[Jacobi-Trudi identities] \label {thm:jacobi-trudi}  \leavevmode
	
	\begin{enumerate}
		\item[(1)] $s_\lambda(\vx)=\det(h_{\lambda_i -i+j}(\vx))_{1\leq i,j \leq \ell}$, where $\lambda= (\lambda_1, \ldots, \lambda_\ell)$.
		
		\item[(2)] $s_\lambda(\vx)=$ $\det(e_{\lambda'_i -i+j}(\vx))_{1\leq i,j \leq m}$,  where $\lambda'$ is the conjugate of $\lambda$ and $m= l(\lambda')$ . 
	\end{enumerate} 
	In particular, 
	\vspace{0.3cm}
\begin{align*}
s_\lambda &=  \begin{vmatrix}
		h_{\lambda_1} & ~~~~~~~~~h_{\lambda_1+1}~~~~~~ ~~\ldots~~h_{\lambda_1+\ell-1}\\\\
		h_{\lambda_2-1} & ~~~~~~~~h_{\lambda_2} ~~~~~~~~~~~\ldots~~ h_{\lambda_2+\ell-2}\\ \\
		\vdots & ~~~~\vdots ~~~~~~~~\ddots ~~~~~~~~~~~~\vdots &  \\
		\\
		h_{\lambda_\ell-\ell+1} & ~~~~h_{\lambda_\ell-\ell+2} ~~~\ldots~~~~~~~h_{\lambda_\ell}\\
	\end{vmatrix}_{\ell \times \ell} \\
	&= \begin{vmatrix}
		e_{\lambda'_1} & ~~~~~~~~~e_{\lambda'_1+1}~~~~~~ ~~\ldots~~e_{\lambda'_1+m-1}\\\\
		e_{\lambda'_2-1} & ~~~~~~~~e_{\lambda'_2} ~~~~~~~~~~~\ldots~~ e_{\lambda'_2+m-2}\\ \\
		\vdots & ~~~~\vdots ~~~~~~~~\ddots ~~~~~~~~~~~~\vdots &  \\
		\\
		e_{\lambda'_m-m+1} & ~~~~e_{\lambda'_m-m+2} ~~~\ldots~~~~~~~e_{\lambda'_m}\\
	\end{vmatrix}_{m\times m}
\end{align*}
Note that the identity depends only on the properties of $\lambda$ (i.e. $\ell$ or $m$) and does not depend on the number of variables $n$. 
\end{theorem}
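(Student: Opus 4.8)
The plan is to derive identity (1) directly from the bialternant formula of \autoref{def: Schur defn} by a Cauchy--Binet-style matrix factorisation, and then to obtain identity (2) from (1) using the fact that the semi-infinite matrices built out of the $h_k$'s and the $e_k$'s are mutually inverse. Throughout I would fix $n \ge \ell$ (and, for (2), $n \ge \lambda_1$) and pad $\lambda$ with zeros to length $n$; the claimed independence of the two identities from $n$ is then immediate, since the size and entries of the matrices on the right-hand sides depend only on $\lambda$ (resp.\ $\lambda'$), not on $n$.

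For identity (1), the starting point is the generating-function relation
\[
\frac{1}{1 - x_i t} \;=\; \Big(\sum_{s \ge 0} (-1)^s e^{(i)}_s\, t^s\Big)\Big(\sum_{r \ge 0} h_r(\vx)\, t^r\Big),
\]
where $e^{(i)}_s := e_s(x_1, \dots, \widehat{x_i}, \dots, x_n)$; this is just $\prod_{k \ne i}(1 - x_k t)\cdot\prod_k (1 - x_k t)^{-1}$, using the standard generating functions $\prod_k(1-x_kt)^{-1} = \sum_r h_r t^r$ and $\prod_{k\ne i}(1-x_kt) = \sum_s (-1)^s e^{(i)}_s t^s$. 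Extracting the coefficient of $t^{N_j}$ with $N_j := \lambda_j + n - j$ gives $x_i^{N_j} = \sum_{s=0}^{n-1} (-1)^s e^{(i)}_s\, h_{N_j - s}$ (convention $h_{<0}=0$). Since the $(i,j)$ entry of the numerator matrix in the bialternant is exactly $x_i^{N_j}$, this exhibits that matrix as the product of $\big((-1)^s e^{(i)}_s\big)_{i \in [n],\, s \in \{0,\dots,n-1\}}$ with $\big(h_{N_j - s}\big)_{s \in \{0,\dots,n-1\},\, j \in [n]}$. Taking determinants, it remains to (i) identify $\det\big((-1)^s e^{(i)}_s\big)$ with $\pm a_\delta(\vx)$, and (ii) reindex $\big(h_{N_j - s}\big)$ so its determinant becomes $\pm\det(h_{\lambda_i - i + j})$. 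For (i) I would separately show $\det\big(e^{(i)}_{j-1}\big)_{i,j \in [n]} = a_\delta(\vx)$, e.g.\ by checking that right-multiplying this matrix by $\big((-x_j)^{n-i}\big)_{i,j}$ yields $\mathrm{diag}\big(\prod_{k \ne i}(x_k - x_i)\big)$ (as $\prod_{k\ne i}(x_k - x_m) = 0$ for $m \ne i$) and comparing determinants; pulling the signs out of the columns costs $(-1)^{\binom n 2}$. For (ii) one reverses the row order $s = 0,\dots,n-1$ and transposes, again costing $(-1)^{\binom n 2}$, turning $h_{N_j - s}$ into $h_{\lambda_i - i + j}$. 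The two signs cancel, so $a_{\lambda+\delta} = a_\delta \cdot \det(h_{\lambda_i - i + j})_{n \times n}$, i.e.\ $s_\lambda = \det(h_{\lambda_i - i + j})_{n\times n}$. Finally, rows $\ell+1,\dots,n$ of this matrix have $\lambda_i = 0$, hence $(i,j)$ entry $h_{j-i}$, which is $0$ for $j<i$ and $1$ for $j=i$; so the bottom-left $(n-\ell)\times\ell$ block is zero and the bottom-right block is upper-unitriangular, and the determinant collapses to $\det(h_{\lambda_i - i + j})_{\ell \times \ell}$, which is (1).

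For identity (2): the semi-infinite matrices $\mathcal H = (h_{a-b})_{a,b}$ and $\mathcal E = \big((-1)^{a-b} e_{a-b}\big)_{a,b}$ (with $h_{<0}=e_{<0}=0$, $h_0=e_0=1$) satisfy $\mathcal H\mathcal E = I$, since $\sum_k (-1)^k e_k h_{m-k} = \delta_{m,0}$ for $m\ge 0$ is the coefficient-of-$t^m$ form of $H(t)E(-t)=1$, itself immediate from $\prod_k(1-x_kt)^{-1}\prod_k(1-x_kt)=1$. Passing to a large enough finite truncation (where $\mathcal H$ is unitriangular, of determinant $1$) and applying Jacobi's theorem on complementary minors of inverse matrices, the $\ell\times\ell$ minor of $\mathcal H$ on rows $\{\lambda_i + C - i\}_{i\in[\ell]}$ and columns $\{C - j\}_{j\in[\ell]}$ --- which is exactly $\det(h_{\lambda_i - i + j})_{\ell\times\ell} = s_\lambda$ --- equals, up to an explicit sign, a complementary minor of $\mathcal E$, which, after collapsing its unitriangular part exactly as in part (1) and using the standard beta-number (Maya-diagram) description of the passage from $\lambda$ to $\lambda'$, is $\det(e_{\lambda'_i - i + j})_{m\times m}$ with $m = \lambda_1 = l(\lambda')$; a short computation shows the net sign is $+1$. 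Alternatively, one can prove both (1) and (2) in one stroke by the Lindstr\"om--Gessel--Viennot lemma applied to \autoref{def:schurpolys}: model $h_{\lambda_i-i+j}$ (resp.\ $e_{\lambda'_i-i+j}$) as the weighted count of lattice paths between suitable lattice points taking east steps freely (resp.\ at most once) at each height, observe that because the sources and sinks are both in sorted position only the identity permutation contributes to the LGV expansion (so all signs are $+1$), and set up the bijection between non-intersecting path families and semistandard tableaux of shape $\lambda$ (resp.\ $\lambda'$).

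I expect the main obstacle to be the bookkeeping in part (2): in the Jacobi-minor route, correctly describing the complement of $\{\lambda_i + C - i\}$ inside the truncation window as a shift/reversal of $\{\lambda'_j + C' - j\}$ and pinning down the resulting sign is precisely the combinatorics of partition conjugation and the only genuinely delicate step; in the LGV route the analogous obstacle is verifying the sign/permutation claim and writing the tableau bijection cleanly, with the minor caveat that it uses the (known) equivalence of \autoref{def:schurpolys} with the bialternant \autoref{def: Schur defn}. Everything else --- the generating-function manipulations, the determinant of $\big(e^{(i)}_{j-1}\big)$, and the $n\times n \to \ell\times\ell$ collapse in (1) --- is routine.
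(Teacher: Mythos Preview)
The paper does not prove this theorem: it is stated as a classical result (with a pointer to \cite[Theorem 7.16.1]{StanleyVol2}) and then used as a black box. There is therefore no ``paper's own proof'' to compare against.

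Your proposal is correct and is essentially the standard textbook argument. The matrix-factorisation proof of (1) via $x_i^{N_j}=\sum_s(-1)^s e^{(i)}_s h_{N_j-s}$, together with the identification $\det(e^{(i)}_{j-1})=a_\delta$ and the unitriangular collapse from the $n\times n$ to the $\ell\times\ell$ determinant, is exactly the proof in Macdonald's book~\cite{MACD}; the derivation of (2) from (1) using $H(t)E(-t)=1$ and Jacobi's complementary-minor theorem is likewise classical, and the LGV alternative is the proof in Stanley~\cite[Chapter 7]{StanleyVol2}. Your assessment of where the delicate bookkeeping lies (the conjugate-partition sign computation in the Jacobi-minor route) is accurate, but there is no genuine gap.
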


\begin{theorem}
For any $\lambda$, $s_\lambda(\vx)$ can be computed using a small ABP, hence by a small algebraic circuit.
\end{theorem}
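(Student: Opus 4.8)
The plan is to read off everything from the \emph{bialternant (Jacobi) formula} recorded in \autoref{def: Schur defn},
\[
s_\lambda(\vx) \;=\; \frac{a_{\lambda+\delta}(\vx)}{a_\delta(\vx)}, \qquad a_{\lambda+\delta}(\vx)=\det\!\big(x_i^{\lambda_j+n-j}\big)_{1\le i,j\le n},\quad a_\delta(\vx)=\prod_{1\le i<j\le n}(x_i-x_j),
\]
together with the division–elimination statement \autoref{lem:division of formulas}. Since it was already observed (right after \autoref{def: Schur defn}, via \autoref{lem:remainder}) that $a_\delta \mid a_{\lambda+\delta}$ and hence $s_\lambda$ is a genuine polynomial, it suffices to produce a small ABP for the numerator $a_{\lambda+\delta}$ and a small ABP for the denominator $a_\delta$; \autoref{lem:division of formulas} (which is stated for ABPs as well as formulas/circuits) then hands us a small ABP for the quotient $s_\lambda$. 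Here ``small'' will mean polynomial in $n$ and $\lambda_1$ (in particular polynomial in $n$ and $d=|\lambda|$).

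The denominator is immediate: $a_\delta=\prod_{i<j}(x_i-x_j)$ is a product of at most $\binom{n}{2}$ linear forms, so it has a formula — hence an ABP — of size $O(n^2)$. For the numerator I would start from the classical fact that the $n\times n$ symbolic determinant lies in $\mathrm{VBP}$: $\det_n(\vy)$ has an ABP of size $\poly(n)$ whose edges are labelled by the matrix entries $y_{ij}$ and by field constants (e.g.\ via the clow-sequence / Mahajan--Vinay construction). Each entry we wish to substitute, $x_i^{\lambda_j+n-j}$, is a single monomial of degree at most $\lambda_1+n-1$, and hence is computed by a formula — and so by an ABP — of size $O(\lambda_1+n)$ (repeated squaring would even give $O(\log(\lambda_1+n))$, but this is not needed). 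Substituting these monomials into the determinant ABP by replacing each $y_{ij}$-labelled edge with a disjoint copy of the corresponding small ABP — a standard series-composition operation on ABPs which preserves the layered single-source/single-sink structure with only polynomial blow-up — yields an ABP for $a_{\lambda+\delta}$ of size $\poly(n)\cdot O(\lambda_1+n)=\poly(n,\lambda_1)$.

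Putting the pieces together: $a_{\lambda+\delta}$ and $a_\delta$ both have ABPs of size $\poly(n,\lambda_1)$, $a_\delta$ divides $a_{\lambda+\delta}$, so by \autoref{lem:division of formulas} the polynomial $s_\lambda=a_{\lambda+\delta}/a_\delta$ has an ABP of size $\poly(n,\lambda_1)$; and an ABP of size $s$ is in particular an algebraic circuit of size $\poly(s)$, which gives the ``hence'' part of the statement. As an alternative route one could instead start from the Jacobi--Trudi identity $s_\lambda=\det\!\big(h_{\lambda_i-i+j}\big)_{1\le i,j\le \ell}$ of \autoref{thm:jacobi-trudi}, substituting small circuits/ABPs for the complete homogeneous symmetric polynomials $h_{\lambda_i-i+j}$ into the $\ell\times\ell$ determinant ABP; this avoids the use of division altogether, at the cost of needing the (standard) small-circuit bound for the $h_k$'s and the same ABP-substitution step.

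The only point that needs a little care — and the one I expect to be the main (and rather minor) obstacle — is precisely this ABP-substitution step: verifying that replacing the variable-labelled edges of the determinant ABP by copies of small ABPs still produces a bona fide polynomial-size ABP rather than merely a circuit, i.e.\ that one can re-layer and re-wire the resulting graph without super-polynomial blow-up. This is routine bookkeeping on the layered-graph representation; if one were content with a small \emph{circuit} for $s_\lambda$ it could be skipped entirely by just composing circuits. Everything else — the degree bounds on the monomial entries, and the divisibility $a_\delta\mid a_{\lambda+\delta}$ needed to legitimately invoke \autoref{lem:division of formulas} — is already in place in the excerpt.
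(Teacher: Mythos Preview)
Your proposal is correct and follows essentially the same approach as the paper: both $a_{\lambda+\delta}$ and $a_\delta$ have small ABPs (being determinants), so \autoref{lem:division of formulas} gives a small ABP for their quotient $s_\lambda$. The paper's proof is considerably terser---it simply asserts that the numerator and denominator ``have small ABPs (as they are small determinants)'' without spelling out the monomial-substitution step or the product formula for $a_\delta$---so your version is a strict elaboration of the same argument.
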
 

\begin{proof}
For polynomials $P,Q \in \mathbb{C}[x_1,x_2 \ldots x_n]$ such that both $P$ and $Q$ have small ABPs, then by \autoref{lem:division of formulas} $R=\frac{P}{Q}$ also has a small ABP. Homogenization or interpolation can be used to extract the required polynomial without much blow up. Here both $a_{\lambda+\delta}(\vx)$, $a_\delta(\vx)$ have small ABPs (as they are small determinants), thus $s_\lambda(\vx)$ has an ABP of polynomial size which also implies that is has an algebraic circuit of polynomial size.
\end{proof}

It is well-known that $a_\delta(\vx)$, also known as the Principal Vandermonde Determinant, has a small algebraic formula. 
However much less is known about the complexity of $a_{\lambda+\delta}(\vx)$. These polynomials are also known as Generalized Vandermonde determinants and are well studied (see for instance~\cite{Heineman1929}). 
To the best of our knowledge, before this work it was not known whether for all $\lambda$, $a_{\lambda+\delta}(\vx)$s  have small formulas. Suppose they did, then by \autoref{lem:division of formulas}, we get that $s_\lambda(\vx)$ also have small formulas for all $\lambda$. 
In this paper we show that there exists some $\lambda$ for which $s_\lambda(\vx)$ does not have a small formula unless the Determinant has a small formula. This in particular implies that there exist $\lambda$ such that  $a_{\lambda+\delta}(\vx)$ cannot be computed using small formulas (unless the Determinant can be computed by a small formula).

\section{Proofs of main results}
\subsection{Proof of Lemma~\ref{lem:key-intro}}\label{sec:key lemma}
We start with the following definition, which is crucial for our proofs. 
\begin{definition}[Property S]\label{def:prop s}
A set of $n$ variate polynomials $\{q_1, q_2, \ldots, q_k\} \subseteq \F[\vx]$ is said to satisfy Property S, if there exists an $\va \in \F^n$ such that
\begin{itemize}
\item For all $i \in [k]$, $q_i(\va) = 0$, and,
\item The rank of the Jacobian matrix of $q_1, q_2, \ldots, q_k$ when evaluated at $\va$ is equal to its symbolic rank, i.e. ${\rank}_{\F}\left({\cal J}(q_1, q_2, \ldots, q_k)(\va)\right) = \rank_{\F(\vx)}\left({\cal J}(q_1, q_2, \ldots, q_k)\right)$.
\end{itemize} 
\end{definition}
Property S gives us a concrete way to capture an appropriate notion of \emph{niceness} of a set of algebraically independent polynomials. The following lemma which uses this notion is a key technical ingredient of our proofs. 
\begin{lemma}\label{lem:key}
Let $\{q_1, q_2, \ldots, q_k\} \in \F[x_1, x_2, \ldots, x_n]$ be a set of algebraically independent polynomials which satisfy Property S. Let $g \in \F[z_1, z_2, \ldots, z_k]$ be a homogeneous $k$ variate polynomial of degree equal to $d$ such that the composed polynomial $g(q_1, q_2, \ldots, q_k) \in \F[\vx]$ has an algebraic formula of size $s$ and depth $\Delta$. Then, $g(z_1, z_2, \ldots, z_k)$ can be computed by an algebraic formula of size $O(s^2n)$ and depth $\Delta+O(1)$.  
\end{lemma}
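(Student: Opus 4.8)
The plan is to recover the coefficients of the homogeneous polynomial $g$ from a formula for $f = g(q_1,\dots,q_k)$ by a change of variables that moves the distinguished point $\va$ (guaranteed by Property S) to the origin, followed by a careful truncation of the resulting formula to low degree in the "right" directions. First I would translate: set $y_i = x_i - a_i$ (equivalently, substitute $\vx \mapsto \vx + \va$ in the formula), so that the polynomials $\hat q_i(\vx) := q_i(\vx + \va)$ all vanish at the origin, and $f(\vx + \va) = g(\hat q_1,\dots,\hat q_k)$ still has a formula of size $O(s)$ and depth $\Delta + O(1)$. By \autoref{cor:taylor truncation}, modulo $\langle \vx\rangle^2$ each $\hat q_i$ equals its linear part $L_i(\vx) = \sum_j x_j \cdot \frac{\partial q_i}{\partial x_j}(\va)$, and the matrix of these linear forms is exactly the Jacobian $\mathcal J(q_1,\dots,q_k)(\va)$, which by Property S has rank $k$ (since the symbolic Jacobian rank is $k$ by \autoref{thm:jac and alg ind}). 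Hence the $L_i$ are linearly independent linear forms; extend them (over $\F$) to a basis $L_1,\dots,L_k,L_{k+1},\dots,L_n$ of the space of linear forms, and let $\ell_i$ be a fresh set of variables with the corresponding invertible linear change of coordinates $\vx \mapsto M\vboldell$ realizing $L_i(\vx) = \ell_i$ up to this linear map. Applying this invertible linear substitution to the formula keeps size $O(s)$ and depth $\Delta + O(1)$.

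Next I would analyze what $g(\hat q_1, \dots, \hat q_k)$ looks like in the $\vboldell$ coordinates. After the substitution each $\hat q_i$ becomes a polynomial of the form $\ell_i + (\text{higher order terms in } \ell_1,\dots,\ell_n)$. Since $g$ is homogeneous of degree $d$, the degree-$d$ homogeneous component (in the $\vboldell$ variables) of $g(\hat q_1,\dots,\hat q_k)$ is obtained by replacing each $\hat q_i$ with its lowest-degree (degree-$1$) part $\ell_i$: that is, the degree-$d$ part of $g(\hat q_1,\dots,\hat q_k)$ equals $g(\ell_1,\dots,\ell_k)$ exactly. (Any contribution using a higher-order term of some $\hat q_i$ produces a monomial of degree $> d$; lower-degree parts cannot appear because each $\hat q_i$ has no constant term and $g$ is homogeneous of degree $d$ so every monomial of $g$ uses exactly $d$ of the $\hat q_i$'s counted with multiplicity.) Now extract this degree-$d$ homogeneous component using \autoref{lem:homogeneous components}: from a formula of size $O(s)$ we obtain a formula of size $O(s^2)$ and depth $\Delta + O(1)$ computing $g(\ell_1,\dots,\ell_k)$, a polynomial in $\ell_1,\dots,\ell_k$ only (the variables $\ell_{k+1},\dots,\ell_n$ simply do not appear). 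Renaming $\ell_i$ to $z_i$ gives the desired formula for $g(z_1,\dots,z_k)$.

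There is one subtlety about the claimed size bound $O(s^2 n)$ versus the $O(s^2)$ I get above: the extra factor of $n$ presumably comes from the cost of performing the linear change of variables when the formula is written in a model where a linear substitution on $n$ variables can inflate each leaf — substituting $x_j \mapsto \sum_i M_{ji}\ell_i$ at each of the $\le s$ leaves costs $O(n)$ per leaf, giving $O(sn)$ before homogenization and $O(s^2 n)$ after — so I would track this bookkeeping carefully rather than optimize it. The main obstacle, and the place the hypotheses really get used, is the passage in the previous paragraph: justifying cleanly that the degree-$d$ part (in the new coordinates) of the composition is literally $g(\ell_1,\dots,\ell_k)$. This needs (i) that the $L_i$ are linearly independent so that $\ell_1,\dots,\ell_k$ can be taken as genuine coordinate variables — this is exactly where Property S (via the non-vanishing of the Jacobian rank at $\va$) is essential, and where the naive approach without a good point $\va$ fails — and (ii) a short degree-counting argument that no cross terms of degree $d$ are created and that algebraic independence of $q_1,\dots,q_k$ guarantees $g(\ell_1,\dots,\ell_k) \not\equiv 0$ is not even needed for correctness, only that the extracted polynomial, re-expressed in $z$'s, equals $g$. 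I would also double-check the edge case where the constant term interacts with homogeneity, but since every $\hat q_i(\vzero) = 0$ this is immediate.
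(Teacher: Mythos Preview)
Your proposal is correct and follows essentially the same approach as the paper: translate by the Property-S point $\va$, observe via the Taylor expansion that the linear parts of the shifted $q_i$'s are linearly independent (this is exactly the Jacobian rank condition), use homogeneity of $g$ to argue that the degree-$d$ component of the composition equals $g$ evaluated at those linear forms, extract that component by interpolation (\autoref{lem:homogeneous components}), and undo the linear change of coordinates.

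One minor bookkeeping point: with your order of operations (linear substitution \emph{before} homogenization), the size after substitution is $O(sn)$ and then homogenization squares it to $O(s^2n^2)$, not $O(s^2n)$ as you wrote. The paper gets $O(s^2n)$ by reversing the order: it first extracts the degree-$d$ component of $g(q_1(\va+\vx),\dots,q_k(\va+\vx))$ (yielding a formula of size $O(s^2)$ for $g(u_1,\dots,u_k)$), and only \emph{then} applies the linear substitution that sends each $u_i$ to $z_i$ (costing the factor of $n$). This is purely an ordering optimization and does not affect the correctness of your argument.
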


\begin{proof}
Let $\Phi$ be the formula of size $s$ which computes the polynomial $g\left(q_1(\vx), q_2(\vx), \ldots, q_k(\vx)\right)$. Since $q_1, q_2, \ldots, q_n$ satisfy Property S, there is an $\va \in \F^n$ such that $q_1(\va) = q_2(\va) = \cdots = q_n(\va) = 0$ and ${\rank}_{\F}\left({\cal J}(q_1, q_2, \ldots, q_k)(\va)\right) = \rank_{\F(\vx)}\left({\cal J}(q_1, q_2, \ldots, q_k)\right)$. Moreover, since they are algebraically independent, the rank of $\left({\cal J}(q_1, q_2, \ldots, q_k)(\va)\right)$ is equal to $k$. Thus, $${\rank}_{\F}\left({\cal J}(q_1, \ldots, q_k)(\va)\right) = k \, .$$
Applying~\autoref{cor:taylor truncation} to each $q_i(\vx)$ around this point $\va \in \F^n$, we get  
\begin{align*}
q_i(\va + \vx) &= \sum_{j = 1}^n x_j\cdot \frac{\partial q_i}{\partial x_j} (\va)  \mod \langle \vx \rangle^2 \, .
\end{align*}
Observe that $i^{th}$ row of the matrix $\left({\cal J}(q_1, q_2, \ldots, q_k)(\va)\right)$ is the vector $\left(\frac{\partial q_i}{\partial x_1} (\va), \frac{\partial q_i}{\partial x_2} (\va), \cdots, \frac{\partial q_i}{\partial x_n} (\va) \right)$ and by the choice of $\va$, these vectors  are linearly independent. Thus, the homogeneous linear forms $u_1(\vx), u_2(\vx), \ldots, u_k(\vx)$ are linearly independent, where $u_i(\vx)$ is defined as 
\[
u_i(\vx) = \sum_{j = 1}^n x_j\cdot \frac{\partial q_i}{\partial x_j} (\va)\, .
\]
The rest of the proof follows immediately from the following two claims. We state the claims and use them to complete the proof of this lemma, and then  move on to prove the claims. 
\begin{claim}\label{claim:using homogeneity of g}
Let $d$ be the degree of $g(\vx)$. Then, the homogeneous component of degree $d$ of the polynomial $g(q_1, q_2, \ldots, q_k)$ is equal to $g(u_1, u_2, \ldots, u_k)$. 
\end{claim}

\begin{claim}\label{claim:invertible linear transformation}
If $g(u_1, u_2, \ldots, u_k)$ has a formula of size $s'$ and depth $\Delta$, then $g(\vz)$ has a formula of size at most $s'n$ and depth $\Delta+O(1)$. 
\end{claim}
To complete the proof of the lemma, observe that given the formula $\Phi$ of size at most $s$ and depth at most $\Delta$ which computes $g(q_1, q_2, \ldots, q_k)$, we know from~\autoref{lem:homogeneous components} that the homogeneous component of degree $d$ of  $g(q_1, q_2, \ldots, q_k)$ can be computed by a formula $\Phi_1$ of size at most $O(s^2)$ and depth at most $\Delta+O(1)$. Moreover, from the homogeneity of $g$ and \autoref{claim:using homogeneity of g}, we also know that $\Phi_1$ computes the polynomial $g(u_1, u_2, \ldots, u_k)$, where $u_1, u_2, \ldots, u_k$ are linearly independent linear forms. Thus, from \autoref{claim:invertible linear transformation}, this implies that $g(z_1, z_2, \ldots, z_k)$ has a formula of size at most $O(s^2n)$ and depth $\Delta+O(1)$. This completes the proof of the lemma, modulo the two claims which we prove next. 
\end{proof}

\begin{proof}[Proof of \autoref{claim:using homogeneity of g}]
Let $f_1, f_2, \ldots, f_k \in \F[\vx]$ be polynomials which are zero modulo $\langle \vx \rangle^2$ (i.e. every monomial in $f_1, f_2, \ldots, f_k$ has degree at least $2$) such that for every $i$, $q_i(\va + \vx) = u_i(\vx) + f_i(\vx)$. Since $g$ is a  homogeneous polynomial of degree $d$, it can be expressed as 
\[
g(\vz) = \sum_{\vb \in \N^k, |\vb|_1 = d} \alpha_{\vb} \vz^{\vb} \,  ,
\] for field constants $\alpha_{\vb}$. 
  \nutan{Change all occurrences of $\ve$ to $\vb$?}
Let $\vb \in \N^k$ be any vector such that $|\vb|_1 = d$. Observe that the homogeneous component of degree  $d$ of the polynomial 
\[
\prod_{j = 1}^k q_j(\va + \vx)^{b_j} = \prod_{j = 1}^k (u_j(\vx) + f_j(\vx))^{b_j}
\] is equal to $\prod_{j = 1}^k u_j^{b_j} \, .
$
  Thus, by linearity, the homogeneous component of degree $d$ of  \[
  g\left(q_1(\va + \vx), q_2(\va + \vx), \ldots, q_k(\va + \vx)\right) = \sum_{\vb \in \N^k, |\vb|_1 = d} \alpha_{\vb} \cdot \prod_{j = 1}^k (q_j(\va + \vx ))^{\vb_j}\]
   equals \[
   \sum_{\vb \in \N^k, |\vb|_1 = d} \alpha_{\vb}\cdot  \prod_{j = 1}^k (u_j)^{\vb_j} \, , 
   \] which, in turn is the equal to the polynomial $g(u_1, u_2, \ldots, u_k)$.
\end{proof} 
\begin{proof}[Proof of \autoref{claim:invertible linear transformation}]
The idea for the proof of this claim is to show that each variable $x_j$ can be replaced by a homogeneous linear form $\ell_j(\vz)$ in the variables $\vz$ such that for every $i \in [k]$, the linear form $u_i$ satisfies $u_i(\ell_1(\vz), \ell_2(\vz), \ldots, \ell_n(\vz)) = z_i$. Thus, under this linear transformation, the composed polynomial $g(u_1(\vx), u_2(\vx), \ldots, u_k(\vx)) \in \F[\vx]$ gets mapped to the polynomial $g(z_1, z_2, \ldots, z_k)$.  Once we can show an existence of these linear forms $\ell_1, \ell_2, \ldots, \ell_n$, the bounds on the formula size and depth follow immediately since all we need to do to obtain a formula for $g(\vz)$ is to replace every occurrence of a variable in $\vx$, e.g $x_j$ by the linear form $\ell_j(\vz)$. Since every such linear form has a formula of size at most $k$ and depth $O(1)$, this process blows up the formula size by a multiplicative factor of at most $O(k)$ and the depth by an additive factor of $O(1)$. 

Intuitively, to obtain these linear forms, we just \emph{solve} the system of linear equations  $U \cdot \vx^{T} = \vz^T$, where $U$ is the $k \times n$ matrix whose $i^{th}$ row is $u_i$.  Since  the rank of $U$ is equal to $k$,  let $U'$ be an invertible $k \times k$ submatrix of $U$, and let $V$ be the inverse of $U'$ and let $v_1, v_2, \ldots, v_k$ be the rows of $V$. Moreover, for brevity, let us assume that $U'$ consists of the first $k$ columns of $U$. Observe that $(V\cdot U)$ is a $k \times n$ matrix such that its leftmost $k \times k$ sub-matrix is the identity matrix. We are now ready to define the linear forms $\{\ell_j : j \in [n]\}$.  For $j \in [k]$, let $\ell_j(\vz)$ be equal to $v_j(\vz)$ and for $j > k$, $\ell_j(\vz)$ is defined to be zero. It is straightforward to check that this definition satisfies the desired property and we skip the details. 
\end{proof}
\subsection{Formula complexity of Schur polynomials}
\label{sec:schur-formula-complexity}
We are now ready to prove \autoref{thm:main-intro}. Our first stop, which we reach in the next two lemmas, is to show that which shows that the elementary symmetric polynomials of degree at most $n-1$ on $n$ variables are \emph{well behaved}. We start by establishing a sufficient condition for their Jacobian matrix to have full rank at a point. 
\begin{lemma}\label{lem:jacobian of esyms}
Let $\vx = (x_1, x_2, \ldots, x_n)$ be an $n$-tuple of variables. Let $J(\vx)$ be the Jacobian of $e_1(\vx), e_2(\vx), \ldots, e_{n-1}(\vx)$, and let $\vb = (b_1, b_2, \ldots, b_n) \in \F^n$ be such that for all $i \neq j$, $b_i \neq b_j$. Then, 
\[
\rank_{\F(\vx)}\left(J(\vx) \right) = \rank_{\F}(J(\vb)) = n-1 \, .
\]
\end{lemma}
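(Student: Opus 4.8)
The plan is to use the classical identity expressing $\partial e_k/\partial x_j$ in terms of a lower elementary symmetric polynomial, and then to reinterpret the columns of the Jacobian as (truncated) coefficient vectors of the polynomials $P_j(z) := \prod_{i\ne j}(z-b_i)$, so that the full-rank claim becomes a statement about the linear independence of these $n$ polynomials.

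First I would record the derivative identity: differentiating $e_k(\vx)=\sum_{|S|=k}\prod_{i\in S}x_i$ with respect to $x_j$ annihilates every monomial not divisible by $x_j$ and strips $x_j$ from the rest, so $\partial e_k/\partial x_j$ equals the degree-$(k-1)$ elementary symmetric polynomial in the $n-1$ variables other than $x_j$ (with the convention $e_0=1$). Writing $\vb_{-j}\in\F^{n-1}$ for the point obtained from $\vb$ by deleting its $j$-th coordinate, this shows that the $j$-th column of $J(\vb)$ is exactly $\big(e_0(\vb_{-j}),\,e_1(\vb_{-j}),\,\ldots,\,e_{n-2}(\vb_{-j})\big)^{T}$. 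Next I would expand $P_j(z)=\prod_{i\ne j}(z-b_i)$ by Vieta's formulas: the coefficient of $z^{\,n-1-k}$ in $P_j$ is $(-1)^k e_k(\vb_{-j})$, so the coefficient vector of $P_j$ in the monomial basis $(z^{n-1},z^{n-2},\ldots,z,1)$ is $\big(e_0(\vb_{-j}),\,-e_1(\vb_{-j}),\,\ldots,\,(-1)^{n-1}e_{n-1}(\vb_{-j})\big)^{T}$. Letting $A$ be the $n\times n$ matrix whose $j$-th column is this coefficient vector, $J(\vb)$ coincides, up to negating alternate rows, with the matrix obtained from $A$ by deleting its last (constant-coefficient) row --- which is precisely the row not among the $e_0,\ldots,e_{n-2}$ rows. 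Hence $\rank_{\F}(J(\vb))$ equals the rank of $A$ with one row removed.

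It then remains to observe that $A$ is invertible, equivalently that $P_1,\ldots,P_n$ are linearly independent over $\F$: if $\sum_j c_j P_j\equiv 0$, then evaluating at $z=b_k$ and using that $P_j(b_k)=0$ for $j\ne k$ while $P_k(b_k)=\prod_{i\ne k}(b_k-b_i)\ne 0$ (the $b_i$ being pairwise distinct) forces $c_k=0$ for every $k$. Deleting one row from an invertible $n\times n$ matrix leaves an $(n-1)\times n$ matrix of rank $n-1$, which is the maximum possible, so $\rank_{\F}(J(\vb))=n-1$. Finally, for the symbolic rank I would either note that specialization at $\vb$ cannot increase rank, giving $\rank_{\F(\vx)}(J(\vx))\ge\rank_{\F}(J(\vb))=n-1$ while $\rank_{\F(\vx)}(J(\vx))\le n-1$ trivially since $J$ has only $n-1$ rows; or invoke that $\{e_1,\ldots,e_{n-1}\}$ is algebraically independent (being a subset of the algebraically independent set $\{e_1,\ldots,e_n\}$ of \autoref{fact:sym ad}) and apply the Jacobian criterion (\autoref{thm:jac and alg ind}). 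I do not expect a genuine obstacle here; the only thing requiring care is the sign bookkeeping when passing between Jacobian entries and polynomial coefficients, and confirming that the row deleted from $A$ is exactly the constant-term row.
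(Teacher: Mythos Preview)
Your argument is correct, but it proceeds along a different line from the paper's. The paper selects an explicit $(n-1)\times(n-1)$ submatrix $J'(\vx)$ of $J(\vx)$ (the first $n-1$ columns), observes that $\det(J'(\vx))$ is homogeneous of degree $\binom{n-1}{2}$, shows by a column-collision argument that each $(x_i-x_{i'})$ with $i\neq i'\in[n-1]$ divides it, and concludes by a degree count that $\det(J'(\vx))$ is a nonzero scalar multiple of $\prod_{i<i'\in[n-1]}(x_i-x_{i'})$; distinctness of the $b_i$ then gives $\det(J'(\vb))\neq 0$. In other words, the paper computes a specific maximal minor and identifies it as (essentially) a Vandermonde determinant.

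Your route instead embeds $J(\vb)$ (after harmless row sign changes) as the top $(n-1)$ rows of the $n\times n$ coefficient matrix $A$ of the polynomials $P_j(z)=\prod_{i\neq j}(z-b_i)$, and proves $A$ invertible by the interpolation-style evaluation $P_j(b_k)=\delta_{jk}\cdot\prod_{i\neq k}(b_k-b_i)$. This avoids computing any minor explicitly and treats all $n$ columns symmetrically; the paper's approach, on the other hand, yields the exact formula for the minor, which is a bit more information than the statement asks for. Both arguments ultimately hinge on the same distinctness hypothesis and are of comparable length; your sign bookkeeping and the identification of the deleted row as the constant-term row are fine as stated.
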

\begin{proof}
Using \autoref{fact:sym ad}, we can observe that  the $n$-variate polynomials $e_1(\vx), e_2(\vx), \ldots, e_{n-1}(\vx)$ are algebraically independent.
Therefore, from \autoref{thm:jac and alg ind} we know that $\rank_{\F(\vx)}\left(J(\vx) \right)$ is equal to $n-1$. 

Let $J'(\vx)$ be any $n-1 \times n-1$ submatrix of $J(\vx)$ of rank equal to $n-1$ and by symmetry we can assume that the columns in $J'(\vx)$ come from the first $n-1$ columns of $J(\vx)$. Thus, for $i, j \in [n-1]$, the $(i, j)$ entry of $J'(\vx)$ is equal to $\frac{\partial e_i}{\partial x_j}$. We now show that the rank of $J'(\vb)$ over $\F$ is equal to $n-1$ and this would complete the proof. To this end, we observe that the determinant of $J'(\vx)$ is a non-zero scalar multiple of $\prod_{i, i' \in [n-1], i \neq i'}(x_i-x_{i'})$. Since the coordinates of $\vb$ are all distinct, this determinant remains non-zero on $\vb$. 

From the definition of $J'(\vx)$, we know that the entries in its $i^{th}$ row are homogeneous polynomials of degree equal to $i-1$. Thus, $\det(J'(\vx))$ is a homogeneous polynomial in $\vx$ of degree equal to \[
0 + 1 + 2 + \cdots + n-2 = \frac{(n-2)(n-1)}{2} \, . \] Recall that the $(i, j)$ entry of $J'(\vx)$, is equal to $\frac{\partial e_i}{\partial x_j}$, which is equal to $\sum_{S \subseteq [n]/ \{i\}}\prod_{k \in S} x_k$. Thus, if we replace every occurrence of the variable $x_i$ by the variable $x_{i'}$ for $i \neq i' \in [n-1]$, then columns $i, i'$ in $J'(\vx)$ become identical, and hence $\det(J'(\vx))$ is identically zero. Thus, by \autoref{lem:remainder}, $(x_i-x_{i'})$ is a factor of $\det(J'(\vx))$. Also, for any two distinct sets $\{i_1, i_1'\}$ and $\{i_2, i_2'\}$ where $i_1 \neq i_1'$ and $i_2 \neq i_2'$, the polynomials $(x_{i_1}-x_{i_1'})$ and $(x_{i_2}-x_{i_2'})$ do not share a non-trivial divisor. Thus, the determinant of $J'(\vx)$ must be divisible by the polynomial $\prod_{i, i' \in [n-1], i \neq i'}(x_i-x_{i'})$. Moreover, we observed that the degree of determinant of $J'(\vx)$ is equal to $\frac{(n-2)(n-1)}{2}$, which is also equal to the degree of $\prod_{i, i' \in [n-1], i \neq i'}(x_i-x_{i'})$. Thus, they must be non-zero scalar multiples of each other. This observation, together with the fact that the coordinates of $\vb$ are all distinct, shows that $\det(J'(\vx))$ is non-zero at $\vb$ and hence, $\rank(J(\vb))$ equals $n-1$ over $\F$.
\end{proof}
We now use \autoref{lem:key} to show that $e_1(\vx), e_2(\vx), \ldots, e_{n-1}(\vx)$ satisfy Property S, where $n$ is the number of variables.
\begin{lemma}\label{lem:esym prop s}
Let $\vx = (x_1, x_2, \ldots, x_n)$ be an $n$-tuple of variables. Then, the set of elementary symmetric polynomials of degree at most $n-1$ on $\vx$, i.e. the set $\{e_1(\vx), e_2(\vx), \ldots, e_{n-1}(\vx)\}$ of polynomials satisfies Property S. 
\end{lemma}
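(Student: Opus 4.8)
The plan is to exhibit a single point $\va \in \F^n$ that simultaneously vanishes on $e_1,\ldots,e_{n-1}$ and has pairwise distinct coordinates; the rank condition in Property S will then come for free from \autoref{lem:jacobian of esyms}. Concretely, I would first translate the vanishing requirement into a statement about a univariate polynomial. By \autoref{fact:esym and roots}, if $\va = (a_1,\ldots,a_n)$ then $\prod_{i=1}^n(z-a_i) = z^n - e_1(\va)z^{n-1} + \cdots + (-1)^n e_n(\va)$, so asking that $e_1(\va) = e_2(\va) = \cdots = e_{n-1}(\va) = 0$ is exactly the same as asking that the multiset $\{a_1,\ldots,a_n\}$ be the set of roots of a polynomial of the form $z^n - c$ for some constant $c$.

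Next I would simply choose $c = 1$ (or indeed any nonzero constant), and let $\va = (\zeta, \zeta^2, \ldots, \zeta^n)$ where $\zeta \in \F$ is a primitive $n$-th root of unity; such a $\zeta$ exists since we work over $\F = \mathbb{C}$ (and more generally over any field of characteristic zero, $z^n - 1$ splits into $n$ distinct linear factors). Then $\prod_{i=1}^n (z - \zeta^i) = z^n - 1$, so by \autoref{fact:esym and roots} we get $e_i(\va) = 0$ for every $i \in [n-1]$, which is the first bullet of Property S. Moreover the coordinates $\zeta, \zeta^2, \ldots, \zeta^n$ are pairwise distinct, again because $\zeta$ is primitive of order $n$.

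Finally, since $\va$ has pairwise distinct coordinates, I would invoke \autoref{lem:jacobian of esyms} directly with $\vb = \va$ to conclude that $\rank_{\F}\left(\mathcal{J}(e_1,\ldots,e_{n-1})(\va)\right) = n-1 = \rank_{\F(\vx)}\left(\mathcal{J}(e_1,\ldots,e_{n-1})(\vx)\right)$, which is precisely the second bullet of Property S. Together with the vanishing from the previous paragraph, this shows $\{e_1(\vx),\ldots,e_{n-1}(\vx)\}$ satisfies Property S, completing the proof.

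\textbf{Expected main obstacle.} There is essentially no hard step here — the content of the argument is already packaged in \autoref{fact:esym and roots} and \autoref{lem:jacobian of esyms}, and the only mild point to be careful about is ensuring the chosen point really does have distinct coordinates (hence the choice of \emph{primitive} $n$-th root of unity rather than an arbitrary one) and noting that this relies on the characteristic-zero hypothesis so that $z^n - 1$ has $n$ distinct roots.
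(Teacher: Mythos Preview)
Your proposal is correct and essentially identical to the paper's own proof: the paper also takes $\va$ to be the $n$-th roots of unity (written there as $(1,\omega,\omega^2,\ldots,\omega^{n-1})$), invokes \autoref{fact:esym and roots} to get $e_1(\va)=\cdots=e_{n-1}(\va)=0$, and then applies \autoref{lem:jacobian of esyms} using the distinctness of the coordinates. There is no meaningful difference in approach.
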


\begin{proof}
Let $\va = (1, \omega, \omega^2, \ldots, \omega^{n-1})$, where $\omega$ is the primitive $n^{th}$ root of unity. So, we have the following identity 
\[
z^n-1 = \prod_{i=1}^{n} (z-\omega^{i-1}) \, .
 \] 
However, from ~\autoref{fact:esym and roots}, we also know that $z^n - 1$, which equals $\prod_{i = 1}^n (z-a_i)$ can be expressed as
\[
\prod_{i = 1}^n (z-a_i) = z^{n} - c_1\cdot z^{n-1} + c_2\cdot z^{n-2}- \cdots + (-1)^{n} c_n \, ,
\]
where $c_i = e_i(\va)$.  
Comparing these two expressions for $z^n -1$, we get that for all $1\leq i < n$, $c_i = e_i(\va)=0$ . Thus, $\va=(1, \omega, \omega^2, \ldots, \omega^{n-1})$ is a common zero of $e_1, e_2, \ldots, e_{n-1}$, which satisfies the first item in \autoref{def:prop s}. Moreover, since $\omega$ is a primitive $n^{th}$ root of one, we also know that for all pairs $i \neq j$, $a_i \neq a_j$. Therefore, by \autoref{lem:jacobian of esyms}, $\va$ satisfies the second condition in \autoref{def:prop s}. Thus, the $n$ variate polynomials $e_1(\vx), e_2(\vx), \ldots, e_{n-1}(\vx)$ satisfy Property S.  
\end{proof}

%
We now observe that analogous statements are also true for homogeneous symmetric polynomials and power symmetric polynomials as well.  
\begin{lemma}\label{lem:hsym psym prop S} 
Let $\vx = (x_1, x_2, \ldots, x_n)$ be an $n$-tuple variables. Then the set of complete symmetric polynomials of degree at most $n-1$ on $\vx$, i.e. the set $\{h_1(\vx), h_2(\vx), \ldots, h_{n-1}(\vx)\}$ of polynomials satisfies Property S. Similarly, the set $\{p_1(\vx), p_2(\vx), \ldots, p_{n-1}(\vx)\}$ of power symmetric polynomials of degree at most $n-1$ also satisfies Property S.
\end{lemma}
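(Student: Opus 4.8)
The plan is to mimic the proof of \autoref{lem:esym prop s} for the other two families. In each case I must produce a point $\va \in \F^n$ that is a common zero of the first $n-1$ symmetric polynomials in the family \emph{and} at which the Jacobian attains its full symbolic rank $n-1$. Once I have such a point, the two bullets of \autoref{def:prop s} are immediate, exactly as in the elementary case.

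For the power symmetric polynomials, the key observation is the classical fact (Newton's identities) that $p_1 = p_2 = \cdots = p_{n-1} = 0$ at a point $\va$ if and only if $e_1(\va) = e_2(\va) = \cdots = e_{n-1}(\va) = 0$; indeed, over a field of characteristic zero, the $p_k$ for $k \le n-1$ and the $e_k$ for $k \le n-1$ generate the same ideal after appropriate triangular change of coordinates, so vanishing of one set forces vanishing of the other. Hence the same point $\va = (1,\omega,\ldots,\omega^{n-1})$ used in \autoref{lem:esym prop s} works here too: its coordinates are the $n$ distinct $n$-th roots of unity, so $p_k(\va) = \sum_{i} \omega^{(i-1)k} = 0$ for $1 \le k \le n-1$ directly (geometric series). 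For the Jacobian, $\frac{\partial p_k}{\partial x_j} = k x_j^{k-1}$, so $\cal{J}(p_1,\ldots,p_{n-1})$ is, up to scaling its $k$-th row by $k$, the rectangular Vandermonde-type matrix $(x_j^{k-1})_{k \in [n-1], j \in [n]}$; selecting its first $n-1$ columns gives a genuine Vandermonde matrix whose determinant is $\prod_{1 \le j < j' \le n-1}(x_{j'} - x_j)$, which is nonzero at $\va$ since the $\omega^{i-1}$ are distinct. This simultaneously shows the symbolic rank is $n-1$ and that it is attained at $\va$.

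For the complete homogeneous symmetric polynomials, I would use the classical involution relating the $h$'s and the $e$'s: the generating-function identity $\left(\sum_{i \ge 0} h_i t^i\right)\left(\sum_{i \ge 0} (-1)^i e_i t^i\right) = 1$ shows that at any point $\va$ where $e_1(\va) = \cdots = e_{n-1}(\va) = 0$ we get $h_i(\va) = 0$ for $1 \le i \le n-1$ as well (comparing coefficients of $t^i$ for $i \le n-1$ in $H(t) E^*(t) = 1$, only $e_j$ with $j \le n-1$ and $h_j$ with $j \le n-1$ enter, and induction kills them all). So again $\va = (1,\omega,\ldots,\omega^{n-1})$ is a common zero. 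For the Jacobian full-rank claim, I would either invoke \autoref{thm:jac and alg ind} together with \autoref{fact:sym ad} to get symbolic rank $n-1$, and then argue that $\det$ of the relevant $(n-1)\times(n-1)$ submatrix is again (up to a nonzero scalar) $\prod_{1 \le i < i' \le n-1}(x_i - x_{i'})$ — using that $\frac{\partial h_k}{\partial x_j}$ restricted to the ideal/quotient behaves like $\frac{\partial e_k}{\partial x_j}$ after the triangular change of basis — or, more cleanly, observe that the change of basis from $(e_1,\ldots,e_{n-1})$ to $(h_1,\ldots,h_{n-1})$ (and likewise to $(p_1,\ldots,p_{n-1})$) is given by a polynomial map whose Jacobian at $\va$ is lower-triangular with $\pm 1$'s (for $h$) or nonzero scalars (for $p$) on the diagonal, hence invertible; by the chain rule this multiplies $\cal{J}(e_1,\ldots,e_{n-1})(\va)$ on the left by an invertible matrix, preserving its rank $n-1$, which we already know from \autoref{lem:jacobian of esyms}.

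The main obstacle is the Jacobian computation: writing down $\det$ of the $(n-1)\times(n-1)$ submatrix of $\cal{J}(h_1,\ldots,h_{n-1})$ directly is messier than in the elementary or power cases, so I expect the chain-rule / triangular-change-of-variables argument to be the cleanest route, and I would need to verify carefully that the relevant transition map between the families is indeed polynomial with an invertible Jacobian \emph{at the specific point $\va$} (equivalently, as a symbolic matrix over $\F(\vx)$, since algebraic independence of each family already guarantees generic invertibility). Given that, the lemma follows by exactly the template of \autoref{lem:esym prop s}.
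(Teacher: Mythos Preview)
Your proposal is correct and uses the same witness point $\va = (1,\omega,\ldots,\omega^{n-1})$ as the paper. For the vanishing of $h_1,\ldots,h_{n-1}$ at $\va$ you invoke the generating-function identity $H(t)E(-t)=1$, which is exactly what the paper does. For the power sums you are actually more direct than the paper: you verify $p_k(\va)=0$ by the geometric-series identity and read off the Jacobian as a scaled Vandermonde matrix, whereas the paper goes through the relation $P(t)=E'(-t)/E(-t)$. For the Jacobian of the $h_i$'s, the paper simply asserts that the argument ``closely follows'' the elementary-symmetric case and skips the details; your chain-rule approach (the transition map $(e_1,\ldots,e_{n-1})\mapsto(h_1,\ldots,h_{n-1})$ has lower-triangular Jacobian with $\pm 1$ diagonal at $\va$, hence multiplies ${\cal J}(e_1,\ldots,e_{n-1})(\va)$ by an invertible matrix) is a clean way to fill that gap and arguably cleaner than redoing the divisibility argument of \autoref{lem:jacobian of esyms} from scratch.
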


\begin{proof}[Proof sketch]
	The proof goes via  generating functions of $e_i$, $h_i$, $p_i$ and the relations among them. We denote the generating function of $e_i$, $h_i$, $p_i$ by $E(t),H(t),P(t)$ respectively. The following relations are known between these polynomials. (See for instance~\cite{}.)
	$$E(t)= \prod_{i \geq 1} (1+x_it)=\sum_{k \geq 0} e_kt^k$$ 
	$$H(t)=\prod_{i \geq 1} \frac{1}{1-x_it}=\sum_{k \geq 0}  h_kt^k$$
	\quad $$ \quad P(t)=\prod_{i\geq 1}\frac{x_i}{1-x_it}= \sum_{k \geq 1}  p_kt^{k-1}$$
It is easy to see that $E(-t)H(t)=1$.  Therefore we get, 
\begin{equation} \label{equ: h to e gen fun}
H(t)= \frac{1}{E(-t)} =\frac{1}{1-e_1t+e_2t^2 \ldots +(-1)^ne_nt^n}
\end{equation}
From \autoref{lem:esym prop s}, we know there exists a point $\va=(a_1, a_2, \ldots, a_n)$, where $e_1(\va), e_2(\va), \ldots, e_{n-1}(\va)$ vanish and $e_n(\va)$ is non-zero. We evaluate the above equation at the same $\va$. We denote the above evaluation by $H(t)|_{\vx=\va}$.
$$H(t)|_{\vx=\va}=1- (-1)^ne_n(\va)t^n +((-1)^ne_n(\va)t^n)^2 \ldots $$ 
	
Observing the equation for $H(t)|_{\vx=\va}$, it follows that at point $\va$ , $h_1(\va), h_2(\va), \ldots, h_{n-1}(\va)$ are zero and $h_n(\va)$ is non-zero. 
	
An analogous relation between $P(t)$ and $E(t)$, given by
	  \[
	  P(t)=\frac{E^\prime(-t)}{E(-t)} \, ,
	  \] 
where $E^\prime(-t)$ is the first derivative of $E(-t)$ with respect to $t$, can be used to show that the power symmetric polynomials $p_1(\va),\ldots, p_{n-1}(\va)$ are zero and  $p_n(\va)$ is non-zero. 
Thus, we are halfway towards showing that the set of power symmetric polynomials of degree at most $n-1$ and homogeneous symmetric polynomials of degree at most $n-1$ also satisfy Property S. It remains to be argued that the rank of the Jacobian of these polynomials at $\va$ is equal to $n-1$. The proof of this closely follows the analogous argument for the elementary symmetric polynomials, as in the proof of \autoref{lem:esym prop s}. We skip the remaining details. 
\end{proof}
From \autoref{def:prop s} of Property S and \autoref{thm:jac and alg ind}, it follows that is a set $A$ of polynomials satisfies Property S, then all the non-empty subsets of $A$ also satisfy Property S. Thus, we have the following corollary of \autoref{lem:esym prop s} and \autoref{lem:hsym psym prop S}.
\begin{corollary}\label{cor:prop s subsets}
Let $i_1 < \dots < i_k < n$ be positive integers. Consider the sets $E = \{e_{i_1}(\vx), \dots, e_{i_k}(\vx)\}$ and $H = \{h_{i_1}(\vx), \dots, h_{i_k}(\vx)\}$ of elementary and homogenous symmetric polynomials respectively in $n > i_k$ variables. Then both $E$ and $H$ satisfy property $S$. 
\end{corollary}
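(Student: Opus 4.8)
The plan is to observe that Property S is inherited by subsets, and then apply this to the sets $\{e_1,\ldots,e_{n-1}\}$ and $\{h_1,\ldots,h_{n-1}\}$, which were already shown to satisfy Property S in Lemma~\ref{lem:esym prop s} and Lemma~\ref{lem:hsym psym prop S}. So the first and main step is the following general claim: if $A = \{q_1,\ldots,q_m\}$ satisfies Property S (witnessed by a point $\va \in \F^n$), then every non-empty subset $B = \{q_{j_1},\ldots,q_{j_k}\}$ of $A$ also satisfies Property S, witnessed by the same point $\va$. The first bullet of Definition~\ref{def:prop s} is immediate, since $q_i(\va) = 0$ for all $i$ implies $q_{j_t}(\va) = 0$ for all $t$. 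For the second bullet, note that the Jacobian ${\cal J}(q_{j_1},\ldots,q_{j_k})$ is exactly the submatrix of ${\cal J}(q_1,\ldots,q_m)$ obtained by keeping the rows indexed by $j_1,\ldots,j_k$. Since $A$ satisfies Property S and (being algebraically independent by Theorem~\ref{thm:jac and alg ind}) has symbolic Jacobian rank $m$, the matrix ${\cal J}(q_1,\ldots,q_m)(\va)$ has rank $m$ over $\F$; hence all its rows are $\F$-linearly independent, and in particular any subset of $k$ of them is $\F$-linearly independent, so $\rank_\F\left({\cal J}(q_{j_1},\ldots,q_{j_k})(\va)\right) = k$. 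On the other hand, the subset $B$ is itself algebraically independent (a subset of an algebraically independent set), so by Theorem~\ref{thm:jac and alg ind} its symbolic Jacobian rank is also $k$. Thus $\rank_\F\left({\cal J}(B)(\va)\right) = k = \rank_{\F(\vx)}\left({\cal J}(B)\right)$, which is the second condition of Property S.

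Having established this general inheritance principle, the corollary follows at once: the elementary symmetric polynomials $\{e_1(\vx),\ldots,e_{n-1}(\vx)\}$ on $n > i_k$ variables satisfy Property S by Lemma~\ref{lem:esym prop s}, and $E = \{e_{i_1}(\vx),\ldots,e_{i_k}(\vx)\}$ is a non-empty subset of this set (using $i_k < n$), hence $E$ satisfies Property S. Identically, $\{h_1(\vx),\ldots,h_{n-1}(\vx)\}$ satisfies Property S by Lemma~\ref{lem:hsym psym prop S}, and $H = \{h_{i_1}(\vx),\ldots,h_{i_k}(\vx)\}$ is a non-empty subset, hence satisfies Property S. (The same argument would of course also apply to subsets of the power symmetric polynomials.)

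I do not anticipate a genuine obstacle here; the only point requiring a moment's care is making sure that the relevant Jacobian rank over $\F$ equals the relevant symbolic rank for the \emph{subset}, and the cleanest route is the one above: pass through the fact that the full set has $\F$-rank equal to its size (so \emph{all} rows are independent at $\va$), rather than trying to argue directly about which minor of the submatrix is nonzero at $\va$. One should also be slightly careful that "symbolic rank of ${\cal J}(B)$ is $k$" genuinely uses algebraic independence of $B$ via Theorem~\ref{thm:jac and alg ind}, not just the fact that $B \subseteq A$ — though the latter does immediately give algebraic independence of $B$.
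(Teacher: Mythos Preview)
Your proof is correct and follows exactly the approach the paper takes: the paper also derives the corollary from the observation that Property~S is inherited by nonempty subsets (citing \autoref{def:prop s} and \autoref{thm:jac and alg ind}), together with \autoref{lem:esym prop s} and \autoref{lem:hsym psym prop S}. Your write-up is in fact more careful than the paper's one-line justification, since you make explicit that the inheritance argument uses algebraic independence of the ambient set (so that the evaluated Jacobian has full row rank $m$, forcing every subset of rows to be independent at $\va$); without that hypothesis the subset-inheritance claim can fail.
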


We are now ready to prove the main theorem. 

\begin{theorem}[Main theorem]
\label{thm:main-intro proof}
Let $\lambda= (\lambda_1,\ldots,\lambda_\ell)$ be a partition of $d$ such that  $\lambda_i \geq  \lambda_{i+1}+(\ell-1)$ for all $i \in [\ell-1]$, and $\lambda_{\ell} \geq \ell$. Then, for all $n$, such that $n \geq \lambda_1 + \ell$, if $s_\lambda(x_1,\ldots,x_n)$ has an algebraic formula of size $s$ and depth $\Delta$, then the $\ell\times \ell$ determinant $(\mathrm{det}_\ell )$ has an algebraic formula of size $\poly(s)$ and depth $\Delta + O(1).$  
\end{theorem}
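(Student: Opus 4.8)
The plan is to read off from the Jacobi--Trudi identity that $s_\lambda$ is the $\ell\times\ell$ symbolic determinant evaluated at a suitable tuple of complete homogeneous symmetric polynomials, and then apply the Main Technical Lemma (\autoref{lem:key}). By part (1) of \autoref{thm:jacobi-trudi}, $s_\lambda(\vx) = \det(h_{\lambda_i-i+j}(\vx))_{1\le i,j\le \ell}$, so $s_\lambda$ is a polynomial combination of the polynomials $h_m(\vx)$ as $m$ ranges over the indices $\lambda_i-i+j$ with $i,j\in[\ell]$. What I would establish first is that the hypotheses on $\lambda$ and $n$ are precisely what is needed to guarantee that these $\ell^2$ indices are \emph{pairwise distinct} and all lie in $\{1,2,\dots,n-1\}$.

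For the distinctness: in row $i$ of the Jacobi--Trudi matrix the indices are the $\ell$ consecutive integers $\lambda_i-i+1,\dots,\lambda_i-i+\ell$, so they are distinct within a row; and the smallest index in row $i$ exceeds the largest index in row $i+1$ exactly when $\lambda_i-i+1 > \lambda_{i+1}-(i+1)+\ell$, i.e.\ exactly when $\lambda_i \ge \lambda_{i+1}+(\ell-1)$, which is the assumed condition. By transitivity the $\ell$ index blocks are then pairwise disjoint, yielding $\ell^2$ distinct indices. The largest of them is $\lambda_1-1+\ell \le n-1$ since $n\ge \lambda_1+\ell$, and the smallest is $\lambda_\ell-\ell+1 \ge 1$ since $\lambda_\ell\ge \ell$. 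So writing $1\le m_1 < m_2 < \dots < m_{\ell^2}\le n-1$ for these indices, I would have the clean situation that all entries of the determinant are distinct $h$'s of admissible degree.

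Next I would let $g(z_1,\dots,z_{\ell^2})$ be the polynomial obtained from the Jacobi--Trudi determinant by replacing the $(i,j)$ entry $h_{\lambda_i-i+j}$ by the variable $z_t$ with $m_t=\lambda_i-i+j$. Since all $\ell^2$ entries carry distinct indices, the map $(i,j)\mapsto t$ is a bijection $[\ell]^2\to[\ell^2]$, so $g$ is, after renaming variables, exactly the $\ell\times\ell$ symbolic determinant $\mathrm{det}_\ell$; in particular $g$ is homogeneous (even multilinear) of degree $\ell$, and $s_\lambda = g(h_{m_1},\dots,h_{m_{\ell^2}})$. The polynomials $h_{m_1},\dots,h_{m_{\ell^2}}$ are algebraically independent, being a subset of the algebraically independent family $\{h_1,\dots,h_{n-1}\}$ (\autoref{fact:sym ad}), and they satisfy Property S by \autoref{cor:prop s subsets} (which applies since $m_{\ell^2}\le n-1<n$). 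Hence, if $s_\lambda$ has a formula of size $s$ and depth $\Delta$, \autoref{lem:key} produces a formula for $g=\mathrm{det}_\ell$ of size $O(s^2 n)$ and depth $\Delta+O(1)$. Finally, since $s_\lambda$ is a non-constant symmetric polynomial it depends on all $n$ variables, so any formula for it has size $s\ge n$; therefore $O(s^2 n) = O(s^3) = \poly(s)$, giving the claimed bound.

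The heart of the argument, and the only place that genuinely requires care, is the combinatorial bookkeeping of the index set $\{\lambda_i-i+j : i,j\in[\ell]\}$: one must check both that it has exactly $\ell^2$ elements (no collisions, which would force a repeated variable in $g$ and spoil the identification with $\mathrm{det}_\ell$) and that it is confined to $[1,n-1]$ (so that \autoref{cor:prop s subsets} applies and no out-of-range $h_m$ appears). The slightly unusual hypotheses $\lambda_i\ge\lambda_{i+1}+(\ell-1)$, $\lambda_\ell\ge\ell$ and $n\ge\lambda_1+\ell$ are exactly the three inequalities that make this work; once they are verified, the rest is a direct assembly of \autoref{lem:key}, \autoref{cor:prop s subsets}, and \autoref{fact:sym ad}.
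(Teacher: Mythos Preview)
Your proposal is correct and follows essentially the same approach as the paper's proof: invoke the Jacobi--Trudi identity, verify that the hypotheses on $\lambda$ and $n$ make the $\ell^2$ entries $h_{\lambda_i-i+j}$ distinct with indices in $[1,n-1]$, and then apply \autoref{lem:key} via \autoref{cor:prop s subsets}. Your write-up is in fact more explicit than the paper's on two points---the combinatorial verification of distinctness and the observation that $s\ge n$ so that $O(s^2 n)=\poly(s)$---both of which the paper leaves implicit.
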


\begin{proof}
	We use Jacobi-Trudi Identity from \autoref{thm:jacobi-trudi}, which expresses $s_\lambda(\vx)$ in the form of homogeneous symmetric determinant. 
\begin{align*}
s_\lambda &=  \begin{vmatrix}
		h_{\lambda_1} & ~~~~~~~~~h_{\lambda_1+1}~~~~~~ ~~\ldots~~h_{\lambda_1+\ell-1}\\\\
		h_{\lambda_2-1} & ~~~~~~~~h_{\lambda_2} ~~~~~~~~~~~\ldots~~ h_{\lambda_2+\ell-2}\\ \\
		\vdots & ~~~~\vdots ~~~~~~~~\ddots ~~~~~~~~~~~~\vdots &  \\
		\\
		h_{\lambda_{\ell}-\ell+1} & ~~~~h_{\lambda_{\ell}-\ell+2} ~~~\ldots~~~~~~~h_{\lambda_{\ell}}\\
		\end{vmatrix}
\end{align*}	
Let $M_{\lambda}$ denote the matrix on the right hand side in the above equality. By our choice of $\lambda$, observe that the highest degree entry of $M_{\lambda}$ is $h_{\lambda_1 + \ell -1}$, which has degree at most $n-1$, and the lowest degree entry of $M_{\lambda}$ is $h_{\lambda_{\ell}-\ell + 1}$ which for $\lambda_{\ell} \geq \ell$ has degree at least $1$. Moreover, from the choice of $\lambda$, it also follows that all the entries of $M_{\lambda}$ are distinct.

From \autoref{lem:hsym psym prop S}, we know that there exists a point $\va$ for which the $n$-variate polynomials $\{h_1({\vx}),h_2(\vx), \ldots h_{n-1}(\vx)\} $ satisfy the Property $S$, where $n$ can be taken to be strictly greater than $\ell^2$. Thus, now if we take the $\ell^2$-variate homogeneous polynomial $g$ to be the symbolic determinant of an $\ell \times \ell$ matrix, then $s_{\lambda}$ is obtained by a composition of $g$ with a subset of polynomials $h_1, h_2, \ldots, h_{n-1}$. 
 
Thus, by \autoref{lem:key}, we get that if $g(h_1({\vx}),h_2(\vx), \ldots h_{\ell^2}(\vx))$ (i.e $s_\lambda$) has an algebraic formula of size $s$ and depth $\Delta$, then $g(x_1,x_2, \ldots x_{\ell^2})$ also has a small formula of size $\poly(s)$ and depth $\Delta+O(1)$ . 
\end{proof} 

\begin{remark}[Contrasting hard and easy $\lambda$s]
\label{rem:easy-hard-lambda}
Let $\lambda = (\ell^2, \ell^2-\ell, \ldots, 2\ell, \ell, 0, 0, \ldots, 0,0,0)$ and let $\tilde{\lambda} = ( n\ell, (n-1)\ell, \ldots, \ell^2,\ell^2-\ell, \ldots, 3\ell, 2\ell, \ell)$, where we will take $n \geq \ell^2+\ell$ and $\ell>0$. It is easy to see that $\tilde{\lambda}$ forms an arithmetic progression in which the difference between the successive terms is $\ell$. Whereas $\lambda$ is a truncated arithmetic progression, in which  $n-\ell$-many elements are zeroes and the non-zero elements form an arithmetic progression. 

Here the structure of $\lambda$ and $\tilde{\lambda}$ is quite similar, but the algebraic complexities of $s_\lambda$ and $s_{\tilde{\lambda}}$ are  different. 
In particular, \autoref{thm:main-intro proof} is applicable to $\lambda$. Therefore we can conclude that $s_\lambda$ does not have a small algebraic formula unless the determinant has a small algebraic formula. 

On the other hand, there are small formulas for $s_{\tilde{\lambda}}$. To see this, observe that 
$\tilde{\lambda}+\delta= ((n-1)(\ell+1)+\ell, (n-2)(\ell+1)+\ell, \ldots, (\ell+1)+\ell, \ell)$, 
which is also an arithmetic progression in which the difference between successive terms is $\ell+1$. Therefore, we have 
\vspace{0.4cm}
$$
a_{\tilde{\lambda}+\delta}=
\begin{vmatrix}

x_1^{\ell} &~~x_2^{\ell} ~~~~~~~~~~~~ \ldots~~~~~~~~~~~~ ~ x_n^{\ell}\\  

~~~~~x_1^{(\ell+1)+\ell} &~~~~~~~x_2^{(\ell+1)+\ell}~~~~~~ \ldots~ ~~~~~~ ~~~~~x_n^{(\ell+1)+\ell}\\

\vdots & ~~~\vdots ~~~~~~~~~~~~\ddots ~~~~~~~~~~~~~\vdots &  \\

~~~~~~~~x_1^{(n-1)(\ell+1)+\ell} & ~~~~~~~~~ x_2^{(n-1)(\ell+1)+\ell} ~~~\ldots ~~~~~~~~~~x_n^{(n-1)(\ell+1)+\ell}\\

\end{vmatrix}_{n\times n}
$$

\vspace{0.4cm}
$$
a_{\tilde{\lambda}+\delta}= \bigg{(} \prod_{i=1}^{n} x_i^{\ell}\bigg{)}
\begin{vmatrix}
1 & 1 & \ldots &1 \\
x_1^{\ell+1} & x_2^{\ell+1} & \ldots & x_n^{\ell+1}\\
\vdots & \vdots &\ddots & \vdots &  \\

x_1^{(n-1)(\ell+1)} & x_2^{(n-1)(\ell+1)} & \ldots & x_n^{(n-1)(\ell+1)}\\

\end{vmatrix}_{n\times n}
$$  

\begin{align*}
s_{\tilde{\lambda}} &= \frac{\bigg{(} \prod_{i=1}^{n} x_i^{\ell}\bigg{)} \prod\limits_{\substack{i<j}} (x_j^{\ell+1}-x_i^{\ell+1})}{\prod\limits_{\substack{i<j}} (x_j-x_i)}\\
\end{align*}

In the above expression the numerator and denominator have small formulas and therefore using \autoref{lem:division of formulas} we get that $s_{\tilde{\lambda}}$ has a small formula of size $\poly(\ell,n)$.
\end{remark}

\subsubsection{Generalization to Skew Schur Polynomials}

A straightforward generalization of the previous result is to prove that a class of skew Schur polynomials is also hard for formulas. They can be defined via a Jacobi-Trudi like formula. 

\begin{theorem}\label{def: skew schur definition}
Let $\mu$ and $\lambda$ be partitions with $\mu_i \leq \lambda_i$ for every part $i$. Then, the skew Schur polynomial $s_{\lambda/\mu}$ satisfies

\begin{enumerate}
\item $s_{\lambda/\mu} = \det(h_{\lambda_i - \mu_j - i + j})_{1\leq i, j \leq k}$ where $k = l(\lambda).$
\item $s_{\lambda/\mu} = \det(e_{\lambda_i' - \mu_j' - i+j})_{1 \leq i, j \leq k}$ where $k = l(\lambda')$.  
\end{enumerate}
\end{theorem}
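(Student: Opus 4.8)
The plan is to deduce both skew identities from the straight-shape Jacobi--Trudi identity (\autoref{thm:jacobi-trudi}) via a Cauchy--Binet manipulation together with the branching rule for Schur functions; a fully combinatorial alternative via non-intersecting lattice paths also works. I will take $s_{\lambda/\mu}$ to be defined combinatorially, as $\sum_T \vx^T$ over semistandard fillings of the skew diagram $\lambda/\mu$ (which forces the hypothesis $\mu_i \le \lambda_i$), so that the branching rule $s_\lambda(\vx,\vy) = \sum_\mu s_{\lambda/\mu}(\vx)\,s_\mu(\vy)$ holds: split a semistandard tableau of shape $\lambda$ on the ordered alphabet ``$\vx$-values $<$ $\vy$-values'' into the sub-tableau of cells with $\vx$-values (which has some straight shape $\mu \subseteq \lambda$) and the complementary skew tableau of shape $\lambda/\mu$ on $\vy$-values.

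For item (1), apply \autoref{thm:jacobi-trudi} to $s_\lambda(\vx,\vy)$ with $k = l(\lambda)$ and expand each entry using the addition formula $h_m(\vx,\vy) = \sum_{a} h_a(\vx)\,h_{m-a}(\vy)$ (immediate from $\sum_m h_m(\vx,\vy)\,t^m = \prod_i (1-x_it)^{-1}\prod_j (1-y_jt)^{-1}$). Substituting $a = \lambda_i - i - c$ shows that the Jacobi--Trudi matrix $\big(h_{\lambda_i-i+j}(\vx,\vy)\big)_{i,j\in[k]}$ equals $M N$, where $M_{i,c} = h_{\lambda_i - i - c}(\vx)$ and $N_{c,j} = h_{c+j}(\vy)$; since $h_a = 0$ for $a < 0$, only the finite window $-k \le c \le \lambda_1 - 1$ contributes, so finite Cauchy--Binet applies and gives
\[
s_\lambda(\vx,\vy) = \sum_{C} \det\!\big(M_{[k],C}\big)\,\det\!\big(N_{C,[k]}\big),
\]
the sum over $k$-element index sets $C = \{c_1 > \cdots > c_k\}$. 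Writing $c_r = \mu_r - r$ sets up a bijection between admissible $C$ and partitions $\mu$; then $\det(N_{C,[k]}) = \det\big(h_{\mu_r - r + j}(\vy)\big)_{r,j\in[k]} = s_\mu(\vy)$ by \autoref{thm:jacobi-trudi} again (padding $\mu$ with zero parts does not change this determinant), while $\det(M_{[k],C}) = \det\big(h_{\lambda_i - \mu_j - i + j}(\vx)\big)_{i,j\in[k]}$. Comparing with the branching rule and using the linear independence of $\{s_\mu(\vy)\}_\mu$ over $\F$ yields $s_{\lambda/\mu}(\vx) = \det\big(h_{\lambda_i-\mu_j-i+j}(\vx)\big)_{i,j\in[k]}$. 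Item (2) then follows by applying the ring automorphism $\omega$ (with $\omega(h_m) = e_m$ and $\omega(s_\nu) = s_{\nu'}$, hence $\omega(s_{\lambda/\mu}) = s_{\lambda'/\mu'}$) to item (1) stated for the pair $(\lambda',\mu')$ and relabelling.

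I expect the main obstacle to be bookkeeping rather than anything conceptual: one must verify that restricting $c$ to the finite window in the Cauchy--Binet step discards no nonzero term, that $C \mapsto \mu$ is a bijection onto exactly the partitions with $\mu \subseteq \lambda$ (for every other $\mu$ both $s_{\lambda/\mu}$ and the corresponding determinant vanish, so the coefficient comparison remains valid), and that the negative-index and trailing-zero conventions of \autoref{thm:jacobi-trudi} are applied consistently. The combinatorial alternative sidesteps even this: take sources $A_i = (\mu_i - i, 0)$ and sinks $B_j = (\lambda_j - j, N)$ in the lattice, observe that the weighted count of single paths $A_i \to B_j$ is $h_{\lambda_j - \mu_i - j + i}(\vx)$ and that non-intersecting path families biject with semistandard skew tableaux of shape $\lambda/\mu$, and invoke the Lindström--Gessel--Viennot lemma (the chosen orderings of the $A_i$'s and $B_j$'s leave the identity permutation as the only sign-contributing one), with item (2) again obtained by $\omega$. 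All of this is classical; see \cite[Ch.~7]{StanleyVol2} or \cite{MACD}, so the cleanest exposition may simply cite it while noting that the Cauchy--Binet argument above makes it self-contained given \autoref{thm:jacobi-trudi}.
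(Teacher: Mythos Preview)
Your argument is correct; both the Cauchy--Binet/branching-rule derivation and the Lindstr\"om--Gessel--Viennot alternative are standard proofs of the skew Jacobi--Trudi identities, and your bookkeeping caveats are the right ones to flag. However, the paper does not actually prove this statement: it is introduced with the sentence ``They can be defined via a Jacobi-Trudi like formula'' and then simply stated as a theorem, serving effectively as the \emph{definition} of $s_{\lambda/\mu}$ for the purposes of the paper (the subsequent hardness argument only needs the determinantal expression, not any combinatorial interpretation). So you have supplied a self-contained proof where the paper relies on a citation-free classical fact; your closing remark that ``the cleanest exposition may simply cite it'' is exactly what the paper does, minus the citation.
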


From these definitions of skew Schur polynomials, we can see that they also have ABPs of polynomial size, like Schur polynomials. However, skew Schur polynomials are in general linear combinations of Schur polynomials, by the Littlewood-Richardson rule $$s_{\lambda / \mu} = \sum_{\nu \vdash n - m} c^\lambda_{\mu, \nu} s_\nu.$$ The Littlewood-Richardson coefficients $c^\lambda_{\mu, \nu}$ count tableau whose Young diagram is of shape $\lambda / \mu$ and whose content satisfy certain technical conditions. They are also important in representation theory as they describe how Schur polynomials multiply, or equivalently how a tensor product of polynomial GL(n) representations decomposes into irreducible representations. They are also known to be \#P-hard to compute \cite{narayanan2006complexity}. 

Hardness of skew Schur polynomials assuming hardness of determinant follows from the following lemma, \autoref{cor:prop s subsets} and \autoref{lem:key}.
\begin{lemma}\label{lem: skew distinct}
Let $l \geq 2$ and $\mu_1 \geq 1$ be positive integers. Let $\lambda, \mu$ be partitions with $l$ parts with $\lambda_i = (l-(i-1))l + \mu_1$ and $\mu_i = \mu_1$ for all $i < l$ and $\mu_l = \mu_1 - 1$. Then all entries of the homogeneous Jacobi-Trudi determinant $s_{\lambda/\mu}$ are distinct. 
\end{lemma}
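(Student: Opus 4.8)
The goal is to show that for the specified $\lambda,\mu$, all entries of the homogeneous Jacobi-Trudi matrix $M_{\lambda/\mu} = (h_{\lambda_i - \mu_j - i + j})_{1\le i,j\le l}$ are distinct, i.e. that the exponents $\lambda_i - \mu_j - i + j$ take $l^2$ distinct values as $(i,j)$ ranges over $[l]\times[l]$ (and, implicitly, that they are all at least $1$ so that we are working with genuine nonconstant homogeneous symmetric polynomials, and at most $n-1$ so \autoref{cor:prop s subsets} applies with $n$ chosen large enough — though strictly the \emph{statement} to prove is only distinctness).

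First I would substitute the given values. Since $\mu_j = \mu_1$ for $j < l$ and $\mu_l = \mu_1 - 1$, and $\lambda_i = (l-(i-1))l + \mu_1 = (l-i+1)l + \mu_1$, the $(i,j)$ entry has index
\[
\lambda_i - \mu_j - i + j = (l-i+1)l + \mu_1 - \mu_j - i + j = (l-i+1)l - i + j + (\mu_1 - \mu_j).
\]
So for $j < l$ the index is $e(i,j) := (l-i+1)l - i + j$, and for $j = l$ it is $e(i,l) + 1$. I would set $f(i) := (l-i+1)l - i = l^2 + l - il - i = l^2 + l - i(l+1)$, so that the entry at $(i,j)$ is $f(i) + j$ for $j<l$ and $f(i) + l + 1$ for $j = l$. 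Note $f$ is strictly decreasing in $i$, and consecutive values differ by exactly $l+1$: $f(i) - f(i+1) = l+1$.

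The key step is then a clean counting argument. Within a fixed row $i$, the column entries are $f(i)+1, f(i)+2, \dots, f(i)+(l-1)$ (for $j=1,\dots,l-1$) together with $f(i)+l+1$ (for $j=l$); these $l$ values are distinct since they are $l$ distinct integers in the range $[f(i)+1, f(i)+l+1]$ (the value $f(i)+l$ is simply skipped). Across rows, row $i$ occupies values in $\{f(i)+1,\dots,f(i)+l+1\}$, an interval of length $l+1$ (i.e. $l+1$ consecutive integers, but only $l$ of them used), and since $f(i) - f(i+1) = l+1$, the interval for row $i+1$ is $\{f(i)-l, f(i)-l+1, \dots, f(i)\}$, which is \emph{disjoint} from row $i$'s interval (it ends at $f(i)$, row $i$'s starts at $f(i)+1$). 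Hence the row-intervals $\{f(i)+1,\dots,f(i)+l+1\}$ for $i=1,\dots,l$ are pairwise disjoint, and within each the entries are distinct, so all $l^2$ entries are distinct. I would also check positivity: the smallest exponent appears at $i=l$, $j=1$, namely $f(l)+1 = (l^2+l - l(l+1)) + 1 = 1 \ge 1$ (using $\mu_l = \mu_1-1$; if instead one used $j$ with the $-1$ not present one would get $2$, but $(l,1)$ with $j<l$ when $l\ge 2$ gives exactly $1$), and the largest is at $(1,l)$, namely $f(1)+l+1 = (l^2+l-(l+1)) + l + 1 = l^2 + l$, which is independent of $\mu_1$.

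**The main obstacle.** This is genuinely just bookkeeping; there is no real obstacle. The one point requiring a little care is the exceptional last column $\mu_l = \mu_1 - 1$: I must make sure the "$+1$" shift on column $l$ does not collide with an entry of the next column of the same row or an entry of an adjacent row. The shift moves $(i,l)$ from the "natural" value $f(i)+l$ up to $f(i)+l+1$, which is exactly the value that would have appeared had the row-interval been fully packed; since row $i-1$ starts at $f(i-1)+1 = f(i)+l+2$, there is still a gap of one, so no collision occurs. I would present the argument as: (1) compute the exponent function $f(i)+j$ / $f(i)+l+1$; (2) observe $f(i)-f(i+1) = l+1$; (3) conclude disjointness of row-intervals and within-row distinctness; (4) note the range $[1, l^2]$ of exponents for the application to \autoref{cor:prop s subsets}. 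A one-paragraph proof with a small explicit check for $l=2$ or a picture of the index matrix would make it fully transparent.
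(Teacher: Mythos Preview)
Your proposal is correct and follows essentially the same approach as the paper: compute the index $\lambda_i-\mu_j-i+j$ explicitly, note that within each row the labels are distinct and increasing, and then show that the largest label in row $i+1$ is strictly smaller than the smallest label in row $i$, so the rows occupy disjoint integer ranges. (One tiny slip: the maximal label you computed is $l^2+l$, not $l^2$ as written in your step~(4), but this does not affect the distinctness argument.)
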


\begin{proof}
For simplicity, we call the integer $k$ the label of the homogeneous symmetric polynomial $h_k$. Here, the $(i,j)$ entry of the matrix is $h_{(l-(i-1))l - i + j}$ for $1 \leq j < l$ and $h_{(l-(i-1))l + 1 - i + j}$ for $j = l$. Hence, for a fixed row $i$, all entries are distinct and the labels increase from left to right. Furthermore,

$$(l-i)l + 1 - (i+1) + l = (l-i)(l-1) + 1 - i < l(l-(i-1)) - i + 1$$

so the label of the last entry of row $(i+1)$ is strictly less than the label of the first entry of row $i$. Hence, all entries of the Jacobi-Trudi determinant are distinct.
\end{proof}

\subsection{Extensions of the results of Bl{\"a}ser and Jindal \cite{blser_et_al:LIPIcs:2018:10140}}\label{sec:BJ}
\subsubsection*{Shifted variants for formulas}
A fairly direct consequence of our techniques is the following theorem which can be considered a partial generalization of the result of Bl{\"a}ser-Jindal \cite{blser_et_al:LIPIcs:2018:10140} for algebraic formulas. 

\begin{theorem}\label{thm: bj shifted}
There exist field constants $a_1, a_2, \ldots, a_n$ such that the following is true:  if for an $n$-variate homogeneous polynomial $g$ over $\F$, the composition $g(e_1 - a_1, \ldots, e_n - a_n)$ has a formula of size at most $s$, then the polynomial $g(\vy)$ has a formula of size at most $O(s^2n)$.
	
\end{theorem}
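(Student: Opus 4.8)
The plan is to exhibit constants $a_1,\ldots,a_n$ for which the shifted polynomials $q_i := e_i - a_i$ satisfy Property S, and then invoke \autoref{lem:key} essentially verbatim. Since $\F$ is infinite, I would first fix any point $\vb = (b_1,\ldots,b_n)\in\F^n$ with pairwise distinct coordinates, and set $a_i := e_i(\vb)$ for each $i\in[n]$. With this choice $q_i(\vb) = e_i(\vb) - a_i = 0$ for all $i$, so $\vb$ is a common zero of $q_1,\ldots,q_n$, which gives the first bullet of \autoref{def:prop s}.

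For the second bullet, the key observation is that $\mathcal{J}(q_1,\ldots,q_n) = \mathcal{J}(e_1,\ldots,e_n)$, since the $a_i$ are scalars, so it suffices to show that this Jacobian has rank $n$ both symbolically and at $\vb$. The symbolic statement follows from \autoref{fact:sym ad} together with \autoref{thm:jac and alg ind}, as $e_1,\ldots,e_n$ are algebraically independent. For the evaluation at $\vb$, I would reuse the argument from the proof of \autoref{lem:jacobian of esyms}: the $(i,j)$ entry of $\mathcal{J}(e_1,\ldots,e_n)$ is $\partial e_i/\partial x_j$, which is homogeneous of degree $i-1$, so $\det\mathcal{J}(e_1,\ldots,e_n)$ is homogeneous of degree $\sum_{i=1}^n (i-1) = \binom{n}{2}$; it vanishes whenever two variables are set equal (two columns coincide), hence is divisible by $\prod_{i<j}(x_i-x_j)$, which also has degree $\binom{n}{2}$, so the two are nonzero scalar multiples of each other. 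Since $\vb$ has distinct coordinates, $\det\mathcal{J}(e_1,\ldots,e_n)(\vb)\neq 0$, so $\rank_{\F}(\mathcal{J}(q_1,\ldots,q_n)(\vb)) = n = \rank_{\F(\vx)}(\mathcal{J}(q_1,\ldots,q_n))$. Thus $\{q_1,\ldots,q_n\}$ is algebraically independent and satisfies Property S.

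Finally, I would apply \autoref{lem:key} with $k = n$, the polynomials $q_1,\ldots,q_n$, and the homogeneous polynomial $g$. Under the hypothesis that $g(e_1 - a_1,\ldots,e_n - a_n) = g(q_1,\ldots,q_n)$ has a formula of size $s$ (of some depth $\Delta$), the lemma yields a formula for $g(y_1,\ldots,y_n)$ of size $O(s^2 n)$ (and depth $\Delta + O(1)$, which moreover gives a constant-depth version of the statement), as claimed.

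I do not expect a genuine obstacle here: the only point requiring care is that we need Property S for all $n$ elementary symmetric polynomials rather than just the first $n-1$, but the classical identity that $\det\mathcal{J}(e_1,\ldots,e_n)$ is a nonzero scalar multiple of $\prod_{i<j}(x_i-x_j)$ handles this cleanly. (One could alternatively bootstrap from \autoref{lem:jacobian of esyms} by noting that at a point with no zero coordinate the gradient of $e_n = x_1\cdots x_n$ is not in the span of the gradients of $e_1,\ldots,e_{n-1}$, but the direct degree argument above is shorter.)
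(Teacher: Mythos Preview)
Your proposal is correct and follows the same overall strategy as the paper: show that for a suitable choice of $a_i$ the polynomials $e_i-a_i$ satisfy Property~S, then invoke \autoref{lem:key}. The only difference is in how the witnessing point is found. The paper observes that $\det\mathcal{J}(e_1,\ldots,e_n)$ is a nonzero polynomial of degree $\binom{n}{2}$ and appeals to \autoref{lem:polynomial identity lemma} to obtain \emph{some} $\vc$ at which it does not vanish, then sets $a_i=e_i(\vc)$. You instead identify $\det\mathcal{J}(e_1,\ldots,e_n)$ explicitly as a nonzero scalar multiple of $\prod_{i<j}(x_i-x_j)$ (extending the argument of \autoref{lem:jacobian of esyms} from the $(n-1)\times(n-1)$ submatrix to the full $n\times n$ Jacobian), so that any $\vb$ with pairwise distinct coordinates works. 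Your route is slightly more constructive and avoids the probabilistic existence argument; the paper's route is marginally shorter since it does not re-derive the Vandermonde identity, but in exchange it is less explicit about which $(a_1,\ldots,a_n)$ work. Either argument is fine.
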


\begin{proof}
	We will primarily use \autoref{lem:key} to prove the above theorem. Note that the set $\{e_1-a_1, \ldots , e_n-a_n\}$ continue to be algebraically independent as the Jacobian matrix ${\cal J}(e_1-a_1, \ldots, e_n-a_n)={\cal J}(e_1, \ldots, e_n)$. Recall that $(e_1 \ldots e_n)$ are algebraically independent over $n$ variables and thus has a full rank Jacobian matrix, which also implies the algebraic independence of $(e_1-a_1, \ldots, e_n-a_n)$ using \autoref{thm:jac and alg ind}. The next step would be to find a point $\vc$ where $e_i(c)-a_i$ is zero for all $i$. We also need to make sure that at point $\vc$ the ${\cal J}(e_1-a_1, \ldots, e_n-a_n)$ matrix has full rank. It is easy to verify that the determinant of ${\cal J}(e_1-a_1, \ldots, e_n-a_n)$ is a polynomial of degree $n \choose 2$. As per our assumption, the field we use is quite large, in fact much larger than $n^2$. Thus, from the Polynomial Identity lemma (\autoref{lem:polynomial identity lemma}), we know that over every large enough set $S \subseteq \F$, there exists a point $\vc \in S^n$  at which the  determinant of ${\cal J}(e_1-a_1, \ldots, e_n-a_n)$ is non-zero and hence, the matrix ${\cal J}(e_1-a_1, \ldots, e_n-a_n)(\vc)$ is full rank.  By setting $a_i=e_i(\vc)$ for all $i$, $\{e_1-a_1, \ldots , e_n-a_n\}$ satisfy the Property $S$ mentioned in \autoref{def:prop s} for point $\vc$. The proof of this theorem is now an immediate corollary of \autoref{lem:key}. 
\end{proof}
The proof above gives a slightly stronger statement than what is stated in \autoref{thm: bj shifted}. More precisely, the statement of \autoref{thm: bj shifted} holds for many many $a_1, a_2, \ldots, a_n$. To see this, note that the only property that the proof above uses about $a_1, a_2, \ldots, a_n$ is that $\va = (a_1, a_2, \ldots, a_n)$ is a point in the image of the polynomial map $\sigma$ from $\F^n$ to $\F^n$ which is given by mapping $\vy \in \F^n$ to $\left(e_1(\vy), e_2(\vy), \ldots, e_n(\vy) \right)$ such that there is a point $\vc$ in the pre-image of $\va$ where the Jacobian ${\cal J}(e_1-a_1, \ldots, e_n-a_n)(\vb)$ (which is equal to ${\cal J}(e_1, \ldots, e_n)(\vb)$) is full rank. Now, observe that the map $\sigma$ is invertible. To see this, note that $\sigma$ can be thought of as mapping $n$  roots $b_1, b_2, \ldots, b_n$ of the univariate polynomial $\prod_{i = 1}^n (z-b_i)$ to its coefficients, and hence its inverse is the map which maps the $n$ coefficients of a monic degree $n$ polynomial to its roots. 

Thus, by \autoref{lem:polynomial identity lemma} if we take $\vb$ to be a random point from a large enough grid, then the Jacobian ${\cal J}(e_1-a_1, \ldots, e_n-a_n)$ has rank $n$ with high probability. Moreover, whenever this event happens, \autoref{thm: bj shifted} holds with $\va$ being set to be the image of $\vb$ under $\sigma$. 

\subsubsection*{Generalizing the results in \cite{blser_et_al:LIPIcs:2018:10140} to other bases}
We know that the fundamental theorem of symmetric polynomials also holds for other bases and not just for elementary symmetric basis.  For any given $n$-variate symmetric polynomial $f_{sym}$, Bl{\"a}ser-Jindal  efficiently finds $f$ such that  $f_{sym}=f(e_1,e_2 \ldots e_n)$. The degree of $f$ is also given apriori. We generalize the Bl{\"a}ser-Jindal work to other bases such as homogeneous symmetric base and power symmetric base efficiently. In order to prove that, we need to show there exists an efficient transformation which can represent any elementary symmetric polynomial in the form of homogeneous symmetric or power symmetric polynomial. The following lemma illustrates that.

\begin{lemma}\label{e_k to h_k formula complexity} Any $n$-variate elementary symmetric polynomial of degree $k$ can be written as an algebraic combination of homogeneous symmetric polynomials( or power symmetric polynomials) using a small formula.
	
	\begin{proof}
		It is well known that these transformations are doable using  ABP of polynomial size as the transformation uses small determinants \cite{MACD} . We prove the same for formula.
		Recall that $E(t)H(-t)=1$
		$$E(t)= \frac{1}{H(-t)}=\frac{1}{1-h_1t+h_2t^2 \ldots +(-1)^nh_nt^n}= \frac{1}{1-z}= \sum_{i \geq 0 }z^i$$
		where $z= h_1t-h_2t^2 \ldots +(-1)^{n-1}h_nt^n$ \\
		
		Consider the truncated polynomial $A(t)= E(t) \mod {\langle z\rangle} ^{k+1}$, where $\langle z\rangle$ is the ideal generated by $z$. Now we use interpolation to find the coefficient of $t^k$ in the polynomial $A(t)$, which is precisely $e_k$. $A(t)$ can have degree at most $nk$, which implies a formula size of $O(nk)$ for $A(t)$ (assuming the field to be algebraically closed). But, $k \leq n$, hence the trivial formula complexity for expressing $e_k$ in the form of $h_i$'s is $O(n^4)$ using \autoref{lem:homogeneous components}.
		
		From the definition of the generating functions $P(t)$ and $E(t)$, it is easy to verify that 
		
		$$E(t)=e^{\int P(-t) dt}= e^{\int (\sum_{m \geq 1} p_m t^{m-1}) dt}= e^{({\sum_{m \geq 1} \frac{p_mt^m}{m!}})} = 1+q+\frac{q^2}{2!} \ldots~, $$ 
		
		where $q={\sum_{m \geq 1} \frac{p_mt^m}{m!}}$ \vspace{0.2cm}
		
		Now we consider the truncated polynomial containing degree up to $q^k$ and interpolate this polynomial to get the coefficient of $t^k$. The formula complexity for expressing $e_k$ as power symmetric polynomials is $O(n^4)$. Also, the proof outline is very similar to the homogeneous symmetric case. 
		
	\end{proof}
\end{lemma}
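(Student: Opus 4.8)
The plan is to avoid Newton iteration and division entirely, and instead to extract $e_k$ from a \emph{truncated} power-series identity that can be written down directly as a small formula.

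\textbf{Homogeneous symmetric case.} First I would recall the generating-function identity $E(t)H(-t)=1$, where $E(t)=\sum_{i\ge 0}e_it^i$ and $H(t)=\sum_{i\ge 0}h_it^i$ (with $e_i=h_i=0$ for $i>n$), which holds in $\F[\vx][[t]]$. Treating $h_1,\dots,h_n$ as formal variables, I would set $z:=\sum_{i=1}^{n}(-1)^{i-1}h_it^i$ so that $H(-t)=1-z$ and hence, formally, $E(t)=1/(1-z)=\sum_{i\ge 0}z^i$. Since every monomial of $z$ has $t$-degree at least $1$, the terms $z^i$ with $i>k$ do not affect the coefficient of $t^k$, so the polynomial $A(t):=\sum_{i=0}^{k}z^i$ agrees with $E(t)$ modulo $\langle t\rangle^{k+1}$; in particular the coefficient of $t^k$ in $A(t)$, as a polynomial in $h_1,\dots,h_n$, is exactly the polynomial $f_{e_k}$ guaranteed by \autoref{thm:fundamental} (by uniqueness). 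The key size estimate is that $A(t)$ can be built by the Horner-style formula $A(t)=1+z(1+z(1+\cdots))$ with $k$ nested multiplications by $z$: since $z$ has a formula of size $O(n)$ and the recursion is linear, $A(t)$ has a formula of size $O(nk)=\poly(n)$, and its $t$-degree is at most $nk$. Finally, because $\F$ is infinite, I would interpolate on the variable $t$ — evaluating $A$ at $nk+1$ distinct field elements and taking the appropriate linear combination, exactly as in the proof of \autoref{lem:homogeneous components} — to obtain a formula for $e_k$ in the variables $h_1,\dots,h_n$ of size $\poly(n)$ (roughly $O(n^4)$) and depth $O(1)$ more.

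\textbf{Power-symmetric case.} Here I would use the exponential form of Newton's identities, $E(t)=\exp\!\bigl(\sum_{m\ge 1}\tfrac{(-1)^{m-1}}{m}p_mt^m\bigr)$ (equivalently $\log E(t)=-\log H(-t)$). Writing $q:=\sum_{m=1}^{n}\tfrac{(-1)^{m-1}}{m}p_mt^m$ with $p_1,\dots,p_n$ as formal variables, we get $E(t)=\sum_{i\ge 0}q^i/i!$ formally; since $q$ again has $t$-valuation at least $1$, the truncation $B(t):=\sum_{i=0}^{k}q^i/i!$ agrees with $E(t)$ modulo $\langle t\rangle^{k+1}$. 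As before, $B(t)$ has a small formula (Horner-style nesting in $q$, which itself has formula size $O(n)$) and $t$-degree at most $nk$, and interpolating on $t$ extracts the coefficient of $t^k$, namely $f_{e_k}$, as a formula of size $\poly(n)$ in $p_1,\dots,p_n$.

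\textbf{Main obstacle.} I expect the only delicate points to be bookkeeping: (i) justifying that $A(t)$ and $B(t)$ really equal the true generating functions modulo $\langle t\rangle^{k+1}$, which reduces to the fact that $z$ and $q$ have $t$-valuation $\ge 1$ together with $k\le n$, and that the ``substitute $h_i\mapsto h_i(\vx)$'' homomorphism commutes with extracting the coefficient of $t^k$; (ii) carrying the formula-size accounting honestly through the two steps (building $A$ or $B$, then interpolating on $t$), where the interpolation costs one squaring and yields the stated $\poly(n)$ bound; and (iii) noting that division is never actually needed — the geometric and exponential series are used directly — so \autoref{lem:division of formulas} need not even be invoked. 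The characteristic-zero (or large-characteristic) hypothesis enters only through the interpolation step and through the coefficients $1/m$ and $1/i!$ in the power-symmetric case.
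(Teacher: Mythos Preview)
Your proposal is correct and follows essentially the same approach as the paper's own proof: truncate the geometric (resp.\ exponential) series in $z$ (resp.\ $q$) to $k+1$ terms and interpolate on $t$ to extract the coefficient of $t^k$. If anything, your version is tidier---you work modulo $\langle t\rangle^{k+1}$ rather than $\langle z\rangle^{k+1}$, you make the Horner nesting explicit, and your coefficients $(-1)^{m-1}/m$ in $q$ are the correct ones (the paper's $1/m!$ is a slip).
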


\begin{theorem}[Bl{\"a}ser-Jindal for other bases] \label{Thm:Blaser-Jindal other base}For any $n$-variate symmetric polynomial $f_{sym} \in \mathbb{C}[\vx]$, we can efficiently compute the polynomials $f_E,f_H,f_P \in \mathbb{C}[\vx]$ that are $n$ variate and unique s.t $f_{sym}=f_E(e_1,e_2 \ldots e_n)=f_H(h_1,h_2 \ldots h_n)=f_P(p_1,p_2 \ldots p_n)$.
\end{theorem}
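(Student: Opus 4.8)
The plan is to establish the statement for the elementary symmetric basis first by directly invoking the result of Bläser and Jindal~\cite{blser_et_al:LIPIcs:2018:10140}, and then to transfer it to the homogeneous and power symmetric bases using \autoref{e_k to h_k formula complexity} together with the Fundamental Theorem of Symmetric Polynomials (\autoref{thm:fundamental}). Concretely, given $f_{sym}$, Bläser--Jindal hands us (an efficient representation of) the unique polynomial $f_E$ with $f_{sym} = f_E(e_1,\ldots,e_n)$, together with a bound on $\deg f_E$. So the content that remains is to produce $f_H$ and $f_P$ efficiently from $f_E$, i.e. to convert ``a symmetric polynomial presented as a polynomial in the $e_i$'s'' into ``the same symmetric polynomial presented as a polynomial in the $h_i$'s (resp.\ $p_i$'s)''.

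The key observation is that \autoref{e_k to h_k formula complexity} gives, for each $k\in[n]$, a polynomial $R_k$ of polynomial size (formula, or even just circuit) such that $e_k(\vx) = R_k(h_1(\vx),\ldots,h_n(\vx))$ as a formal identity in $\F[\vx]$; likewise an analogous $R'_k$ with $e_k = R'_k(p_1,\ldots,p_n)$. Here I would be slightly careful about the semantics: I want these as \emph{polynomial identities in the abstract variables}, which is exactly what the generating-function manipulation in \autoref{e_k to h_k formula complexity} produces (the identities $E(t)H(-t)=1$ and $E(t) = \exp(\sum_m p_m t^m/m)$ hold at the level of formal power series in the $e$'s, $h$'s, $p$'s treated as indeterminates, since $\{e_i\}$, $\{h_i\}$, $\{p_i\}$ are each algebraically independent). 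Thus one obtains honest polynomials $R_k(y_1,\ldots,y_n)$ and $R'_k(y_1,\ldots,y_n)$. Now define
\[
f_H(y_1,\ldots,y_n) = f_E(R_1(\vy),\ldots,R_n(\vy)), \qquad f_P(y_1,\ldots,y_n) = f_E(R'_1(\vy),\ldots,R'_n(\vy)).
\]
Substituting $y_i \mapsto h_i(\vx)$ (resp.\ $p_i(\vx)$) gives $f_H(h_1,\ldots,h_n) = f_E(e_1,\ldots,e_n) = f_{sym}$ (resp.\ with the $p_i$'s), so $f_H$ and $f_P$ have the required defining property; uniqueness is immediate from \autoref{thm:fundamental}. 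Efficiency follows by composition: $f_E$ is computed efficiently (size polynomial in the input and in $\deg f_E$, which is at most $d := \deg f_{sym}$), and each $R_k$ or $R'_k$ has polynomial size by \autoref{e_k to h_k formula complexity}, so the composed objects $f_H$, $f_P$ have polynomial size as well, and the degree bound on $f_E$ propagates to a degree bound on $f_H, f_P$.

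The main obstacle, and the point that needs the most care, is the interpretation of ``efficiently compute'' and matching it between the Bläser--Jindal output and what we want: their algorithm returns an algebraic \emph{circuit} (not necessarily formula) for $f_E$, since the underlying Newton-iteration step is not known to be formula-computable, so the conclusion here is correspondingly a statement about circuits (as is the original). Once that is granted, the composition argument above is routine. A secondary subtlety is ensuring that the truncation steps in \autoref{e_k to h_k formula complexity} (working modulo $\langle z\rangle^{k+1}$ or $\langle q\rangle^{k+1}$) indeed recover $e_k$ exactly and not merely up to higher-order terms; but the coefficient of $t^k$ in the relevant generating function depends only on the truncation up to those orders, so the extracted polynomial is exactly $e_k$, and interpolation in $t$ (as in \autoref{lem:homogeneous components}) makes this extraction efficient. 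I would also remark in passing that the reverse directions ($h_k$ or $p_k$ in terms of the $e_i$'s) hold by entirely symmetric generating-function identities, so one could equally start from any of the three bases.
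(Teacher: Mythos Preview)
Your proposal is correct and follows essentially the same approach as the paper: invoke Bl\"aser--Jindal as a black box to obtain a circuit for $f_E$, then compose with the polynomial-size expressions of each $e_k$ in terms of the $h_i$'s (resp.\ $p_i$'s) from \autoref{e_k to h_k formula complexity}, and appeal to algebraic independence (\autoref{thm:fundamental}) for uniqueness. If anything, you are slightly more careful than the paper in articulating that the generating-function identities yield honest polynomial identities in abstract variables and in flagging that the conclusion is necessarily at the level of circuits.
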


\begin{proof}
	Bl{\"a}ser-Jindal proves that $f_{E}$ can be efficiently computed using an algebraic circuit. They prove that the circuit size for computing $f_E$ is bounded by $O(d^2S(f_{sym})+$\poly$(n,d))$ where $d$ is the degree of $f_E$, $S(f_{sym})$ denotes the circuit size of $f_{sym}$ and $n$ is the number of working variables. We extend their work for computing $f_H$ and $f_P$.  To prove that, we shall use Bl{\"a}ser-Jindal method as a black box. We use Bl{\"a}ser-Jindal technique and get the circuit for $f_E$. We denote this circuit by $\mathcal{C}_{f_E}$. $\mathcal{C}_{f_E}$ can be visualized as a circuit having elementary symmetric polynomials($e_i$'s) as its input.  But, from \autoref{e_k to h_k formula complexity}, we know $e_i$'s can be uniquely expressed in the form of $h_i$'s and $p_i$'s using a  formula of polynomial size. We denote the modified circuit as  $\mathcal{C}_{f_H}$ after transforming $e_i$'s to $h_i$'s in $\mathcal{C}_{f_E}$. The output of $\mathcal{C}_{f_H}$  is $f_H$ because $h_i$'s are algebraically independent and also satisfy fundamental theorem of symmetric polynomials. The proof for $f_P$ is similar. Thus $f_H$ and $f_P$ can still be computed using a circuit of size $O(d^2S(f_{sym})+$\poly$(n,d))$ as the transformation just adds some poly factor to the previously calculated size.
\end{proof}

Bl{\"a}ser-Jindal method works for the circuit and we do not know whether it works for ABPs or formula. If it works for ABP, then the basis transformation is trivial using a small determinant (i.e a small ABP).  If it works for formula, then \autoref{Thm:Blaser-Jindal other base} would be useful to extend the notion for other bases while still staying in the  formula regime.

\subsection{Partial derivatives of a product of algebraically independent polynomials}\label{sec:partial derivatives}
We digress a little in this section to discuss another application of the ideas used in the proof of \autoref{lem:key} to a question of Amir Shpilka on the partial derivative complexity of a product of algebraically independent polynomials. We start with the definition of the partial derivative complexity.

\begin{definition}
 The partial derivative complexity of a polynomial $P \in \F[\vx]$ is the dimension of the linear space of polynomials over $\F$ spanned by all the partial derivatives of $P$.
\end{definition}
 The following question was asked by Amir Shpilka and our techniques provide a partial answer. As far as we are aware, the general question remains open. 
\begin{question}[Shpilka]\label{q:amir}
Let $g_1, g_2, \ldots, g_k \in \F[\vx]$ be algebraically independent polynomials. Then, prove (or disprove) that the partial derivative complexity of the the product $\prod_{i = 1}^k g_i(\vx)$ is at least $\exp(\Omega(k))$.
\end{question}
A canonical example of polynomials satisfying the hypothesis is when $g_i(\vx) = x_i$. Thus, the product polynomial $\prod_{i = 1}^k g_i(\vx)$ is equal to the monomial $x_1\cdot x_2 \cdots x_k$, and indeed the partial derivative complexity of this monomial is at least $2^k$, since for every $S\subseteq [k]$, the monomial $\prod_{i \in S} x_i$ is a partial derivative and these monomials are all linearly independent over $\F$ .  Thus, in general, we cannot hope for a better lower bound on the partial derivative complexity of such polynomials. Using our techniques, we observe the following two statements which answer special cases of this question. 
\begin{theorem}\label{thm:amir 1}
Let  $g_1, g_2, \ldots, g_k \in \F[\vx]$ be algebraically independent polynomials which satisfy Property S. Then, the partial derivative complexity of $\prod_{i = 1}^k g_i(\vx)$ is at least $2^k$.  
\end{theorem}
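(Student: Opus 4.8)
The plan is to reduce, by an affine change of variables followed by passing to the lowest-degree homogeneous component, to the case of a product of linearly independent linear forms, whose partial derivative complexity is easily seen to be exactly $2^k$. Write $\mathrm{pd}(P)$ for the partial derivative complexity of $P\in\F[\vx]$.

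\textbf{Invariance facts.} I would first record three facts. (i) $\mathrm{pd}$ is invariant under invertible affine substitutions $\vx\mapsto A\vx+\vb$: by the chain rule the first-order partials of $P(A\vx+\vb)$ are $\F$-linear combinations of $\{(\partial_{x_i}P)(A\vx+\vb)\}_i$, and conversely using $A^{-1}$; induction on the order of the derivative extends this to the whole space of partials, so the two partial-derivative spaces are isomorphic. (This is exactly the kind of manipulation already used in \autoref{claim:invertible linear transformation}.) (ii) If $P_{\min}$ is the lowest-degree homogeneous component of $P$, then $\mathrm{pd}(P)\ge \mathrm{pd}(P_{\min})$. (iii) The nonzero partial derivatives of $z_1\cdots z_k$ are, up to scalars, exactly the $2^k$ squarefree monomials $\prod_{i\in S}z_i$, $S\subseteq[k]$, which are linearly independent, so $\mathrm{pd}(z_1\cdots z_k)=2^k$.

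\textbf{Proof of (ii) (the only step with content).} Let $m=\deg P_{\min}$, and filter the partial-derivative space $W$ of $P$ by the order of the derivative: $F_jW=\langle \partial^{\vu}P : |\vu|\le j\rangle$. For $|\vu|\le j$, the degree-$(m-j)$ homogeneous component of $\partial^{\vu}P=\sum_d \partial^{\vu}P_d$ is $\partial^{\vu}P_{m-j+|\vu|}$; this vanishes when $|\vu|<j$ (as $m-j+|\vu|<m$ and $P$ has no component below degree $m$) and equals $\partial^{\vu}P_{\min}$ when $|\vu|=j$. Hence the linear map ``take the degree-$(m-j)$ component'' kills $F_{j-1}W$ and maps $F_jW$ onto $\langle \partial^{\vu}P_{\min} : |\vu|=j\rangle$, so $\dim(F_jW/F_{j-1}W)\ge \dim\langle \partial^{\vu}P_{\min} : |\vu|=j\rangle$. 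Summing over $j=0,\dots,m$ and using that partials of $P_{\min}$ of distinct orders lie in distinct degrees (hence span internally direct pieces whose sum is all of $\partial^{\le\infty}(P_{\min})$) gives $\dim W\ge \mathrm{pd}(P_{\min})$.

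\textbf{Putting it together.} Since $\{g_1,\dots,g_k\}$ satisfies Property S, fix $\va$ with $g_i(\va)=0$ for all $i$ and with the linear forms $u_i(\vx):=\sum_j x_j\,\tfrac{\partial g_i}{\partial x_j}(\va)$ linearly independent (exactly as in the proof of \autoref{lem:key}). Put $P=\prod_{i=1}^k g_i$ and $g(\vz)=z_1\cdots z_k$, homogeneous of degree $k$, so $P=g(g_1,\dots,g_k)$. By (i), $\mathrm{pd}(P)=\mathrm{pd}(P(\va+\vx))$. Each $g_i(\va+\vx)$ has zero constant term and linear part $u_i\ne 0$, so $P(\va+\vx)=\prod_i g_i(\va+\vx)$ has lowest-degree homogeneous component of degree exactly $k$, and by \autoref{claim:using homogeneity of g} (applied to this $g$) that component equals $\prod_{i=1}^k u_i(\vx)$. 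By (ii), $\mathrm{pd}(P(\va+\vx))\ge \mathrm{pd}\!\left(\prod_i u_i\right)$. Extending $u_1,\dots,u_k$ to an invertible linear map of $\F^n$ and applying (i) again, $\mathrm{pd}\!\left(\prod_i u_i\right)=\mathrm{pd}(z_1\cdots z_k)=2^k$ by (iii). Chaining the inequalities yields $\mathrm{pd}(P)\ge 2^k$. The main obstacle is fact (ii); the only subtlety in the rest is ensuring that the degree-$k$ component produced by \autoref{claim:using homogeneity of g} is genuinely the \emph{lowest}-degree one, which is where $g_i(\va)=0$ and $u_i\ne 0$ (both from Property S) are used.
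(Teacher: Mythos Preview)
Your proof is correct and follows essentially the same route as the paper: invariance of partial derivative complexity under affine changes of variables, reduction to the lowest-degree homogeneous component via \autoref{claim:using homogeneity of g}, and the exact count $2^k$ for a product of independent linear forms---these are precisely items (1)--(3) of \autoref{lem: partial derivative props} combined with \autoref{obs:good shift}. Your filtration argument for fact~(ii) is a cleaner and more explicit version of the paper's sketch of the same step.
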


\begin{theorem}\label{thm:amir 2}
Let  $g_1, g_2, \ldots, g_k \in \F[\vx]$ be algebraically independent polynomials. Then, there are field constants $a_1, a_2, \ldots, a_n$ such that the partial derivative complexity of $\prod_{i = 1}^k \left( g_i(\vx) + a_i\right)$ is at least $2^k$.  

In fact, the lower bound holds for almost all choices of $a_1, a_2, \ldots, a_k$. 
\end{theorem}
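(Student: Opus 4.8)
The plan is to reduce \autoref{thm:amir 2} to \autoref{thm:amir 1} by choosing the shifts $a_1,\ldots,a_k$ so that the translated family $\{g_1+a_1,\ldots,g_k+a_k\}$ satisfies Property S. The key observation is that translating a polynomial by a field constant does not change any of its partial derivatives, so the Jacobian matrix of $(g_1+a_1,\ldots,g_k+a_k)$ is literally the same matrix as that of $(g_1,\ldots,g_k)$, whatever the $a_i$ are. Since $g_1,\ldots,g_k$ are algebraically independent, \autoref{thm:jac and alg ind} says this Jacobian has rank $k$ over $\F(\vx)$; fix a $k\times k$ submatrix whose determinant $M(\vx)\in\F[\vx]$ is not identically zero.

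For the existence part, it then suffices to produce a point $\vb\in\F^n$ with $M(\vb)\neq 0$, which exists by \autoref{lem:polynomial identity lemma} (evaluate over a large enough grid). Setting $a_i:=-g_i(\vb)$ for each $i\in[k]$ makes $g_i(\vb)+a_i=0$ for all $i$, while the Jacobian of $(g_1+a_1,\ldots,g_k+a_k)$ at $\vb$ equals the Jacobian of $(g_1,\ldots,g_k)$ at $\vb$, which is full rank $k$ precisely because $M(\vb)\neq 0$. Hence $\{g_1+a_1,\ldots,g_k+a_k\}$ is a family of algebraically independent polynomials satisfying Property S in the sense of \autoref{def:prop s}, and applying \autoref{thm:amir 1} to this family yields that the partial derivative complexity of $\prod_{i=1}^k(g_i+a_i)$ is at least $2^k$.

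For the \emph{almost all} strengthening, I would show that the set of good shift vectors is Zariski-dense in $\F^k$, hence of full Lebesgue measure. Consider the polynomial map $\phi:\F^n\to\F^k$ given by $\phi(\vx)=(g_1(\vx),\ldots,g_k(\vx))$ and the nonempty Zariski-open (hence dense) set $V=\{\vx\in\F^n:M(\vx)\neq 0\}$. The image $\phi(V)$ is Zariski-dense in $\F^k$: otherwise some nonzero $h\in\F[y_1,\ldots,y_k]$ vanishes on $\phi(V)$, and since $V$ is dense the polynomial $h(g_1,\ldots,g_k)$ vanishes on a dense set, hence identically, contradicting algebraic independence. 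As $\phi(V)$ is a constructible set dense in the irreducible variety $\F^k$, it contains a nonempty Zariski-open subset $U$. For every shift vector $\va=(a_1,\ldots,a_k)$ with $-\va\in U$ there is a point $\vb$ with $\phi(\vb)=-\va$ and $M(\vb)\neq 0$, so the argument above applies verbatim and the partial derivative complexity of $\prod_{i=1}^k(g_i+a_i)$ is at least $2^k$. Since a nonempty Zariski-open subset of $\F^k$ has full measure, this holds for almost all $\va$.

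The step I expect to need the most care is the last one: passing from ``the set of good shifts is nonempty'' to ``it is generic.'' The one nontrivial ingredient there is that a Zariski-dense constructible subset of affine space contains a nonempty Zariski-open set (a standard consequence of Chevalley's theorem); this is the only place we use more than elementary polynomial manipulation. Everything else --- the translation invariance of the Jacobian, the appeal to \autoref{lem:polynomial identity lemma}, and the black-box use of \autoref{thm:amir 1} on the shifted family --- is routine.
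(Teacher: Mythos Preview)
Your proof is correct and, for the existence part, follows exactly the paper's route: observe that additive shifts leave the Jacobian unchanged, pick a point $\vb$ where some $k\times k$ minor of $\cal J(g_1,\ldots,g_k)$ is nonzero (via \autoref{lem:polynomial identity lemma}), set $a_i=-g_i(\vb)$, and invoke \autoref{thm:amir 1}. This is precisely what the paper does, by analogy with the proof of \autoref{thm: bj shifted}.

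For the ``almost all'' clause your argument is actually more careful than the paper's. The paper merely points back to the remark after \autoref{thm: bj shifted}, but that remark relies on the fact that the map $\vb\mapsto(e_1(\vb),\ldots,e_n(\vb))$ is surjective (roots-to-coefficients), which is special to the elementary symmetric polynomials and does not hold for arbitrary algebraically independent $g_i$. Your route---show $\phi(V)$ is Zariski-dense in $\F^k$ using algebraic independence, then invoke Chevalley to extract a nonempty Zariski-open subset of good shifts---is the right general argument and fills a gap the paper leaves implicit. The cost is the appeal to Chevalley's theorem, which is heavier machinery than anything else in the paper; if you wanted to stay closer in spirit, you could instead argue directly that a dominant map $\F^n\to\F^k$ with generically full-rank Jacobian is locally surjective near a smooth point, but what you wrote is correct as stated.
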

The following observation essentially follows from the definition of Property S (\autoref{def:prop s}) and \autoref{thm:taylor}. The proof is also implicit in the proof of \autoref{lem:key}. 
\begin{observation}\label{obs:good shift}
Let $q_1(\vx), q_2(\vx), \ldots, q_k(\vx) \in \F[\vx]$ be algebraically independent polynomials which satisfy Property S. Then, there is an $\va \in \F^n$ such that the degree zero homogeneous component of the the polynomials  $q_1(\vx + \va), q_2(\vx + \va), \ldots, q_k(\vx + \va)$ are all zero, and their homogeneous components of degree one are all linearly independent. 
 \end{observation}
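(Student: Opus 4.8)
The plan is to directly unwind the definition of Property S together with the first-order Taylor expansion of \autoref{cor:taylor truncation}; indeed, this argument is already implicit in the proof of \autoref{lem:key}, so the work is mostly bookkeeping.

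First I would invoke Property S to fix a witnessing point $\va \in \F^n$: by \autoref{def:prop s} this $\va$ satisfies $q_i(\va) = 0$ for every $i \in [k]$, and $\rank_{\F}\left({\cal J}(q_1, \ldots, q_k)(\va)\right) = \rank_{\F(\vx)}\left({\cal J}(q_1, \ldots, q_k)\right)$. Since $q_1,\ldots,q_k$ are algebraically independent, \autoref{thm:jac and alg ind} tells us that the symbolic rank on the right is exactly $k$, and hence the evaluated Jacobian ${\cal J}(q_1,\ldots,q_k)(\va)$ has rank $k$ over $\F$; equivalently, its $k$ rows $\left(\frac{\partial q_i}{\partial x_1}(\va), \ldots, \frac{\partial q_i}{\partial x_n}(\va)\right)$, for $i \in [k]$, are $\F$-linearly independent.

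Next I would apply \autoref{cor:taylor truncation} to each $q_i$ expanded about $\va$, obtaining
\[
q_i(\va + \vx) = q_i(\va) + \sum_{j=1}^n x_j \cdot \frac{\partial q_i}{\partial x_j}(\va) \pmod{\langle \vx \rangle^2}.
\]
Reading off homogeneous components in the new variables $\vx$: the degree-zero component of $q_i(\vx + \va)$ is the constant term $q_i(\va)$, which vanishes by the first bullet of Property S; and its degree-one component is the linear form $u_i(\vx) := \sum_{j=1}^n x_j \cdot \frac{\partial q_i}{\partial x_j}(\va)$, whose coefficient vector is precisely the $i$-th row of ${\cal J}(q_1,\ldots,q_k)(\va)$. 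By the previous paragraph these coefficient vectors are $\F$-linearly independent, so the linear forms $u_1,\ldots,u_k$ are linearly independent, which is exactly the claimed conclusion with $\va$ the desired shift.

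I do not expect any real obstacle here: the statement is a direct repackaging of ingredients already assembled for \autoref{lem:key}. The only two points that require a moment of care are (i) that ``degree-zero/degree-one homogeneous component of $q_i(\vx+\va)$'' refers to the expansion in the \emph{new} variables $\vx$, which is exactly the quantity described by \autoref{cor:taylor truncation}; and (ii) that it is algebraic independence (via \autoref{thm:jac and alg ind}) that upgrades the equality ``evaluated rank $=$ symbolic rank'' in Property S to the stronger ``evaluated rank $= k$'', which is the form actually used to conclude linear independence of the $u_i$.
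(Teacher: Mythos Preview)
Your proposal is correct and follows exactly the approach the paper indicates: the paper does not give a separate proof but states that the observation follows from \autoref{def:prop s} and \autoref{thm:taylor}, with the argument implicit in the proof of \autoref{lem:key}, and your write-up is precisely that argument spelled out.
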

\autoref{thm:amir 1} and \autoref{thm:amir 2} are essentially immediate consequences of \autoref{obs:good shift} and some standard properties of partial derivatives, which we now discuss.  
 \begin{lemma}\label{lem: partial derivative props}
Let $P \in \F[\vx]$ be a polynomial. Then, 
\begin{enumerate}
\item For every $\va \in \F^n$, the partial derivative complexity of $P$ is equal to the partial derivative complexity of $P(\vx + \va)$. More generally, the partial derivative complexity is invariant under invertible linear transformation of variables. 
\item Let $i$ be the degree of the lowest degree homogeneous component of $P$ which is non-zero and let $P_i$ denote this homogeneous component. Then, the partial derivative complexity of $P$ is at least as large as the partial derivative complexity of $P_i$. 
\item The partial derivative complexity of a product of $k$ linearly independent homogeneous linear forms is at least $2^k$. 
\end{enumerate} 
\end{lemma}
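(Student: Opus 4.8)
The plan is to establish the three parts in order, each by elementary manipulations; throughout, write $D(P)$ for the $\F$-span of all partial derivatives of $P$, so that the partial derivative complexity of $P$ is $\dim_{\F} D(P)$. For part (1), I would use the substitution maps on $\F[\vx]$: $\tau_{\va}\colon f \mapsto f(\vx+\va)$, and, for $A \in \mathrm{GL}_n(\F)$, $\psi_A\colon f \mapsto f(A\vx)$, both of which are linear automorphisms of $\F[\vx]$. Since $\partial_{\vx^{\vu}}(f(\vx+\va)) = (\partial_{\vx^{\vu}}f)(\vx+\va)$ for every $\vu$, the set of partial derivatives of $P(\vx+\va)$ is exactly the image under $\tau_{\va}$ of the set of partial derivatives of $P$, so $D(P(\vx+\va)) = \tau_{\va}(D(P))$ and the dimensions agree. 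For a general invertible linear change of variables, the chain rule $\partial_{x_i}(f(A\vx)) = \sum_j A_{ji}(\partial_{x_j}f)(A\vx)$ shows by induction on the order of differentiation that $D(P(A\vx)) \subseteq \psi_A(D(P))$; applying the same inclusion with $A^{-1}$ in place of $A$ and $P(A\vx)$ in place of $P$ gives the reverse containment, hence $D(P(A\vx)) = \psi_A(D(P))$ and equality of dimensions. Composing translations with linear maps handles arbitrary affine substitutions.

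For part (2), write $P = P_i + P_{i+1} + \cdots + P_d$ with $P_i \neq 0$ its lowest nonzero homogeneous component. The one fact that needs checking is that for every $\vu$ with $|\vu|_1 \le i$, the degree-$(i-|\vu|_1)$ homogeneous component of $\partial_{\vx^{\vu}}P$ equals $\partial_{\vx^{\vu}}P_i$, which holds because each $\partial_{\vx^{\vu}}P_j$ with $j>i$ is homogeneous of degree $j-|\vu|_1 > i-|\vu|_1$. Now $D(P_i)$ is a graded subspace (it is spanned by the homogeneous polynomials $\partial_{\vx^{\vu}}P_i$), so it admits a homogeneous basis $w_1,\dots,w_m$; writing each $w_r = \sum_{\vu}c^{(r)}_{\vu}\,\partial_{\vx^{\vu}}P_i$ as a combination of derivatives of $P_i$ of the single order $i - \deg(w_r)$, I would set $\widetilde w_r := \sum_{\vu}c^{(r)}_{\vu}\,\partial_{\vx^{\vu}}P \in D(P)$, so that by the fact above $\widetilde w_r$ has $w_r$ as its lowest-degree homogeneous component. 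A leading-term argument then shows $\widetilde w_1,\dots,\widetilde w_m$ are linearly independent: given a relation $\sum_r \alpha_r \widetilde w_r = 0$ with not all $\alpha_r$ zero, extracting the homogeneous component of degree $e^{\ast} := \min\{\deg(w_r) : \alpha_r \neq 0\}$ yields $\sum_{r:\deg(w_r)=e^{\ast}}\alpha_r w_r = 0$, contradicting the linear independence of the $w_r$. Hence $\dim D(P) \ge m = \dim D(P_i)$.

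For part (3), let $L_1,\dots,L_k$ be linearly independent homogeneous linear forms and extend $\{L_1,\dots,L_k\}$ to a basis of the space of all linear forms, which amounts to choosing $A \in \mathrm{GL}_n(\F)$ with $L_i(\vx) = (A\vx)_i$ for $i \in [k]$. Then $\prod_{i=1}^{k}L_i = \widetilde P(A\vx)$ where $\widetilde P(\vy) = y_1 y_2 \cdots y_k$, so by part (1) it suffices to compute the partial derivative complexity of the multilinear monomial $y_1\cdots y_k$. Its nonzero partial derivatives are precisely the monomials $\prod_{i\in S}y_i$ for $S \subseteq [k]$ (any derivative involving some $y_i$ to the second power, or any variable outside $\{y_1,\dots,y_k\}$, vanishes), and these $2^k$ distinct monomials are linearly independent, so the partial derivative complexity equals $2^k$.

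The only step that requires genuine care is the linear independence of the lifts $\widetilde w_r$ in part (2); once the ``lowest homogeneous component'' fact is in place the leading-term argument is routine, and the remainder of parts (1) and (3) is bookkeeping with the chain rule and with homogeneous components.
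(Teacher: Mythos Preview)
Your proof is correct and follows essentially the same route as the paper: for part (1) you use that the substitution maps $\tau_{\va}$ and $\psi_A$ are linear automorphisms carrying $D(P)$ to $D(P(\vx+\va))$ and $D(P(A\vx))$ respectively, for part (2) you lift a (homogeneous) basis of $D(P_i)$ to $D(P)$ and argue linear independence via lowest homogeneous components, and for part (3) you reduce via part (1) to the monomial $y_1\cdots y_k$. The only minor difference is that the paper chooses its basis of $D(P_i)$ directly from single monomial derivatives $\partial_{\vx^{\vb}}P_i$, whereas you allow arbitrary homogeneous combinations --- but this is an immaterial variation of the same leading-term argument.
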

We briefly sketch the main ideas in the proof. 
\begin{proof}[Proof Sketch]
For the first item, we prove the statement for first order partial derivatives. The argument easily extends to higher order derivatives as well. By the chain rule, $\frac{\partial P(\vx + \va)}{\partial x_i}$ equals $\frac{\partial P(\vx + \va)}{\partial (x_i + a_i)} \cdot \frac{\partial (x_i + a_i)}{\partial x_i}$. Observe that this is equal to the polynomial obtained by taking the partial derivative $\frac{\partial P(\vx)}{\partial x_i}$ of the original polynomial and then shifting the variables by $\va$, i.e. replacing $x_j$ by $x_j + a_j$ for every $j$. Thus, the linear space spanned by the first order partial derivatives of $P(\vx + \va)$ is equal to the linear space obtained by taking the space of first order partial derivatives of $P(\vx)$ and shifting the variables by $\va$. It is not hard to see that this preserves the dimension of the space. The proof of the \emph{moreover} part needs a bit more care, but follows similarly. 

For the second item, observe that for any set of polynomials $\{Q_1, Q_2, \ldots, Q_t\}$, the dimension of the linear span of $\{Q_1, Q_2, \ldots, Q_t\}$ is at least as large as the dimension of the linear span of the lowest degree non-zero homogeneous components of $Q_1, Q_2, \ldots, Q_t$. Also, observe that for any monomial $\vx^{\vb}$, if the partial derivative $\frac{\partial P_i}{\partial \vx^{\vb}}$ is non-zero, then the lowest degree non-zero homogeneous component of $\frac{\partial P}{\partial \vx^{\vb}}$ equals $\frac{\partial P_i}{\partial \vx^{\vb}}$. Now, let $S$ be the set of monomials such that the space of partial derivatives of $P_i$ with respect to monomials in $S$ is a basis for the linear space of all partial derivatives of $P_i$. From the two earlier observations in this paragraph, it follows that the derivatives of $P$ with respect to monomials in the set $S$ are all linearly independent, thereby implying the desired lower bound. 

The third item is an immediate consequence of the observation that the partial derivative complexity of the monomial $\prod_{i = 1}^k x_i$ is equal to $2^k$ and the ``moreover'' part of the first item of this lemma, which says that partial derivative complexity is invariant under invertible linear transformations. 
\end{proof}

We now sketch the main ideas in the proof of \autoref{thm:amir 1}. 
\begin{proof}[Proof of \autoref{thm:amir 1}]
The goal is to prove a lower bound on the partial derivative complexity of the polynomial $\prod_{i = 1}^k q_i(\vx)$. Let $\va \in \F^n$ be the point guaranteed by \autoref{obs:good shift}. Thus,  $q_1(\vx + \va), q_2(\vx + \va), \ldots, q_k(\vx + \va)$ are all zero, and their homogeneous components of degree one are all linearly independent. 
From the first item of \autoref{lem: partial derivative props}, we know that it suffices to prove a lower bound on the partial derivative complexity of $\prod_{i = 1}^k q_i(\vx + \va)$. 

The claim now is that the lowest degree homogeneous component of $\prod_{i = 1}^k q_i(\vx + \va)$ of which is non-zero is the homogeneous component of degree equal to $k$, and this is equal to the product of the homogeneous components of degree one of $q_1(\vx + \va), q_2(\vx + \va), \ldots, q_k(\vx + \va)$. This immediately follows from  \autoref{claim:using homogeneity of g}  in the proof of \autoref{lem:key}.  But once we have this claim, the theorem follows from the third item of \autoref{lem: partial derivative props}. We skip rest of the details. 
\end{proof}
\autoref{thm:amir 2} follows from observing that  we can pick $a_1, a_2, \ldots, a_n$ so that $q_1 + a_1, q_2 + a_2, \ldots, q_k + a_k$ satisfy  Property S. This follows from a similar observation in the proof of \autoref{thm: bj shifted}. Once we have this observation, we are back in the setting of \autoref{thm:amir 1}. 

Before we conclude this section, we note that in order to generalize \autoref{thm:amir 1} and \autoref{thm:amir 2} to completely answer \autoref{q:amir} in affirmative, it would suffice to prove the following conjecture. 
\begin{conjecture}
For all constants $\alpha_1, \alpha_2, \ldots, \alpha_k \in \F$ and linearly independent homogeneous linear forms $\ell_1, \ell_2, \ldots, \ell_k$ the following is true : if $q_1, q_2, \ldots, q_k$ are polynomials such that their minimum degree non-zero monomial has degree at least two, then the partial derivative complexity of the polynomial $\prod_{i = 1}^k (\alpha_i + \ell_i + q_i)$ is at least $2^k$.
\end{conjecture}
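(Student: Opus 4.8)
The plan is to normalize, dispose of the easy case, and then attack the genuinely hard case by one of two routes, neither of which I can currently complete.

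\emph{Normalizations and the easy case.} Partial derivative complexity is invariant under invertible linear substitutions (\autoref{lem: partial derivative props}(1)), so after completing $\ell_1,\dots,\ell_k$ to a basis I may assume $\ell_i=x_i$, and rescaling each $x_i$ lets me assume $\alpha_i\in\{0,1\}$. Write $T_i:=\alpha_i+x_i+q_i$, $A:=\{i:\alpha_i=0\}$, $B:=\{i:\alpha_i=1\}$. If $B=\emptyset$ then $T_i=x_i+q_i$ has lowest-degree homogeneous component $x_i$, so the lowest-degree component of $P=\prod_iT_i$ is $\prod_ix_i$, a monomial in $k$ distinct variables of partial derivative complexity $2^k$, and \autoref{lem: partial derivative props}(2) finishes it. (This is the argument behind \autoref{thm:amir 1}, and as in \autoref{thm:amir 2} a generic additive shift reduces to this situation \emph{whenever such a shift exists}; the whole content of the conjecture is that one must do without it, since in general one cannot shift away all the constants — e.g. $T_1=\alpha_1+x_1+x_1x_2^2$, $T_2=i+x_2$ have no common zero.) Henceforth assume $B\neq\emptyset$. (One may also first specialize any variables beyond $x_1,\dots,x_k$ to generic constants, which keeps the $\ell_i$ independent and can only decrease partial derivative complexity, thereby reducing to the square case $n=k$; but this seems to make the approach below harder, not easier, so I would avoid it.)

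\emph{The clean case, and the reduction I would aim for.} Suppose, momentarily, that each $q_i$ were a polynomial in the variables \emph{outside} $\{x_1,\dots,x_k\}$. Then $\partial_{x_i}T_i=1$ and $\partial_{x_i}T_j=0$ for $j\neq i\le k$, so $\partial_SP=\prod_{j\notin S}T_j$ for every $S\subseteq[k]$. Since $T_1,\dots,T_k$ are algebraically independent (their linear parts $x_j$ are), the substitution $y_j\mapsto T_j$ is injective on $\F[y_1,\dots,y_k]$, so the $2^k$ polynomials $\{\prod_{j\in S'}T_j\}_{S'\subseteq[k]}$ are $\F$-linearly independent; as they are all partial derivatives of $P$, the bound $\ge 2^k$ follows. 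So the plan is to reduce the general case to this ``clean'' case. The obvious move — introduce fresh variables $z_i$, replace the linear part $x_i$ of the $i$-th factor by $z_i$ to obtain a clean $\tilde P$ (hence with partial derivative complexity $\ge 2^k$), and recover $P$ via the substitution $z_i\mapsto x_i$ — does not work, because that substitution can only \emph{decrease} partial derivative complexity. Finding a valid reduction is, I expect, the crux of the problem.

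\emph{First route: via algebraic independence.} Algebraic independence of $T_1,\dots,T_k$ yields rational vector fields $D_1,\dots,D_k$ on $\F(\vx)$ (built from a rational left inverse of the Jacobian matrix $(\partial T_m/\partial x_j)$) with $D_iT_j=\delta_{ij}$, hence $D_SP=\prod_{j\notin S}T_j$; thus the $2^k$ independent polynomials $\prod_{j\in S'}T_j$ lie in the $\F(\vx)$-span of the iterated $x$-partials of $P$. The clean case is exactly the case where the $D_i$ can be taken with \emph{constant} coefficients; in general they cannot, and one has to transfer the statement from the $\F(\vx)$-module world to the $\F$-linear world in which partial derivative complexity lives — e.g. by a ``clear denominators, then pass to leading forms'' argument. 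I believe this descent is the main obstacle: every $\mathbb{G}_m$-degeneration of $P$ (initial form with respect to a weight vector) collapses each unit factor $T_i$, $i\in B$, to the constant $1$, and so recovers only the weaker bound $2^{|A|}$; there is no single degeneration that simultaneously ``sees'' all $k$ factors.

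\emph{Second route: via semicontinuity.} Partial derivative complexity is the rank of a matrix whose entries are polynomials in the coefficients of the $q_i$, hence is lower semicontinuous in those coefficients. The specialization $q=0$ gives $\prod_i(\alpha_i+x_i)$, of partial derivative complexity exactly $2^k$, so it would suffice to show that $q=0$ \emph{minimizes} partial derivative complexity over all admissible $q$ — for instance by exhibiting a nonzero $2^k\times 2^k$ minor of that coefficient matrix uniformly in $q$, or by showing the rank is constant on a Zariski-neighbourhood of $q=0$. This is the direction I would push on hardest; I do not see how to carry it out, which is consistent with the fact that the statement is posed only as a conjecture.
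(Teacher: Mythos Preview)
This statement is posed in the paper as a \emph{conjecture} and is explicitly left open; the paper offers no proof. What the paper does prove are the two special cases \autoref{thm:amir 1} (Property~S holds) and \autoref{thm:amir 2} (after a generic additive shift), and your ``easy case'' $B=\emptyset$ together with your remarks on generic shifts recover exactly these, by the same mechanism: pass to the lowest-degree homogeneous component and invoke \autoref{lem: partial derivative props}. Your normalizations and the ``clean case'' argument are correct as stated.

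Your diagnosis of the obstruction is also on the mark and is precisely why the paper stops where it does: when some $\alpha_i\neq 0$ one cannot in general translate to a common zero of the $T_i$ (your example with $T_2=i+x_2$ is valid), and every single weight-degeneration collapses the unit factors, so the lowest-component trick only yields $2^{|A|}$. Neither of your two proposed routes is completed, and you say so; that is exactly the state of affairs in the paper. One small correction on the second route: lower semicontinuity of rank only tells you that the \emph{generic} $q$ has partial derivative complexity $\ge 2^k$ (since it is $\ge$ the value at the special point $q=0$); to get the conjecture for \emph{all} $q$ you genuinely need the uniform statement you mention (a fixed $2^k\times 2^k$ minor that is nonzero as a polynomial in the coefficients of $q$), and ``rank constant on a Zariski-neighbourhood of $q=0$'' would not by itself suffice for non-generic $q$.
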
 
If the conjecture is false, a counterexample to the conjecture may be instructive towards understanding how the partial derivative complexity behaves over taking a product of polynomials. 

\section{Open problems}\label{sec:open q}
We conclude with some open problems. 
\begin{itemize}
\item Perhaps the most natural question would be to characterize the formula complexity of \emph{all} Schur polynomials. As discussed in~\autoref{rem:easy-hard-lambda}, there exist partitions $\lambda$ for which the corresponding $s_\lambda$ have small (polynomial sized) algebraic formulas. On the other hand, as shown in this work, there exist families of $\lambda$s which do not have polynomial sized formulas unless the determinant does. Due to classical results such as~\cite{vsbr83}, we know that the latter class of $\lambda$s in fact have formulas of size $n^{O(\log n)}$. It would be of great interest to extend these two results and get a complete characterization of the formula complexity of $s_{\lambda}$, as a function of the partition $\lambda$. 

\item Bl{\"a}ser and Jindal gave a computationally efficient version of the fundamental theorem for symmetric polynomials. In particular, they showed that if $f_{sym}$ is a symmetric polynomial computed by a polynomial-sized algebraic circuit then the unique polynomial $f_E$ such that $f_{sym} = f_E(e_1, \ldots, e_n)$, also has a polynomial-sized algebraic circuit. 

A natural question one can ask is: if $f_{sym}$ has a polynomial-sized algebraic formula (or ABP) then does $f_E$ also have a polynomial-sized algebraic formula (ABP resp.)? In~\autoref{thm: bj shifted} we take a step towards proving this statement. We show that there exists $\va \in \F^n$ such that if $f_{sym}$ can be expressed as $f_E(e_1-a_1, \ldots, e_n-a_n)$ for a homogeneous $f_E$ then $f_E$ has a small algebraic formula if $f_{sym}$ does. To get the exact Bl{\"a}ser-Jindal-like statement in the formula setting, we would have to improve our result in two ways. We would have to prove it for general $f_E$ rather than for homogeneous $f_E$ and we would have to prove it for $\va=0^n$. We believe that both of these are interesting directions to pursue. 

\item Another interesting extension of the results here would be to  show that there are families of Generalized Vandermonde matrices such that circuit complexity of computing their permanent is essentially as large as the circuit complexity of the Permanent. This would be a VNP analogue of~\autoref{thm:main-intro}. 
\item Yet another interesting direction would be to extend \autoref{thm:amir 1} to answer \autoref{q:amir} completely. 
\end{itemize}

\section*{Acknowledgements}
Mrinal thanks Ramprasad Saptharishi, Amir Shpilka and Noam Solomon and Srikanth thanks Murali K. Srinivasan and Krishnan Sivasubramanian for many insightful discussions. The authors thank Amir for allowing us to include~\autoref{q:amir} and related discussion in this draft.

\bibliographystyle{alphaurlpp}
\bibliography{references}

\end{document}